\icmltitlerunning{The Power of Randomization: Efficient and Effective Algorithms for Constrained Submodular Maximization}
\newtheorem{theorem}{Theorem}
\newtheorem{lemma}{Lemma}
\newtheorem{definition}{Definition}
\newcommand{\alglinelabel}{%
  \addtocounter{ALC@line}{-1}
  \refstepcounter{ALC@line}
  \label
}
\newcommand{\N}{\mathcal{N}}
\newcommand{\R}{\mathbb{R}}
\newcommand{\I}{\mathcal{I}}
\newcommand{\E}{\mathbb{E}}
\newcommand{\dom}{\mathrm{dom}}
\newcommand{\argmax}{\mathop{\arg\max}}
\newcommand{\avg}{\mathrm{avg}}
\newcommand{\opt}{\mathrm{opt}}
\newcommand{\abs}[1]{\lvert#1\rvert}
\newcommand{\A}{\mathcal{A}}
\newcommand{\RMG}{\textsc{RandomMultiGreedy}}
\newcommand{\ARG}{\textsc{AdaptRandomGreedy}}
\newcommand{\EV}{\mathcal{E}}
\newcommand{\BRG}{\textsc{BatchedRandomGreedy}}
\newcommand{\FSGS}{\textsc{FastSGS}}
\newcommand{\red}[1]{{\color{black}{#1}}}
\begin{document}

\onecolumn
    \icmltitle{The Power of Randomization: Efficient and Effective Algorithms\\ for Constrained Submodular Maximization}



    \icmlsetsymbol{equal}{*}

    \begin{icmlauthorlist}
        \icmlauthor{Kai Han}{ustc}
        \icmlauthor{Shuang Cui}{ustc}
        \icmlauthor{Tianshuai Zhu}{ustc}
        \icmlauthor{Jing Tang}{hkust}
        \icmlauthor{Benwei Wu}{ustc}
        \icmlauthor{He Huang}{suda}
    \end{icmlauthorlist}

    \icmlaffiliation{ustc}{School of Computer Science and Technology / SuZhou Research Institute, University of Science and Technology of China;}
    \icmlaffiliation{hkust}{Data Science and Analytics Thrust, The Hong Kong University of Science and Technology;}
    \icmlaffiliation{suda}{School of Computer Science and Technology, Soochow University}

    \icmlcorrespondingauthor{Kai Han}{hankai@ustc.edu.cn}

    \icmlkeywords{Machine Learning, ICML}

    \vskip 0.3in





\printAffiliationsAndNotice{}


\begin{abstract}
   Submodular optimization has numerous applications such as crowdsourcing and viral marketing. In this paper, we study the fundamental problem of non-negative submodular function maximization subject to a $k$-system constraint, which generalizes many other important constraints in submodular optimization such as cardinality constraint, matroid constraint, and $k$-extendible system constraint. The existing approaches for this problem achieve the best-known approximation ratio of $k+2\sqrt{k+2}+3$ (for a general submodular function) based on deterministic algorithmic frameworks. We propose several randomized algorithms that improve upon the state-of-the-art algorithms in terms of approximation ratio and time complexity, both under the non-adaptive setting and the adaptive setting. The empirical performance of our algorithms is extensively evaluated in several applications related to data mining and social computing, and the experimental results demonstrate the superiorities of our algorithms in terms of both utility and efficiency.
\end{abstract}

\section{Introduction}

Submodular optimization is an active research area in machine learning due to its wide applications such as crowdsourcing~\cite{singla2016noisy,HanTON2018}, clustering~\cite{gomes2010budgeted,HanPVLDB2019}, viral marketing~\cite{kempe2003maximizing,HanPVLDB2018}, and data summarization~\cite{badanidiyuru2014streaming,iyer2013submodular}. A lot of the existing studies in this area aim to maximize a submodular function subject to a specific constraint, and it is well known that these problems are generally NP-hard. Therefore, extensive approximation algorithms have been proposed, with the goal of achieving improved approximation ratios or lower time complexity.

Formally, given a ground set $\N$ with $|\N|=n$, a constrained submodular maximization problem can be written as:
\begin{eqnarray}
\max\{f(S): S\in \mathcal{I}\} \label{pb:basic}
\end{eqnarray}
where $f: 2^{\N}\mapsto \mathbb{R}_{\geq 0}$ is a submodular function satisfying $\forall X,Y\subseteq \N: f(X)+f(Y)\geq f(X\cup Y)+f(X\cap Y)$,
and $\mathcal{I}\subseteq 2^{\N}$ is the set of all feasible solutions. For example, if $\I=\{X: X\subseteq {\N}\wedge |X|\leq d\}$ for a given $d\in \mathbb{N}$, then $S\in \I$ represents a cardinality constraint. We also call $f(\cdot)$ ``monotone'' if it satisfies $\forall X\subseteq Y\subseteq \N: f(X)\leq f(Y)$, otherwise $f(\cdot)$ is called ``non-monotone''.

%

Although some application problems only have simple constraints like a cardinality constraint, many others have to be cast as submodular maximization problems with more complex ``independence system'' constraints such as matroid, $k$-matchoid, and $k$-system constraint. Among these constraints, the $k$-system constraint is the most general one, and a strict inclusion hierarchy of them is: cardinality~$\subset$~matroid~$\subset$~intersection of $k$ matroids~$\subset$~$k$-matchoid~$\subset$~$k$-extendible~$\subset$~$k$-system~\cite{Mestre2006}. 
Due to the generality of $k$-system constraint, it can be used to model a lot of constraints in various applications, such as graph matchings, spanning trees and scheduling~\cite{feldman2020simultaneous,mirzasoleiman2016fast}.

%
%
%

It is recognized that submodular maximization with a $k$-system constraint is one of the most fundamental problems in submodular optimization~\cite{calinescu2011maximizing,feldman2017greed,feldman2020simultaneous}, so a lot of efforts have been devoted to it since the 1970s, and the state-of-the-art approximation ratios are $k+1$~\cite{fisher1978analysis} and $k+2\sqrt{k+2}+3$~\cite{feldman2020simultaneous} for monotone $f(\cdot)$ and non-monotone $f(\cdot)$, respectively. \citet{feldman2020simultaneous} also showed that, by weakening their approximation ratio by a factor of $(1-2\epsilon)^{-2}$, their algorithm can be implemented under time complexity of $\mathcal{O}(\frac{kn}{\epsilon}\log(\frac{n}{\epsilon}))$. Surprisingly, all the existing algorithms with nice approximation ratios for this problem are intrinsically deterministic. Therefore, it is an interesting open problem whether the ``power of randomization'' can be leveraged to achieve better approximation ratios or better efficiency, as randomized algorithms are known to outperform the deterministic ones in many other problems.

It is noted that the utility function $f(\cdot)$ is assumed to be deterministic in Problem~\eqref{pb:basic}. However, in many applications such as viral marketing and sensor placement, the utility function could be stochastic and is only submodular in a probabilistic sense. To address these settings, \citet{golovin2011adaptive1} introduced the concept of adaptive submodular maximization, where each element $u\in \N$ is assumed to have a random state and the goal is to find an optimal \textit{adaptive policy} that can select a new element based on observing the realized states of already selected elements. Based on this concept, they also investigated the adaptive submodular maximization problem under a $k$-system constraint and provide an approximation ratio of $k+1$~\cite{golovin2011adaptive2}, but this ratio only holds when the utility function is \textit{adaptive monotone} (a property similar to the monotonicity property under the non-adaptive case). However, it still remains as an open problem whether provable approximation ratios can be achieved for this problem when the considered utility function is more general (i.e., not necessarily adaptive monotone).



In this paper, we provide confirmative answers to all the open problems mentioned above, by presenting novel randomized algorithms for the problem of (not necessarily monotone) submodular function maximization with a $k$-system constraint. Our algorithms advance the state-of-the-art under both the non-adaptive setting and the adaptive setting. More specifically, our contributions include:
\begin{itemize}
\item Under the non-adaptive setting, we present a randomized algorithm dubbed \textsc{RandomMultiGreedy} that achieves an approximation ratio of $(1+\sqrt{k})^2$ under time complexity of $\mathcal{O}(nr)$, where $r$ is the rank of the considered $k$-system. We also show that \textsc{RandomMultiGreedy} can be accelerated to achieve an approximation ratio of  $(1+\epsilon)(1+\sqrt{k})^2$ under nearly-linear time complexity of $\mathcal{O}(\frac{n}{\epsilon}\log \frac{r}{\epsilon})$. Therefore, our algorithm outperforms the state-of-the-art algorithm in~\cite{feldman2020simultaneous} in terms of both approximation ratio and time complexity. Furthermore, we show that \textsc{RandomMultiGreedy} can also be implemented as a deterministic algorithm with better performance bounds than the existing algorithms.

\item Under the non-adaptive setting, we also propose a randomized algorithm dubbed \BRG~which achieves a slightly worse approximation ratio of $(1+\epsilon)^2(1+\sqrt{k+1})^2$ but can be implemented in poly-logarithmic ``adaptive rounds''. This result greatly improves upon the best-known approximation ratio of $\frac{1+\epsilon}{(1-\epsilon)^2}(k+2\sqrt{2(k+1)}+5)$ achieved by the state-of-the-art algorithm with poly-logarithmic adaptivity proposed in~\cite{Francesco2021}.


\item Under the adaptive setting, we provide a randomized policy dubbed \textsc{AdaptRandomGreedy} that achieves an approximation ratio of $(1+\sqrt{k+1})^2$ when the utility function is not necessarily adaptive monotone. To the best of our knowledge, \textsc{AdaptRandomGreedy} is the first adaptive algorithm to achieve a provable performance ratio under this case. 

\item We test the empirical performance of the proposed algoirthms in several applications including movie recommendation, image summarization and social advertising with multiple products. The extensive experimental results demonstrate that, \textsc{RandomMultiGreedy} achieves approximately the same performance as the best existing algorithm in terms of utility, while its performance on efficiency is much better than that of the fastest known algorithm; besides, \ARG~can achieve better utility than the non-adaptive algorithms by leveraging adaptivity.
\end{itemize}

\vspace{-1.5ex}
For the fluency of description, the proofs of all our lemmas/theorems are deferred to the supplementary file.
\vspace{-1.5ex}

\section{Related Work} \label{sec:rw}


There are extensive studies on submodular maximization such as \cite{chekuri2019parallelizing, balkanski2019optimal,lee2010maximizing,Han2021,kuhnle2019interlaced}. For example, \citet{kuhnle2019interlaced} addressed a simple cardinality constraint using a nice ``interlaced greedy'' algorithm, where two candidate solutions are considered in a compulsory round-robin way; but it is unclear whether this algorithm can handle more complex constraints. In the sequel, we only review the studies most closely related to our work.



\textbf{Non-Adaptive Algorithms:} We first review the existing algorithms for non-monotone submodular maximization subject to a $k$-system constraint under the non-adaptive setting. The seminal work of~\cite{gupta2010constrained} proposed a \textsc{RepeatedGreedy} algorithm described as follows. At first, a series of candidate solutions $S_1,S_2,\cdots, S_{\ell}$ are sequentially found, where the elements of $S_j$ are greedily selected from $\N\setminus (\cup_{1\leq i\leq j-1}S_i)$ for all $j\in [\ell]$. After that, an Unconstrained Submodular Maximization (USM) algorithm (e.g., ~\cite{buchbinder2015tight}) is called to find $S_j'\subseteq S_j$ for all $j\in [\ell]$. Finally, the set in $\{S_j,S_j': j\in [\ell]\}$ with the maximum utility is returned. Note that the USM algorithm is only used as a ``black-box'' oracle and can be any deterministic/randomized algorithm, so this algorithmic framework is intrinsically deterministic. \citet{gupta2010constrained} showed that, by setting $\ell=k+1$, \textsc{RepeatedGreedy} can achieve an approximation ratio of $3k+6+{3}{k^{-1}}$ under $\mathcal{O}(nrk)$ time complexity. However, through a more careful analysis, \citet{mirzasoleiman2016fast} proved that \textsc{RepeatedGreedy} actually has an approximation ratio of $2k+3+k^{-1}$. Subsequently, \citet{feldman2017greed} further revealed that \textsc{RepeatedGreedy} can achieve an approximation ratio of $k+2\sqrt{k}+3+\frac{6}{\sqrt{k}}$ under $\mathcal{O}(nr\sqrt k)$ time complexity by setting $\ell=\lceil \sqrt{k}\rceil$.



\citet{han2020deterministic} proposed a different ``simultaneous greedy search'' framework, where two disjoint candidate solutions $S_1$ and $S_2$ are maintained simultaneously, and the algorithm always greedily selects a pair $(e,S_i)$ such that adding $e$ into $S_i$ brings the maximum marginal gain. By incorporating a ``thresholding'' method akin to that in~\cite{badanidiyuru2014fast}, \citet{han2020deterministic} proved that their algorithm achieves $(2k+2+\epsilon)$-approximation under $\mathcal{O}(\frac{n}{\epsilon}\log \frac{r}{\epsilon})$ time complexity. \citet{feldman2020simultaneous} also proposed an elegant algorithm where $\lfloor 2+\sqrt{k+2}\rfloor$ disjoint candidate solutions are maintained. By leveraging a thresholding method similar to~\cite{badanidiyuru2014fast,han2020deterministic}, \citet{feldman2020simultaneous} proved that their algorithm can achieve $(1-2\epsilon)^{-2}(k+2\sqrt{k+2}+3)$-approximation under $\mathcal{O}(\frac{kn}{\epsilon}\log \frac{n}{\epsilon})$ time complexity. On the hardness side, \citet{feldman2017greed} proved that no algorithm making polynomially many queries to the value and independence oracles can achieve an approximation better than $k+0.5-\epsilon$. For clarity, we list the performance bounds of the closely related algorithms mentioned above in Table~\ref{tab:overview}~\footnote{For simplicity, we only list the performance bounds of the accelerated versions of \citet{feldman2020simultaneous}'s algorithm and \textsc{RandomMultiGreedy} in Table~\ref{tab:overview}. Without acceleration, \citet{feldman2020simultaneous} can achieve an approximation ratio of $(1+\sqrt{k+2})^2$ under $\mathcal{O}({k}nr)$ time complexity, while \textsc{RandomMultiGreedy} achieves an approximation ratio of $(1+\sqrt{k})^2$ under $\mathcal{O}(nr)$ time complexity.}.

Recently, \citet{BalkanskiS18} proposed the concept of ``adaptivity'' for submodular optimization algorithms: an algorithm has $T$ adaptivity if it can be implemented in $T$ ``adaptive rounds'', where the algorithm is allowed to make polynomial number of independent queries to the function value of $f(\cdot)$ in each adaptive round. Based on this concept, a lot of studies (e.g.,  \cite{balkanski2019optimal,chekuri2019parallelizing,Matthew2019,Francesco2021}) have proposed algorithms with low-adaptivity under various constraints. Among these studies, \cite{Francesco2021} achieved the best performance bounds under a $k$-system constraint. More specifically, the \textsc{REP-SAMPLING} algorithm in \cite{Francesco2021} achieves an approximation ratio of $\frac{1+\epsilon}{(1-\epsilon)^2}(k+2\sqrt{2(k+1)}+5)$ with $\mathcal{O}(\frac{\sqrt{k}}{\epsilon^2}\log \frac{r}{k\epsilon}\log {n}\log r)$ adaptivity, while incurring at most $\mathcal{O}(\frac{\sqrt{k}n}{\epsilon^2}\log \frac{r}{k\epsilon}\log {n}\log r)$ queries to the function value of $f(\cdot)$ in expectation. Compared to \textsc{REP-SAMPLING}, our \BRG~algorithm achieves a much better approximation ratio of $(1+\epsilon)^2(1+\sqrt{k+1})^2$, under nearly the same adaptivity of $\mathcal{O}(\frac{\sqrt{k}}{\epsilon^2}\log \frac{r}{\epsilon}\log {n}\log r)$ and query complexity of $\mathcal{O}(\frac{\sqrt{k}n}{\epsilon^2}\log \frac{r}{\epsilon}\log {n}\log r)$.


\begin{table*}[t]
 \vspace{-2ex}
 \caption{Approximation for submodular function maximization with a $k$-system constraint}
  \vspace{-1ex}
 \label{tab:overview}
 \vskip 0.15in
 \begin{center}
  \begin{small}
   \begin{tabular}{lcccr}
    \toprule
    Algorithms & Source & Ratio & Time Complexity & Adaptive? \\
    \midrule
    \textsc{RepeatedGreedy}& \cite{gupta2010constrained}& $3k+6+{3}{k^{-1}}$& $\mathcal{O}(nrk)$& $\times$\\
    \textsc{RepeatedGreedy}& \cite{mirzasoleiman2016fast}& $2k+3+k^{-1}$& $\mathcal{O}(nrk)$& $\times$\\
    \textsc{TwinGreedyFast}& \cite{han2020deterministic}& $2k+2+\epsilon$& $\mathcal{O}(\frac{n}{\epsilon}\log(\frac{r}{\epsilon}))$& $\times$\\
    \textsc{RepeatedGreedy}& \cite{feldman2017greed}& $k+2\sqrt{k}+3+\frac{6}{\sqrt{k}}$& $\mathcal{O}(nr\sqrt k)$& $\times$\\
    \textsc{FastSGS}& \cite{feldman2020simultaneous} & $(1-2\epsilon)^{-2}(k+2\sqrt{k+2}+3)$ & $\mathcal{O}(\frac{kn}{\epsilon}\log(\frac{n}{\epsilon}))$ & $\times$ \\
    \textsc{RandomMultiGreedy}& this work & $(1+\epsilon)(k+2\sqrt k+1)$& $\mathcal{O}(\frac{n}{\epsilon}\log(\frac{r}{\epsilon}))$& $\times$\\
    \textsc{AdaptRandomGreedy}& this work & $k+2\sqrt{k+1} +2$ & $\mathcal{O}(nr)$ & $\surd$ \\
    \bottomrule
   \end{tabular}
  \end{small}
 \end{center}
 \vskip -0.1in
\end{table*}


\textbf{Adaptive Algorithms:} We then provide a brief review on the related studies on adaptive submodular maximization. \citet{golovin2011adaptive1} initiated the study on adaptive submodular maximization and also provided several algorithms under cardinality or knapsack constraints. They also studied the more general $k$-system constraint in~\cite{golovin2011adaptive2} and provided a $(k+1)$-approximation. Recently, \citet{Esfandiari2021adaptivity} proposed adaptive submodular maximization algorithms with fewer adaptive rounds of observation. There also exist many other studies on adaptive optimization under various settings/constraints, such as~\cite{cuong2016adaptive,mitrovic2019adaptive,parthasarathy2020adaptive,fujii2019beyond,badanidiyuru2016locally}. However, all these studies assumed that the target function is monotone or adaptive monotone. For non-monotone objective functions, \citet{amanatidis2020fast} and \citet{gotovos2015non} have proposed adaptive submodular maximization algorithms with provable performance ratios, but only under simple cardinality and knapsack constraints. 


\section{Preliminaries and Notations} \label{sec:pandn}


It is well known that all the structures including matroid, $k$-matchoid, $k$-extendible system and $k$-set system are set systems obeying the ``down-closed'' property captured by the concept of an \textit{independence system}: 

\begin{definition}[independence system]\label{definition:independence_system}
    Given a finite ground set $\N$ and a collection of sets $\I\subseteq 2^{\N}$, the pair $(\N,\I)$ is called an independence system if it satisfies: (1) $\emptyset\in\I$; (2) if $X\subseteq Y\subseteq\N$ and $Y\in\mathcal{I}$, then $X\in\mathcal{I}$.
\end{definition}

Given an independence system $(\N,\I)$ and any two sets $X\subseteq Y\subseteq \N$, $X$ is called a \textit{base} of $Y$ if $X\in \I$ and $X\cup \{u\}\notin \I$ for all $u\in Y\setminus X$. We also use $r$ to denote the \textit{rank} of $(\N,\I)$, i.e., $r=\max\{|X|: X\in \I\}$. A $k$-system is a special independence system defined as:
%

\begin{definition}[$k$-system]\label{definition:ksystem}
    An independence system $(\N,\I)$ is called a $k$-system ($k\geq1$) if $|X_1|\leq k|X_2|$ holds for any two bases $X_1$ and $X_2$ of any set $Y\subseteq \N$.
\end{definition}

\subsection{Non-adaptive Setting}
Under the non-adaptive setting, our problem is to identify an optimal solution $O$ to Problem~\eqref{pb:basic} given a $k$-system $(\N,\I)$ and a (not necessarily monotone) submodular function $f(\cdot)$. For convenience, we use $f(X\mid Y)$ as a shorthand for $f(X\cup Y)-f(Y)$ for all $X,Y\subseteq \N$. It is well known that any non-negative submodular function $f(\cdot)$ satisfies the ``diminishing returns'' property: $\forall X\subseteq Y\subseteq \N, x\in \N\setminus Y: f(x\mid Y)\leq f(x\mid X)$.
Following the existing studies, we assume that the values of $f(S)$ and $\mathbf{1}_{\I}(S)$ can be got by calling \textit{oracle queries}, and use the number of oracle queries to measure time complexity unless otherwise stated.

Following some related work such as \cite{balkanski2019optimal,chekuri2019parallelizing,Matthew2019,Francesco2021}, we define the ``adaptivity'' of an algorithm under the non-adaptive setting as the minimum number of ``adaptive rounds'' needed by the algorithm, such that the algorithm can make polynomially-many independent queries to the value of the objective function $f(\cdot)$ in each ``adaptive round''.

\subsection{Adaptive Setting}
Under the adaptive setting, each element $u\in \N$ is associated with an initially unknown state $\Phi(u)\in Z$, where $Z$ is the set of all possible states. A \textit{realization} is any function $\phi\colon \N\mapsto Z$ mapping every element $u\in \N$ to a state $z\in Z$. Therefore, $\Phi$ is the true realization and we follow \cite{golovin2011adaptive1} to assume that $\Pr [\Phi=\phi]$ is known for any possible realization $\phi$. In adaptive optimization problems, an \textit{adaptive policy} $\pi$ is allowed to sequentially select elements in $\N$, and the true state $\Phi(u)$ of any $u\in \N$ can only be observed after $u$ is selected. In such a case, the utility of $\pi$ depends on not only the selected elements but also their states, so we re-define the utility function as $f\colon 2^{\N}\times Z^{\N}\mapsto \R_{\geq0} $. Let $\N(\pi,\phi)$ denote the set of elements selected by $\pi$ under any realization $\phi$, the expected utility of policy $\pi$ is defined as
\begin{equation}\label{eq1}
    f_{\avg}(\pi) := \E [f(\N(\pi,\Phi),\Phi)], \nonumber
\end{equation}
where the expectation is taken over both the randomness of $\Phi$ and the internal randomness (if any) of $\pi$.

Given any $M\subseteq \N$, a mapping $\psi\colon M\mapsto Z$ is called a \textit{partial realization}, and $\dom(\psi)=M$ is called the \textit{domain} of $\psi$. Therefore, a partial realization is $\psi$ is also a realization when $\dom(\psi)=\N$. Intuitively, a partial realization can be used to record the already selected elements and the observed states of them during the execution of an adaptive policy. We also abuse the notations a little by regarding $\psi$ as the set $\{(u,\psi(u))\colon u\in \dom(\psi)\}$. Given two partial realizations $\psi$ and $\psi^\prime$, we say $\psi$ is a \textit{subrealization} of $\psi^\prime$ (denoted by $\psi'\sim \psi$) if $\psi\subseteq\psi^\prime$. With these definitions, we follow~\citet{golovin2011adaptive1} to define the concept of adaptive sumodularity:

\begin{definition}
    Given a partial realization $\psi$ and an element $u$, the expected marginal gain of $u$ conditioned on $\psi$ is defined as $\Delta(u\mid \psi)=\E[f(\dom(\psi)\cup \{u\},\Phi)-f(\dom(\psi),\Phi)\mid \Phi\sim\psi]$. A function $f\colon 2^{\N}\times Z^{\N}\mapsto \R_{\geq0} $ is called adaptive submodular if it satisfies $\forall \psi\subseteq \psi', u\in \N\setminus\dom(\psi^\prime)\colon \Delta(u\mid \psi)\geq \Delta(u\mid \psi^\prime)$.
\end{definition}


The utility function $f(\cdot)$ is also called \textit{adaptive monotone} if $\Delta(u\mid \psi)\geq 0$ for any $u\in \N$ and any partial realization $\psi$ satisfying $\Pr[\Phi\sim \psi]>0$. However, in this paper we consider the case that $f(\cdot)$ is not necessarily adaptive monotone. In such a case, all the the current studies (e.g., \cite{amanatidis2020fast,gotovos2015non}) assume that $f(\cdot)$ is also \textit{pointwise submodular}, whose definition is given below:

\begin{definition}
    A function $f\colon 2^{\N}\times Z^{\N}\mapsto \R_{\geq0} $ is pointwise submodular if $f(\cdot,\phi)$ is submodular for any realization $\phi$ satisfying $\Pr[\Phi=\phi]>0$.
\end{definition}

%

Given a $k$-system $(\N,\I)$ and an adaptive and pointwise submodular function $f\colon 2^{\N}\times Z^{\N}\mapsto \R_{\geq0}$, our problem is to identify an optimal policy $\pi_{\opt}$ to the following adaptive optimization problem:
\begin{equation}\label{eqn:adaptoptproblem}
    \max\{f_{\avg}(\pi)\colon \N(\pi,\phi)\in\I\text{~for~all~realization}~\phi\}.
\end{equation}


\section{A Randomized Algorithm under the Non-Adaptive Setting} \label{sec:nonadaptive}

In this section, we propose an algorithm dubbed \textsc{RandomMultiGreedy}, as shown by Algorithm~\ref{alg:muti-randomgreedy}. \textsc{RandomMultiGreedy} iterates for $T$ steps to construct $\ell$ candidate solutions $S_1, S_2, \cdots, S_{\ell}$. At each step $t$, it greedily finds a pair $(u_{t}, i_t)\in \N\times [\ell]$ such that $S_{i_t}\cup \{u_{t}\}\in \I$ and $f(u_{t}\mid S_{i_t})$ is maximized. If $f(u_{t}\mid S_{i_t})>0$, then Algorithm~\ref{alg:muti-randomgreedy} adds $u_t$ into $S_{i_t}$ with probability $p$ and discard $u_t$ with probability $1-p$. After that, $u_t$ is removed from $\N$. The iterations stop immediately when the pair $(u_{t}, i_t)$ cannot be found or $f(u_{t}\mid S_{i_t})\leq 0$.

For convenience, we introduce the following notations. Let $U=\{u_1,\cdots,u_T\}$ denote the set of all elements that have been considered to be added into $\cup_{i\in [\ell]}S_i$. For any $u\in \N$, let $S_i^<(u)$ denote the set of elements already in $S_i$ at the moment that $u$ is considered by the algorithm, and let $S_i^<(u)=S_i$ if $u$ is never considered by the algorithm. 

Although the design of \textsc{RandomMultiGreedy} is quite simple, its performance analysis is highly non-trivial due to the complex relationships between the elements in $S_1,\cdots, S_{\ell}$ and the randomness of the algorithm. To address these challenges, we first classify the elements in $O$ as follows:

\begin{algorithm}[t]
    \caption{\textsc{RandomMultiGreedy}$(\ell,p)$}
    \label{alg:muti-randomgreedy}
    \begin{algorithmic}
        \STATE {\bfseries Initialize:} $\forall i\in [\ell]:\ S_i\gets \emptyset;~~t\gets 1$
    \end{algorithmic}
    \begin{algorithmic}[1]
        \REPEAT
        \FOR{$i=1$ {\bfseries to} $\ell$} \alglinelabel{ln:greedystart}
        \STATE $A_i\gets \{u\in \mathcal{N}: S_i\cup \{u\}\in \mathcal{I} \}$
        \STATE $v_{i}\gets \arg\max_{u\in A_i}{f(u\mid S_i)}$ \alglinelabel{ln:greedyrule}
        \ENDFOR \alglinelabel{ln:greedyend}
        \IF{$\cup_{i\in [\ell]} A_i\neq \emptyset $}
         \STATE $i_t\gets \arg\max_{i\in [\ell]: A_i\neq \emptyset}f(v_{i}\mid S_i)$;~$u_t\gets v_{i_t}$
        \IF{$f(u_{t}\mid S_{i_t})> 0$}
        \STATE $\mathbf{with}$ probability $p$ $\mathbf{do}$ $S_{i_t}\gets S_{i_t}\cup \{u_{t}\}$ \alglinelabel{ln:addwithp} 
        \STATE $\mathcal{N}\leftarrow \mathcal{N}\setminus \{u_{t}\}; t\gets t+1$
        \ELSE
        \STATE \textbf{break};
        \ENDIF
                \ENDIF

        \UNTIL{$\bigcup_{i\in [\ell]}A_i= \emptyset\vee \N=\emptyset$}
        \STATE $S^*\gets \arg\max_{S\in \{S_1,S_2,\cdots,S_\ell \}} f(S); T\gets t-1$ \\
        \STATE {\bfseries Output:} $S^*, T$
    \end{algorithmic}
\end{algorithm}


\begin{definition} \label{def:keydef}
Let $D_j$ denote the set of elements in $U$ that have been considered to be added into $S_j$ but are discarded due to Line~\ref{ln:addwithp}. For any $i,j\in [\ell]$ satisfying $i\neq j$, we define:
    \begin{eqnarray*}
        &&O_j^{i+}=\left\{u\in O\cap S_j : S_i^<(u)\cup \{u\} \in \mathcal I \right\}; \\
        &&O_j^{i-}=\left\{u\in O\cap S_j : S_i^<(u)\cup \{u\} \notin \mathcal I \right\}; \\
        &&\widehat{O}_j^{i+}=\left\{u\in O\cap D_j : S_i^<(u)\cup \{u\} \in \mathcal I \right\}; \\
        &&\widehat{O}_j^{i-}=\left\{u\in O\cap D_j : S_i^<(u)\cup \{u\} \notin \mathcal I \right\};\\
        &&O_i^-=\left\{u\in O\setminus{U} : S_i\cup \{u\} \notin \mathcal I \wedge f(u\mid S_i)>0 \right\};
    \end{eqnarray*}

\end{definition}

Note that both $O_j^{i+}$ and $O_j^{i-}$ are disjoint subsets of $O\cap S_j$. Intuitively, each element $u\in O_j^{i+}$ (resp. $u\in O_j^{i-}$) can (resp. cannot) be added into $S_i$ without violating the feasibility of $\I$ at the moment that $u$ is added into $S_j$. The sets $\widehat{O}_j^{i+}$, $\widehat{O}_j^{i-}$ are also defined similarly for the elements in $O\cap D_j$. Based on Definition~\ref{def:keydef}, it can be seen that, when Algoirthm~\ref{alg:muti-randomgreedy} terminates, all the elements in $O_i^-$, $O_j^{i-}$ and $\widehat{O}_j^{i-}$ ($\forall j\neq i$) cannot be added into $S_i$ due to the violation of $\I$. Note that these elements together with the elements in $O\cap S_i$ all belong to $O$. So we can map them to the elements in $S_i$ using a method similar to that in~\cite{calinescu2011maximizing,han2020deterministic} based on the definition of $k$-system, as shown by Lemma~\ref{lma:mapping}:

\begin{lemma} \label{lma:mapping}
    For each $i\in [\ell]$, let $Q_i=\cup_{j\in [\ell]\setminus\{i\}}(O_j^{i-}\cup \widehat{O}_j^{i-})\cup (O\cap S_i)\cup O_i^-$. There exists a mapping $\sigma_i: Q_i \mapsto S_i$ satisfying: (1) The element $\sigma_i(u)$ can be added into $S_i^<({\sigma_i(u)})$ without violating the feasibility of $\I$ for all $u\in Q_i$; (2) The number of elements in $Q_i$ mapped to the same element in $S_i$ by $\sigma_i(\cdot)$ is no more than $k$; and (3) we have $\forall u\in O\cap S_i: \sigma_i(u)=u$.
%
%
\end{lemma}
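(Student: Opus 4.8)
The plan is to construct the mapping $\sigma_i$ greedily, processing the elements of $S_i$ in the reverse order of the time they were added to $S_i$ during the execution of Algorithm~\ref{alg:muti-randomgreedy}, and at each stage exploiting the defining property of a $k$-system applied to a carefully chosen ``partial'' ground set. First I would fix $i\in[\ell]$ and let $S_i=\{w_1,w_2,\dots,w_m\}$ be indexed by order of insertion, so $w_1$ was added first. For the elements of $O\cap S_i$ I simply set $\sigma_i(u)=u$, which trivially satisfies conditions~(1) and~(3): when $w=\sigma_i(w)=w\in O\cap S_i$ was considered by the algorithm it was in fact added to $S_i$, so $S_i^<(w)\cup\{w\}\in\I$. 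The real work is to map the remaining elements of $Q_i$, namely those in $R_i:=\bigcup_{j\neq i}(O_j^{i-}\cup\widehat O_j^{i-})\cup O_i^-$, into $S_i\setminus(O\cap S_i)$ so that no element of $S_i$ receives more than $k$ pre-images in total, counting the $O\cap S_i$ self-maps.

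The key structural observation is the following. Consider any element $u\in R_i$. If $u\in O_j^{i-}\cup\widehat O_j^{i-}$, then by Definition~\ref{def:keydef} we have $S_i^<(u)\cup\{u\}\notin\I$, i.e. $u$ could not be feasibly added to the partial solution $S_i^<(u)$ at the moment $u$ was considered for $S_j$ (and $u$ was considered, since $u\in U$); if $u\in O_i^-$ then $u$ was never considered but still $S_i\cup\{u\}\notin\I$ at termination. In either case there is a ``witness'' partial set $B_u\subseteq S_i$, equal to $S_i^<(u)$ (for the first type) or $S_i$ (for the second type), which is a set produced by the greedy process, such that $B_u\cup\{u\}\notin\I$ but $B_u\in\I$. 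Since $B_u$ is a base of $B_u\cup\{u\}$ (adding $u$ is infeasible) while the greedy process shows $B_u$ is contained in an independent set, and crucially $O\cup\bigcup_{t}S_i^{<}(u_t)$ lies in a $k$-system, I can invoke the $k$-system property on the set $Y_i:=S_i\cup R_i$ (or the relevant sub-ground-set): any base of $Y_i$ has size at most $k$ times any other base. The standard argument (as in \cite{calinescu2011maximizing,han2020deterministic}) then builds $\sigma_i$ by, for each $w_\ell\in S_i$ processed from $\ell=m$ down to $\ell=1$, assigning to $w_\ell$ those not-yet-mapped elements $u\in R_i$ whose witness set $B_u$ satisfies $w_\ell\in B_u\cup\{w_\ell\}$ being a base-extension obstruction — formally, $u$ is mapped to the last element $w_\ell$ of $S_i$ that was present (or would need to be removed) to make $u$ feasible. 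Condition~(1) holds because $w_\ell$ is mapped only when $w_\ell\in S_i^<(w_\ell)$'s successor... more precisely, $\sigma_i(u)=w_\ell$ is chosen so that $w_\ell\notin S_i^<(u)$, hence $S_i^<(\sigma_i(u))=S_i^<(w_\ell)\subseteq S_i^<(u)$, and $w_\ell$ was in fact added there, giving feasibility.

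To get the bound of $k$ per element in condition~(2), the argument is to apply the $k$-system definition to the set $Y = \{w_1,\dots,w_\ell\}\cup\{u\in R_i : \sigma_i(u)\in\{w_1,\dots,w_\ell\}\text{ so far unmapped have witness disjoint from }\{w_{\ell+1},\dots\}\}$: within such a restricted ground set, $\{w_1,\dots,w_\ell\}$ together with any single unmapped $u$ would be independent if $u$ were not blocked, so the collection of elements blocked ``exactly at level $\ell$'' together with $\{w_1,\dots,w_\ell\}$ forms a set two of whose bases are $\{w_1,\dots,w_\ell\}$ and a maximal independent subset containing the blocked elements; the $k$-system inequality forces the number blocked at level $\ell$ to be at most $k$. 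Iterating over all levels and adding back the self-maps (each contributing $1$, and an element $w\in O\cap S_i$ is itself a possible ``witness endpoint'' so the count stays $\le k$) yields condition~(2). The main obstacle I anticipate is making the level-assignment precise enough that the witness sets $B_u$ nest correctly with the $S_i^<(\cdot)$ snapshots — i.e. verifying that the ``last blocking element'' is well-defined and that the resulting $Y$ at each level is genuinely a subset of the ambient $k$-system with the two claimed bases; this bookkeeping, rather than any deep idea, is where the proof must be careful, and it is exactly the place where the disjointness of the $S_j$'s and the definition of $S_i^<(u)$ for $u$ considered in a different branch $j\neq i$ get used.
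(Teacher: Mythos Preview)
Your high-level plan---reverse iteration over the elements of $S_i$ and appeal to the $k$-system property on prefixes---matches the paper's. However, the specific assignment rule and the per-level bound argument you sketch have a genuine gap.

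You propose mapping each $u\in R_i$ to its ``last blocking element'' $w_\ell$, i.e.\ the smallest $\ell$ with $\{w_1,\dots,w_\ell,u\}\notin\I$, and you claim ``the $k$-system inequality forces the number blocked at level $\ell$ to be at most $k$.'' This is false. Take the uniform matroid $U_{2,4}$ (a $1$-system), $S_i=\{w_1,w_2\}$, and $R_i=\{u_1,u_2\}\subseteq O$. Then $\{w_1,u_j\}\in\I$ but $\{w_1,w_2,u_j\}\notin\I$ for $j=1,2$, so your rule maps both $u_1,u_2$ to $w_2$, violating condition~(2) with $k=1$. The $k$-system property applied to $\{w_1,\dots,w_\ell\}\cup\{\text{blocked elements}\}$ only tells you that any base has size at most $k\ell$, not that at most $k$ elements are blocked precisely at level $\ell$.

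The paper avoids this by \emph{not} committing to a rigid per-element rule. Processing $t=q,q-1,\dots,1$, it sets $C_t=\{e\in J_t\setminus\{w_1,\dots,w_{t-1}\}:\{w_1,\dots,w_{t-1},e\}\in\I\}$ and maps only $\min(k,|C_t|)$ elements of $C_t$ to $w_t$ (always including $w_t$ itself when $w_t\in C_t$, which secures condition~(3)), deferring any surplus to lower levels. Correctness rests on the inductive invariant $|J_t|\leq kt$: the base case $|J_q|\leq kq$ holds because $S_i$ is a base of $S_i\cup Q_i$ while $Q_i\subseteq O\in\I$; inductively, either $|C_t|>k$ so $|J_{t-1}|=|J_t|-k\leq k(t-1)$, or $|C_t|\leq k$ and then $\{w_1,\dots,w_{t-1}\}$ is a base of $\{w_1,\dots,w_{t-1}\}\cup J_{t-1}$, giving $|J_{t-1}|\leq k(t-1)$ directly from the $k$-system definition. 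Hence $J_0=\emptyset$. The flexibility to defer is exactly what your ``witness set'' framing lacks; once you drop the rigid rule and argue via this global invariant rather than a per-level count, the bookkeeping you worried about becomes straightforward.
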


The purpose for creating the mapping in Lemma~\ref{lma:mapping} is to bound the value of $f(u\mid S_i)$ for all $u\in Q_i$. For example, given any element $u\in O_j^{i-}$, $u$ can be mapped to an element $v=\sigma_i(u)$ satisfying $S_i^{<}(v)\cup \{u\}\in \I$, which implies that the value of $f(u\mid S_i)$ is no more than $f(v\mid S_i^<(v))$, because otherwise $u$ should have been added into $S_i$ instead of $v$ according to the greedy rule of Algorithm~\ref{alg:muti-randomgreedy}. Based on this intuition, a more careful analysis reveals that:




\begin{lemma} \label{lma:mgupperbound}
For any $u_t\in U$ where $t\in [T]$, we define $\delta(u_t)=\sum_{j=1}^\ell \mathbf{1}\{i_t=j\}\cdot f(u\mid S_{j}^<(u))$. Given any $i,j\in [\ell]$ satisfying $i\neq j$, we have
    \begin{eqnarray}
        &&\forall u\in O_j^{i+}\cup \widehat{O}_j^{i+}: f(u\mid S_i)\leq \delta(u);~~~\label{eqn:toprove1} \\
        &&\forall u\in Q_i: f(u\mid S_i)\leq \delta(\pi_i(u));~~~\label{eqn:toprove2}
    \end{eqnarray}
where $Q_i$ is defined in Lemma~\ref{lma:mapping}.
\end{lemma}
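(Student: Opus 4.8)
The plan is to reduce both claims to the greedy selection rule of Algorithm~\ref{alg:muti-randomgreedy} together with the diminishing-returns property. Recall that whenever an element $w$ is processed (as some $u_t$) and tentatively assigned to slot $i_t=m$, the quantity $\delta(w)$ equals exactly the marginal gain $f(w\mid S_m^<(w))$ realized at that step, and by Line~\ref{ln:greedyrule} and the choice of $(u_t,i_t)$ this is the largest marginal gain available at that step among all feasible (element, slot) pairs. So for both \eqref{eqn:toprove1} and \eqref{eqn:toprove2} it suffices to (a) identify the step at which the element appearing inside $\delta(\cdot)$ is processed, and (b) show that $u$ is an available candidate for slot $i$ at that step, i.e.\ $u\in A_i$ at that moment; then the greedy choice bounds $f(u\mid S_i^<(\cdot))$ by $\delta(\cdot)$, and diminishing returns upgrades this (using $S_i^<(\cdot)\subseteq S_i$ and $u\notin S_i$) to the desired bound on $f(u\mid S_i)$.

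For \eqref{eqn:toprove1}, fix $u\in O_j^{i+}\cup\widehat O_j^{i+}$ with $i\neq j$. Since $u\in O\cap(S_j\cup D_j)$, it is processed at some step $t$ with $u=u_t$ and $i_t=j$, so $\delta(u)=f(u\mid S_j^<(u))$; moreover $u\notin S_i$ because $j\neq i$. By the definition of $O_j^{i+}$ and $\widehat O_j^{i+}$ we have $S_i^<(u)\cup\{u\}\in\I$, and $u\in\N$ at the start of step $t$, hence $u\in A_i$ at step $t$. The greedy rule and the choice of $(u_t,i_t)$ then give $f(u\mid S_i^<(u))\le f(v_i\mid S_i^<(u))\le f(u_t\mid S_{i_t}^<(u_t))=\delta(u)$, and diminishing returns gives $f(u\mid S_i)\le f(u\mid S_i^<(u))\le\delta(u)$.

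For \eqref{eqn:toprove2}, fix $u\in Q_i$ and let $v=\sigma_i(u)\in S_i$ be its image under the mapping of Lemma~\ref{lma:mapping} (the mapping written $\pi_i$ in the statement). If $u\in O\cap S_i$ then $v=u$; since $u$ was actually inserted into $S_i$, the guard on Line~\ref{ln:addwithp} forces $\delta(u)=f(u\mid S_i^<(u))>0$ while $f(u\mid S_i)=0$, so \eqref{eqn:toprove2} holds. Otherwise $u\notin S_i$; let $t'$ be the step processing $v$, so $i_{t'}=i$ and $\delta(v)=f(v\mid S_i^<(v))$. We claim $u\in A_i$ at step $t'$. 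Feasibility $S_i^<(v)\cup\{u\}\in\I$ is property~(1) of Lemma~\ref{lma:mapping}; and $u$ is still in $\N$ at step $t'$ because either $u\in O_i^-$ is never processed, or $u\in O_j^{i-}\cup\widehat O_j^{i-}$ for some $j\neq i$ is processed at some step $t''$ with $i_{t''}=j$, and then $t'<t''$. Granting the claim, the greedy rule applied with $v=u_{t'}=v_i$ yields $f(u\mid S_i^<(v))\le f(v\mid S_i^<(v))=\delta(v)$, and diminishing returns yields $f(u\mid S_i)\le f(u\mid S_i^<(v))\le\delta(\sigma_i(u))$.

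The only genuinely delicate point is establishing $u\in A_i$ at the step processing $\sigma_i(u)$ in the second case; feasibility is handed to us by Lemma~\ref{lma:mapping}, and — somewhat pleasantly — the requisite timing is then forced: the sets $S_i^<(\cdot)$ form a chain under inclusion, so if $u$ were processed no later than $\sigma_i(u)$ we would have $S_i^<(u)\subseteq S_i^<(\sigma_i(u))$, and down-closedness would propagate $S_i^<(u)\cup\{u\}\notin\I$ (the defining property of $O_j^{i-}$/$\widehat O_j^{i-}$) to $S_i^<(\sigma_i(u))\cup\{u\}\notin\I$, contradicting Lemma~\ref{lma:mapping}(1). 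Everything else is a mechanical combination of the greedy rule, the break condition on Line~\ref{ln:addwithp} (which makes $\delta$ strictly positive on all of $U$), and submodularity; the edge cases worth checking explicitly are $u=\sigma_i(u)\in O\cap S_i$ and $u\in O_i^-\subseteq O\setminus U$, both treated above.
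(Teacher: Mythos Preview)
Your proof is correct and follows essentially the same approach as the paper's: both arguments reduce each inequality to (i) verifying that $u$ is a feasible candidate for slot $i$ at the step where the element inside $\delta(\cdot)$ is processed, (ii) applying the greedy selection rule, and (iii) using diminishing returns; and both establish the crucial timing $t'<t''$ for $u\in O_j^{i-}\cup\widehat O_j^{i-}$ via the same down-closedness contradiction with Lemma~\ref{lma:mapping}(1). Your treatment is slightly more explicit in checking that $u$ has not yet been removed from $\N$ and in handling the $u\in O\cap S_i$ case via $f(u\mid S_i)=0$, but these are presentational differences only.
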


Using Lemma~\ref{lma:mgupperbound}, we can prove Lemma~\ref{lma:boundsumfosi}, which provides an upper bound of $\sum_{i\in[\ell]}f(O\mid S_i)$:

\begin{lemma} \label{lma:boundsumfosi}
For any $u\in \N$, define $X_u=1$ if $u\in (U\cap O)\backslash \cup_{i=1}^\ell S_i$, otherwise define $X_u=0$. Given any integer $\ell\geq 2$, we have
\begin{eqnarray}
\sum_{i\in[\ell]}f(O\mid S_i)\leq \ell(k + \ell -2) f(S^*)+\ell\sum_{u\in \N} X_u\cdot \delta(u) \label{eqn:keyeqn}
\end{eqnarray}
\end{lemma}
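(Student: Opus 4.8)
The plan is to bound $f(O\mid S_i)$ for each $i\in[\ell]$ separately and then sum. First I would use submodularity to split $O$ into the part already accounted for in $S_i$ and the ``missing'' part: writing $O_i = O\setminus S_i$, we have $f(O\mid S_i) \le \sum_{u\in O\setminus S_i} f(u\mid S_i)$ by subadditivity of marginals (repeated application of submodularity). The next step is to classify every $u\in O\setminus S_i$ according to Definition~\ref{def:keydef}: either $u\in U$ (it was considered by the algorithm), in which case it landed in some $S_j$ or some $D_j$ with $j\neq i$ and falls into one of $O_j^{i+},O_j^{i-},\widehat O_j^{i+},\widehat O_j^{i-}$; or $u\in O\setminus U$, in which case $S_i\cup\{u\}\notin\I$ (so $u\in O_i^-$, assuming $f(u\mid S_i)>0$, and if $f(u\mid S_i)\le 0$ it contributes nothing). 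A handful of these classes — the ``$+$'' classes and $Q_i$ — are controlled by Lemma~\ref{lma:mgupperbound} via $\delta(\cdot)$ and the mapping $\sigma_i$ (there seems to be a notational mismatch $\pi_i$ vs.\ $\sigma_i$, but they denote the same map). The key accounting is that each element of $U\cap O$ that actually ended up in some $S_m$ gets charged, across all $\ell$ choices of $i$, to at most $k+\ell-2$ distinct ``slots'': for the $i=m$ term it sits in $S_m$ and is handled by the $\sigma_m$ mapping (which is $k$-to-$1$, giving a factor related to $k$), and for the $\ell-1$ other indices $i\neq m$ it falls in the ``$+$'' or ``$-$'' buckets.

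Concretely, I would proceed as follows. Fix $i$. Decompose
\[
f(O\mid S_i)\;\le\;\sum_{u\in O\cap S_i} f(u\mid S_i)
 \;+\;\sum_{j\neq i}\sum_{u\in O_j^{i+}\cup\widehat O_j^{i+}} f(u\mid S_i)
 \;+\;\sum_{u\in Q_i\setminus(O\cap S_i)} f(u\mid S_i),
\]
where I have used that $O\setminus S_i$ is partitioned (up to the elements with nonpositive marginal, which are dropped) into the ``$+$'' sets, the ``$-$'' sets $O_j^{i-},\widehat O_j^{i-}$, and $O_i^-$ — and the latter three are exactly the sets comprising $Q_i$ besides $O\cap S_i$. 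For the first sum, since $\sigma_i$ is the identity on $O\cap S_i$ and $f(u\mid S_i)$ with $u\in S_i$ telescopes, I bound $\sum_{u\in O\cap S_i} f(u\mid S_i) \le \sum_{u\in O\cap S_i}\delta(u)$ directly (each such $u$ equals some $u_t$ with $i_t=i$, so $\delta(u)=f(u\mid S_i^<(u))\ge f(u\mid S_i)$ by diminishing returns). For the second and third sums I invoke \eqref{eqn:toprove1} and \eqref{eqn:toprove2} to replace $f(u\mid S_i)$ by $\delta(u)$ (resp.\ $\delta(\sigma_i(u))$).

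Now I would sum over $i\in[\ell]$ and reorganize the right-hand side as a sum over the elements $u_t\in U$ of $\delta(u_t)$ times its total multiplicity. Each $u_t\in U$ either (a) was added to some $S_m$ and belongs to $O$, or (b) was discarded (in $D_m$) and belongs to $O$, or (c) does not belong to $O$. In case (c), $\delta(u_t)$ appears only through $\sigma_i(v)=u_t$ charges; since each $\sigma_i$ is $k$-to-$1$ and $u_t$ can only serve as $\sigma_i(v)$ when it lies in $S_i$, i.e.\ only for $i=m$, this contributes at most $k\,\delta(u_t)$ — but we need to be careful: actually every $u_t\in S_i$ (whether or not in $O$) can be the image of up to $k$ elements of $Q_i$, so summing over $i$ the element $u_t\in S_m$ gets charged $\le k$ times from the $i=m$ term's $Q_m$-part, and $\le 1$ additional time from each of the $\ell-1$ terms $i\neq m$ (where $u_t$, if in $O\cap S_m$, falls into exactly one ``$+$'' or ``$-$'' bucket relative to $S_i$, contributing $\delta(u_t)$ once via \eqref{eqn:toprove1} or \eqref{eqn:toprove2}). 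Hence the total multiplicity of each $\delta(u_t)$ with $u_t\in U\cap\bigcup S_i$ is at most $k+(\ell-1)-1 = k+\ell-2$, while for $u_t\in U\cap O$ that was \emph{discarded} or \emph{never placed} in any $S_i$ the $i=m$ term is absent and we instead use $X_{u_t}=1$ to put $\ell\,\delta(u_t)$ into the additive slack term. Finally, $\sum_{u_t\in\bigcup S_i}\delta(u_t) = \sum_{i\in[\ell]} f(S_i) \le \ell\, f(S^*)$ since $\delta$ telescopes along each $S_i$ and $S^*$ is the best candidate. Putting the pieces together yields $\sum_{i\in[\ell]}f(O\mid S_i)\le (k+\ell-2)\cdot\ell\, f(S^*) + \ell\sum_{u\in\N}X_u\,\delta(u)$, which is \eqref{eqn:keyeqn}.

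The main obstacle I anticipate is the bookkeeping of multiplicities: making sure that across the $\ell$ decompositions every element of $O$ is counted in \emph{exactly one} bucket for each index $i$ (so no double counting inflates the bound), and simultaneously that the $\sigma_i$-mapping charges (which can be up to $k$ per target) do not collide with the ``$+$''-bucket charges in a way that overshoots $k+\ell-2$. Handling the discarded elements $D_j$ correctly is the delicate point — they must be kept out of the $\sum_i f(S_i)\le \ell f(S^*)$ accounting (they are not in any $S_i$) and instead absorbed into the $X_u\delta(u)$ term, which is precisely why the indicator $X_u$ is defined to flag elements of $(U\cap O)\setminus\bigcup_i S_i$; verifying that this reassignment is consistent with Lemma~\ref{lma:mgupperbound}'s bounds (which reference $\delta(\sigma_i(u))$ for $u\in\widehat O_j^{i\pm}$) is where the argument needs the most care.
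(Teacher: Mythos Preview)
Your strategy mirrors the paper's---decompose $f(O\mid S_i)$ by submodularity, bucket $O\setminus S_i$ via Definition~\ref{def:keydef}, invoke Lemma~\ref{lma:mgupperbound}, sum over $i$, and count charges to each $\delta(w)$---but there is a genuine gap in the multiplicity count. A literal reading of your argument only yields $k+\ell-1$, not $k+\ell-2$. The culprit is your treatment of the first sum: you write $\sum_{u\in O\cap S_i}f(u\mid S_i)\le\sum_{u\in O\cap S_i}\delta(u)$, but in fact $f(u\mid S_i)=0$ whenever $u\in S_i$, so this term is \emph{zero} and should simply be dropped rather than charged. With that correction the count works: for $w\in S_m\cap O$, the $\sigma_m$-charges restricted to $Q_m\setminus(O\cap S_m)$ hit $w$ at most $k-1$ times (since $\sigma_m(w)=w$ is the one preimage you just excluded), and the ``$+$'' buckets $O_m^{i+}$ for $i\neq m$ add at most $\ell-1$, totalling $k+\ell-2$; note that the ``$-$'' buckets charge $\delta(\sigma_i(w))$ with $\sigma_i(w)\in S_i$, a \emph{different} element, so your remark that they contribute ``$\delta(u_t)$ via~\eqref{eqn:toprove2}'' is incorrect and in any case irrelevant to $w$'s multiplicity. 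For $w\in S_m\setminus O$ you get at most $k$ via $\sigma_m$ and $0$ via ``$+$'', also $\le k+\ell-2$. A second, minor omission: your three-term decomposition misses the bucket $O\cap D_i$ (elements considered for $S_i$ itself but discarded); the paper bounds these by $f(u\mid S_i)\le\delta(u)$ and folds them into the $X_u$ term with multiplicity $\ell$.

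The paper achieves the same $-1$ saving by a different, global device (Lemma~\ref{lma:involved}): fix the cyclic shift $\lambda(i)=(i\bmod\ell)+1$, bound $\sum_i\sum_{u\in O_{\lambda(i)}^{i+}}\delta(u)\le\sum_i\sum_{u\in O\cap S_{\lambda(i)}}\delta(u)=\sum_i\sum_{u\in O\cap S_i}\delta(u)$ (using that $\lambda$ is a bijection), and then merge this with the mapping charges via $\sum_i\sum_{u\in O\cap S_i}\delta(u)+\sum_i\sum_{u\in Q_i\setminus(O\cap S_i)}\delta(\sigma_i(u))=\sum_i\sum_{u\in Q_i}\delta(\sigma_i(u))\le k\sum_i f(S_i)$. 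The remaining $\ell-2$ choices of $j$ per $i$ contribute $\ell(\ell-2)f(S^*)$, yielding $\ell(k+\ell-2)f(S^*)$. Your (corrected) per-element argument and the paper's cyclic-shift argument are thus two equivalent ways to extract the crucial extra unit.
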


The proof idea of Lemma~\ref{lma:boundsumfosi} is roughly explained as follows. As the elements in $O\setminus S_i$ can be classified using the sets defined in Definition~\ref{def:keydef}, we can leverage Lemma~\ref{lma:mapping} and Lemma~\ref{lma:mgupperbound} to bound $f(u\mid S_i)$ for all $u\in O\setminus S_i$ using the marginal gains of the elements in $\cup_{i=1}^{\ell}S_i$. These marginal gains are further grouped in a subtle way such that their summation can be bounded by the RHS of Eqn.~\eqref{eqn:keyeqn}.


Note that Eqn.~\eqref{eqn:keyeqn} holds for \textit{every} random output of Algorithm~\ref{alg:muti-randomgreedy}. So the inequality still holds after taking expectation. Furthermore, we introduce Lemma~\ref{lma:boundexptation} to bound the expectations of the LHS and RHS of Eqn.~\eqref{eqn:keyeqn}. The proof of Lemma~\ref{lma:boundexptation} leverages the property that each element in $\N$ is only accepted with probability of at most $p$.

\begin{lemma} \label{lma:boundexptation}
For any $p\in (0,1]$, we have
    \begin{eqnarray}
        &&\mathbb{E} \bigg[\sum\nolimits_{i\in[\ell]}f(O\cup S_i)\bigg]\geq (\ell-p)f(O) \label{eqn:osiislarger}\\
        &&\mathbb{E} \bigg[\sum_{u\in\N}X_u\cdotp \delta(u)\bigg]\leq \frac{1-p}{p} \mathbb{E}\bigg[\sum_{i\in [\ell]}f(S_i)\bigg] \label{eqn:xudeltaissmall}
    \end{eqnarray}
\end{lemma}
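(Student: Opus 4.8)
The two inequalities are proved separately, but both rest on the single fact that, conditioned on an element being \emph{considered} with a positive marginal at some step, it is actually added to its candidate set at Line~\ref{ln:addwithp} with probability exactly $p$, independently of everything that happened before. I would first record the consequences of this: (a) $\Pro[u\in S_i]\le p$ for every $u,i$, and more importantly (b) $S_1,\dots,S_\ell$ are pairwise disjoint, since an element is deleted from $\N$ the instant it is considered and therefore can enter at most one $S_i$; consequently $\Pro[u\in\bigcup_{i\in[\ell]}S_i]\le p$ for all $u$.

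For \eqref{eqn:osiislarger} the plan is to route through the union $R:=\bigcup_{i\in[\ell]}S_i$. For \emph{every} outcome of the coins, applying submodularity of the non-negative function $g(\cdot):=f(O\cup\cdot)$ to the disjoint sets $S_1,\dots,S_\ell$ one at a time (each step uses $(O\cup S_1\cup\dots\cup S_{i-1})\cap(O\cup S_i)=O$) gives $\sum_{i\in[\ell]}g(S_i)\ge(\ell-1)\,g(\emptyset)+g(R)$, i.e. $\sum_{i\in[\ell]}f(O\cup S_i)\ge(\ell-1)f(O)+f(O\cup R)$. Taking expectations, it then remains to show $\E[f(O\cup R)]\ge(1-p)f(O)$, which is the standard ``random-subset'' lemma for non-negative submodular functions (if each element of $\N$ lies in a random set $R$ with probability at most $p$, then $\E[g(R)]\ge(1-p)g(\emptyset)$; a direct consequence of submodularity) applied to $g(\cdot)=f(O\cup\cdot)$ together with $\Pro[u\in R]\le p$. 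Combining the two bounds yields $(\ell-1)f(O)+(1-p)f(O)=(\ell-p)f(O)$.

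For \eqref{eqn:xudeltaissmall} the plan is a per-step conditional-expectation argument that copes with the random number of steps $T$. Since $X_u=0$ whenever $u\notin U$, we have $\sum_{u\in\N}X_u\delta(u)=\sum_{t=1}^{T}X_{u_t}\delta(u_t)$; and since $f(S_i)=f(\emptyset)+\sum_{v\text{ added to }S_i}f(v\mid S_i^<(v))$ with $f(\emptyset)\ge0$, we get $\sum_{i\in[\ell]}f(S_i)\ge\sum_{t=1}^{T}\mathbf{1}\{u_t\text{ is added}\}\,\delta(u_t)$. Let $\mathcal{F}_{t-1}$ collect all randomness before the coin flip of step $t$; the event ``step $t$ is reached'' and the triple $(u_t,i_t,\delta(u_t))$ are all $\mathcal{F}_{t-1}$-measurable, because the algorithm halts only on quantities decided before that coin. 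On ``step $t$ is reached'' we have $\E[\mathbf{1}\{u_t\text{ is added}\}\delta(u_t)\mid\mathcal{F}_{t-1}]=p\,\delta(u_t)$ and $\E[X_{u_t}\delta(u_t)\mid\mathcal{F}_{t-1}]=\mathbf{1}\{u_t\in O\}(1-p)\delta(u_t)\le(1-p)\delta(u_t)$, because $X_{u_t}=1$ exactly when $u_t\in O$ and $u_t$ is discarded. Summing over $t$ and taking expectations gives $\E[\sum_{u}X_u\delta(u)]\le(1-p)\,\E[\sum_{t=1}^{T}\delta(u_t)]$ and $\E[\sum_i f(S_i)]\ge p\,\E[\sum_{t=1}^{T}\delta(u_t)]$; eliminating $\E[\sum_t\delta(u_t)]$ between them gives the claim.

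The main obstacle I anticipate is getting the \emph{sharp} constant $(\ell-p)$ rather than the easy $(\ell-\ell p)$ in \eqref{eqn:osiislarger}: applying the random-subset lemma to each $S_i$ in isolation loses a factor $\ell$ in the deficiency term, and the fix depends on exploiting that the $S_i$ are disjoint, so that the deficiency is paid only once, on $\bigcup_i S_i$, via the submodularity telescoping above. A secondary technical point is the bookkeeping around the random stopping time $T$ in \eqref{eqn:xudeltaissmall}; this is handled cleanly by noting that ``step $t$ occurs'' is $\mathcal{F}_{t-1}$-measurable, so conditioning on the pre-coin history is legitimate and the per-step identities may be summed term by term.
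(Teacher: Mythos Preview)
Your proposal is correct and matches the paper's proof essentially line for line: for \eqref{eqn:osiislarger} the paper also telescopes via submodularity and disjointness to get $\sum_i f(O\cup S_i)\ge(\ell-1)f(O)+f(O\cup\bigcup_i S_i)$ and then applies the Buchbinder--Feldman--Naor--Schwartz random-subset lemma to $g(\cdot)=f(O\cup\cdot)$; for \eqref{eqn:xudeltaissmall} the paper conditions on the history up to the moment $u$ is considered (your $\mathcal{F}_{t-1}$) and compares $p\,\delta(u)$ versus $(1-p)\,\delta(u)$ exactly as you do, the only cosmetic difference being that the paper indexes by element $u$ while you index by step $t$.
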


By combining Lemma~\ref{lma:boundsumfosi}, Lemma~\ref{lma:boundexptation} and the fact that $\forall i\in [\ell]: f(S_i)\leq f(S^*)$, we can immediately get the approximation ratio of Algorithm~\ref{alg:muti-randomgreedy} as follows:


\begin{theorem} \label{thm:rmgapproxratio}
For any $\ell\geq 2$ and $p\in (0,1]$, the \textsc{RandomMultiGreedy}$(\ell,p)$ algorithm outputs a solution $S^*$ satisfying
\begin{eqnarray}
f(O) \leq \frac{\ell(k+\frac{\ell}{p}-1)}{\ell-p}\mathbb{E} [f(S^*)]
\end{eqnarray}
\end{theorem}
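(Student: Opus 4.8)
The plan is to combine the three ingredients already established — Lemma~\ref{lma:boundsumfosi}, Lemma~\ref{lma:boundexptation}, and the trivial bound $f(S_i)\le f(S^*)$ — by purely algebraic manipulation, so no new combinatorial argument is needed.

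First I would rewrite $\sum_{i\in[\ell]} f(O\cup S_i)$ in terms of marginal gains. Since $f(O\mid S_i)=f(O\cup S_i)-f(S_i)$ by definition,
$$\sum_{i\in[\ell]} f(O\cup S_i) = \sum_{i\in[\ell]} f(O\mid S_i) + \sum_{i\in[\ell]} f(S_i).$$
Substituting the bound of Lemma~\ref{lma:boundsumfosi} for the first sum on the right gives, for every realization of the algorithm's coin flips,
$$\sum_{i\in[\ell]} f(O\cup S_i) \le \ell(k+\ell-2)\,f(S^*) + \ell\sum_{u\in\N} X_u\,\delta(u) + \sum_{i\in[\ell]} f(S_i).$$
Because Eqn.~\eqref{eqn:keyeqn} is a deterministic inequality valid on each sample path and all terms are non-negative, I can now take expectations of both sides.

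Next I would bound each of the three resulting expectations. The left-hand side is bounded below by $(\ell-p)f(O)$ via Eqn.~\eqref{eqn:osiislarger}; the term $\mathbb{E}[\sum_{u\in\N} X_u\,\delta(u)]$ is bounded above by $\frac{1-p}{p}\,\mathbb{E}[\sum_{i\in[\ell]} f(S_i)]$ via Eqn.~\eqref{eqn:xudeltaissmall}; and $\mathbb{E}[\sum_{i\in[\ell]} f(S_i)]\le \ell\,\mathbb{E}[f(S^*)]$ because $f(S_i)\le f(S^*)$ for every $i$. Putting these together,
$$(\ell-p)f(O) \le \ell(k+\ell-2)\,\mathbb{E}[f(S^*)] + \Big(\ell\cdot\tfrac{1-p}{p} + 1\Big)\ell\,\mathbb{E}[f(S^*)].$$

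Finally I would collect the coefficient of $\mathbb{E}[f(S^*)]$ on the right: $k+\ell-2 + \frac{\ell(1-p)}{p} + 1 = k + \ell - 1 + \frac{\ell}{p} - \ell = k + \frac{\ell}{p} - 1$, whence $(\ell-p)f(O)\le \ell\,(k+\frac{\ell}{p}-1)\,\mathbb{E}[f(S^*)]$; dividing by $\ell-p>0$ (valid since $\ell\ge 2$ and $p\le 1$) yields the claimed bound. I do not expect any real obstacle here — the genuine difficulty has been absorbed into Lemmas~\ref{lma:boundsumfosi} and~\ref{lma:boundexptation}. The only points that warrant a moment's care are that Eqn.~\eqref{eqn:keyeqn} must be invoked \emph{before} passing to expectations, and that $\ell-p$ is strictly positive so the final division is legitimate.
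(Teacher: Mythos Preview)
Your proposal is correct and matches exactly the approach the paper indicates: the paper simply states that Theorem~\ref{thm:rmgapproxratio} follows ``immediately'' by combining Lemma~\ref{lma:boundsumfosi}, Lemma~\ref{lma:boundexptation}, and $f(S_i)\le f(S^*)$, and your write-up carries out precisely that combination with the correct algebra. One minor quibble: your remark that ``all terms are non-negative'' is unnecessary (and not quite right, since $f(O\mid S_i)$ can be negative for non-monotone $f$); taking expectations of a pointwise inequality is valid regardless, so this does not affect the argument.
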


\textbf{Discussion of Theorem~\ref{thm:rmgapproxratio}:} From Theorem~\ref{thm:rmgapproxratio}, it can be seen that the approximation ratio of Algorithm~\ref{alg:muti-randomgreedy} can be optimized by choosing proper values of $\ell$ and $p$. Indeed, the ratio can be minimized to $(1+\sqrt{k})^2$ by setting $\ell=2,p=\frac{2}{1+\sqrt{k}}$. Besides, if we set $\ell=\lceil\sqrt{k}\rceil+1,p=1$, then the approximation ratio turns into $k+\sqrt{k}+\lceil \sqrt{k}\rceil+1$. Clearly, setting $\ell=2$ implies faster running time as only two candidate solutions are maintained, while setting $p=1$ implies a deterministic algorithm.


\subsection{Acceleration} \label{sec:accel}

It can be seen that \textsc{RandomMultiGreedy} has time complexity of $\mathcal{O}(\ell nr)$. This time complexity can be further reduced by implementing Lines~\ref{ln:greedystart}-\ref{ln:greedyend} using a ``lazy evaluation'' method inspired by~\cite{minoux1978accelerated,ene2019nearly}. More specifically, for each solution set $S_i$, we maintain an ordered list $A_i$ which is initialized to $\N$. Each element $u\in A_i$ has a weight $w_i(u)=f(u\mid S_i)$ and the elements in $A_i$ are always sorted according to the non-increasing order of their weights. When $S_i$ changes, we pop out the top element $u$ from $A_i$ and discard $u$ if $S_i\cup \{u\}\notin \I$. If $S_i\cup \{u\}\in \I$ and $f(u\mid S_i)$ has not been computed, then we update the weight of $u$ and set $v_i=u$ if the new weight of $u$ is at least $(1+\epsilon)^{-1}$ fraction of its old weight (otherwise $u$ is re-inserted into $A_i$ and we pop out the next element). During this process, any element in $A_i$ is removed from $A_i$ immediately when its weight has been updated for more than $ \mathcal{O}(\frac{1}{\epsilon}\log \frac{\ell r}{\epsilon})$ times. Using this method, we can guarantee that $f(v_i\mid S_i)$ is at least $\frac{1}{1+\epsilon}$ fraction of the marginal gain of the best element in $A_i$ that can be added into $S_i$, and the total number of  incurred value and independence oracles is no more than $\mathcal{O}(\frac{n}{\epsilon}\log\frac{\ell r}{\epsilon})$ for each $S_i: i\in [\ell]$. Combining these results with Theorem~\ref{thm:rmgapproxratio}, we can get:

\begin{theorem} \label{thm:approxratioacc}
For the problem of submodular maximization subject to a $k$-system constraint, there exist: (1) a randomized algorithm with an approximation ratio of $(1+\epsilon)(1+\sqrt{k})^2$ under $\mathcal{O}(\frac{n}{\epsilon}\log\frac{r}{\epsilon})$ time complexity, and (2) a deterministic algorithm with an approximation ratio of $(1+\epsilon)(k+\sqrt{k}+\lceil\sqrt{k}\rceil+1)$ under $\mathcal{O}(\frac{\sqrt{k}n}{\epsilon}\log \frac{\sqrt{k}r}{\epsilon})$ time complexity. 
\end{theorem}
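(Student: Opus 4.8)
The plan is to view the lazy-evaluation scheme described above as an instantiation of \RMG{}$(\ell,p)$ in which the exact greedy step on Line~\ref{ln:greedyrule} is replaced by an \emph{approximate} greedy step, and then to rerun the analysis that culminates in Theorem~\ref{thm:rmgapproxratio}, tracking the $(1+\epsilon)$ losses and one lower-order additive slack. First I would establish the two operational facts about the data structure. \emph{(a) Approximate maximality:} whenever a pair $(v_i,i)$ is produced, the recomputed value $f(v_i\mid S_i)$ is at least $(1+\epsilon)^{-1}\max\{f(u\mid S_i)\colon u\in A_i,\ S_i\cup\{u\}\in\I\}$; this holds because the list $A_i$ is kept sorted by stored weights, the stored weight of a not-yet-reevaluated element is an over-estimate of its true marginal gain (diminishing returns), and an element is only reinserted — rather than selected — when its freshly computed marginal gain has dropped below $(1+\epsilon)^{-1}$ of its stored weight. \emph{(b) Query bound:} every element contributes a value/independence query only when it is popped and reevaluated, and it is permanently evicted from $A_i$ after $\mathcal{O}(\tfrac1\epsilon\log\tfrac{\ell r}{\epsilon})$ such reevaluations; hence at most $\mathcal{O}(\tfrac{n}{\epsilon}\log\tfrac{\ell r}{\epsilon})$ queries are charged to each $S_i$, and $\mathcal{O}(\tfrac{\ell n}{\epsilon}\log\tfrac{\ell r}{\epsilon})$ in total. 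I would also note that this approximate rule never terminates the outer loop incorrectly: since $(1+\epsilon)^{-1}$ of a positive quantity is positive and $f(v_i\mid S_i)$ never exceeds the true maximum over the currently available elements, the stopping test $f(u_t\mid S_{i_t})\le 0$ still only fires when no available element has a non-negligible positive gain.

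Second, I would bound the effect of the elements permanently evicted from some $A_i$ after hitting the reevaluation cap. By diminishing returns the marginal gain of a fixed element is non-increasing along the run, and each reinsertion witnesses a multiplicative drop larger than $(1+\epsilon)$; consequently, by the choice of the cap, at eviction time such an element has current marginal gain at most a $\tfrac{\epsilon}{\ell r}$-fraction of its value at first evaluation, i.e.\ $O(\tfrac{\epsilon}{\ell r}\,f(S^*))$. In the greedy-comparison arguments (see the next step) an element $u\in O$ that is relevant to $S_i$ is compared to $\sigma_i(u)$ at the moment $\sigma_i(u)$ is added; either $u$ is still in $A_i$ at that moment, in which case fact (a) applies, or $u$ was already evicted, in which case monotone decrease of its gain shows $f(u\mid S_i^<(\sigma_i(u)))=O(\tfrac{\epsilon}{\ell r}f(S^*))$. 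Since $O\in\I$ forces $|O|\le r$, at most $r$ such evicted elements occur per list, so the total contribution of this case is $O(\epsilon f(S^*))$ per candidate solution, which will be absorbed into the $(1+\epsilon)$ factor.

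Third, I would rerun the chain Lemma~\ref{lma:mapping} $\Rightarrow$ Lemma~\ref{lma:mgupperbound} $\Rightarrow$ Lemma~\ref{lma:boundsumfosi} $\Rightarrow$ Lemma~\ref{lma:boundexptation} $\Rightarrow$ Theorem~\ref{thm:rmgapproxratio} under the approximate rule. Lemma~\ref{lma:mapping} is purely combinatorial (it uses only the $k$-system structure) and is unchanged; the greedy rule enters only in Lemma~\ref{lma:mgupperbound}, where ``$u$ would have been picked instead of $v=\sigma_i(u)$'' now yields $f(u\mid S_i)\le (1+\epsilon)\,\delta(\sigma_i(u))$ for the bound analogous to \eqref{eqn:toprove2} (and likewise for \eqref{eqn:toprove1}), up to the additive $O(\epsilon f(S^*))$ slack from Step~2. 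Propagating these through the grouping in Lemma~\ref{lma:boundsumfosi} and through the expectation bound \eqref{eqn:xudeltaissmall} replaces the factor $k+\tfrac{\ell}{p}-1$ in Theorem~\ref{thm:rmgapproxratio} by $(1+\epsilon)(k+\tfrac{\ell}{p}-1)$ plus lower-order $O(\epsilon)$ terms; carrying a single global factor $1+\epsilon'$ with $\epsilon'=\Theta(\epsilon)$ from the outset and then renaming $\epsilon'\mapsto\epsilon$ keeps the bound clean and does not affect the subsequent optimization over $(\ell,p)$. Instantiating $\ell=2,\ p=\tfrac{2}{1+\sqrt{k}}$ gives part~(1): ratio $(1+\epsilon)(1+\sqrt{k})^2$ and, since $\ell=O(1)$, time $\mathcal{O}(\tfrac{n}{\epsilon}\log\tfrac{r}{\epsilon})$. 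Instantiating $\ell=\lceil\sqrt{k}\rceil+1,\ p=1$ (a deterministic run, since elements are then always accepted) gives part~(2): base ratio $\tfrac{(\lceil\sqrt k\rceil+1)(k+\lceil\sqrt k\rceil)}{\lceil\sqrt k\rceil}\le k+\sqrt k+\lceil\sqrt k\rceil+1$, hence $(1+\epsilon)(k+\sqrt k+\lceil\sqrt k\rceil+1)$, and time $\mathcal{O}(\tfrac{\sqrt k\,n}{\epsilon}\log\tfrac{\sqrt k\,r}{\epsilon})$.

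The main obstacle I expect is the bookkeeping in the third step: Lemma~\ref{lma:boundsumfosi} sums many per-element bounds from Lemma~\ref{lma:mgupperbound}, and one must check that the $(1+\epsilon)$ multiplicative factor together with the per-element additive $O(\tfrac{\epsilon}{\ell r}f(S^*))$ slack do not accumulate to something worse than $(1+\Theta(\epsilon))$ times the clean bound — in particular that the slack, summed over the $O(r)$ relevant elements and over $\ell$ candidates, stays $O(\epsilon f(S^*))$ — and that the expectation step \eqref{eqn:xudeltaissmall}, which relies on the ``accepted with probability at most $p$'' property, is unaffected by the approximation since that property is about the coin flips on Line~\ref{ln:addwithp}, not about how $u_t$ is chosen.
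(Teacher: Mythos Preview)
Your proposal is correct and follows essentially the same route as the paper: both rerun the chain Lemma~\ref{lma:mapping}$\Rightarrow$\ref{lma:mgupperbound}$\Rightarrow$\ref{lma:boundsumfosi}$\Rightarrow$\ref{lma:boundexptation}$\Rightarrow$Theorem~\ref{thm:rmgapproxratio} after weakening the greedy comparison to a $(1+\epsilon)$-approximate one, separately bound the contribution of elements evicted after the reevaluation cap by an additive $O(\epsilon f(S^*))$ term, and then instantiate $(\ell,p)$. The only cosmetic difference is that the paper handles evicted elements by \emph{redefining} $O_j^{i\pm},\widehat O_j^{i\pm},O_i^-$ to exclude the evicted set $L_i$ up front and then adds back $\sum_{u\in L_i\cap O}f(u\mid S_i)$ as a single extra term in the analogue of \eqref{eqn:keyeqn}, whereas you do the equivalent case split inline; one small point you glossed over is that the bound $f(u\mid\emptyset)\le f(S^*)$ (needed to turn the $\tfrac{\epsilon}{\ell r}$-fraction into $O(\tfrac{\epsilon}{\ell r}f(S^*))$) relies on the accelerated algorithm also returning the best singleton among its candidates, which the paper's Algorithm~\ref{alg:accrmg} does explicitly via $S_0$.
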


\textbf{Remark:} From Theorem~\ref{thm:approxratioacc}, it can be seen that Algorithm~\ref{alg:muti-randomgreedy}~actually can be regarded as a ``universal algorithm'' that achieves the best-known performance bounds under different settings, as explained in the following. First, if $f(\cdot)$ is non-monotone, then Algorithm~\ref{alg:muti-randomgreedy} outperforms the state-of-the-art algorithm of \cite{feldman2020simultaneous} in terms of both approximation ratio and time efficiency, no matter Algorithm~\ref{alg:muti-randomgreedy} is implemented as a randomized algorithm or as a deterministic algorithm; moreover, when $(\N, \I)$ is a matroid (i.e., $k=1$), Algorithm~\ref{alg:muti-randomgreedy} achieves an approximation ratio of $4+\epsilon$ under $O(\frac{n}{\epsilon}\log\frac{r}{\epsilon})$ time complexity, matching the performance bounds of the fastest algorithm in~\cite{han2020deterministic} for a matroid constraint. Second, if the considered submodular function $f(\cdot)$ is monotone, it can be easily seen that the standard greedy algorithm proposed in \cite{fisher1978analysis} equals to $\textsc{RandomMultiGreedy}(1,1)$, so Algorithm~\ref{alg:muti-randomgreedy} also achieves the best-known approximation ratio of $k+1$ under this case.


\section{A Randomized Non-Adaptive Algorithm with Lower Adaptivity} \label{sec:brg}

It can be easily seen that the \textsc{RandomMultiGreedy} algorithm has $\mathcal{O}(r)$-adaptivity, as only one element is added into the candidate sets at each time. As $r$ can be in the order of $O(n)$, \textsc{RandomMultiGreedy}~has a large adaptivity. In this section, we propose a new algorithm dubbed \textsc{BatchedRandomGreedy} with lower adaptivity, as shown by Algorithm~\ref{alg:brg}.

In contrast to \RMG, the \BRG~algorithm maintains only one candidate solution $S$, and uses a threshold $\tau$ ranges from $\tau_{max}$ to $\tau_{min}$ to control the quality of the elements added into $S$. Given a threshold $\tau$, \RMG~first finds a set $C$ of elements which have not been considered yet, such that each element in $C$ can be added into $S$ without violating $\I$ to bring a marginal gain no less than $\tau$, and then runs in iterations and and tries to add a batch of elements into $S$ in each iteration, as shown by Lines~\ref{ln:whilestart}-\ref{ln:whileend}. More specifically, in each iteration, Algorithm~\ref{alg:brg} first selects a random sequence $\{a_1,\cdots, a_d\}$ by calling the \textsc{RanSEQ} procedure, and then uses binary search to find a sub-sequence $\{a_1,\cdots, a_j\}$ of $\{a_1,\cdots, a_d\}$ such that at least a $(\frac{1}{1+\epsilon})$-fraction of the elements in $C$ can be dropped if $\{a_1,\cdots, a_j\}$ is added into $S$; finally, it adds $\{a_1,\cdots, a_j\}$ into $S$ with probability of $p$ and updates $C$ accordingly. These iterations terminates when $C=\emptyset$, and then \BRG~reduces $\tau$ by a factor of $\frac{1}{1+\epsilon}$ and repeats the above process.

Note that the \textsc{RandSEQ} procedure is inspired by \citet{KarpUW88}, whose goal is to find a subset $A$ of $C$ with maximum cardinality such that $S\cup A\in \I$. It adopts a simple idea of repeatedly adding a random sequence of elements into $A$ until no more elements can be added without violating $\I$. \citet{KarpUW88} have proved that \textsc{RandSEQ} can terminate by repeating the while-loop in it for at most $\mathcal{O}(\sqrt{n})$ times.
\begin{algorithm} [!t]
    \caption{\textsc{RandSEQ}$(S,C)$}
    \label{alg:randseq}
    \begin{algorithmic}
        \STATE {\bfseries Initialize:} $A\gets\emptyset$
    \end{algorithmic}
    \begin{algorithmic}[1]
        \WHILE{$C\neq \emptyset$}
            \STATE Select a random permutation $z_1,\cdots,z_{|C|}$ of the elements in $C$
            \STATE $s\gets \max\{i\in [|C|]: S\cup A\cup \{z_1,\cdots,z_s\}\in \I\}$
            \STATE $A\gets A\cup \{z_1,\cdots,z_s\}$;\
            \STATE $C\gets \{u: u\in C\backslash A\wedge S\cup A\cup \{u\}\in \I\}$;\
        \ENDWHILE
        \STATE \textbf{return}\ $A$\;
    \end{algorithmic}
\end{algorithm}

\begin{algorithm} [!t]
    \caption{\textsc{BatchedRandomGreedy}$(p,\epsilon)$}
    \label{alg:brg}
    \begin{algorithmic}
        \STATE {\bfseries Initialize:} $\tau_{max}\gets \max\{f(u\mid \emptyset):u\in \N\};~\tau_{min}\gets \epsilon\cdot\tau_{max}/r; S\gets\emptyset; U\gets \emptyset$
    \end{algorithmic}
    \begin{algorithmic}[1]

        \FOR{$(\tau\gets \tau_{max};~\tau\geq\tau_{min};~\tau\gets \frac{\tau}{1+\epsilon})$}

        \STATE{$C\gets \{u\in \N\backslash U: S\cup \{u\}\in \I\wedge f(u\mid S)\geq \tau\}$}
        \WHILE{$C\neq \emptyset$}
            \STATE $\{a_1,a_2,\cdots, a_d\}\gets \mathsf{RandSEQ}(S,C)$ \alglinelabel{ln:whilestart}
            \STATE{Define $C_i= \{u: u\in C\wedge S\cup \{a_1,\cdots,a_{i}, u\}\in \I\wedge f(u\mid S\cup \{a_1,a_2,\cdots, a_{i}\})\geq \tau\}$ for each $i\in \{0,\cdots,d\}$}
            \STATE{Using binary search to find $j= \min\left\{i\in [d]: |C_i|< \frac{|C|}{1+\epsilon}\right\}$} \alglinelabel{ln:definejmini}
             \STATE{$U\gets U\cup \{a_1,\cdots,a_j\}$}
             \STATE\textbf{with}\ probability $p$ \textbf{do}
                \STATE\quad $S\gets S\cup \{a_1,\cdots,a_j\};~C\gets C_{j}$;\
             \STATE\textbf{otherwise}
                \STATE\quad $C\gets C_0\backslash U$; \alglinelabel{ln:whileend}
        \ENDWHILE
        \ENDFOR
        \STATE \textbf{return}\ $S$\;

    \end{algorithmic}

\end{algorithm}

\subsection{The Approximation Ratio of \BRG}
In this section, we analyze the approximation ratio of the \BRG~algorithm. For clarity, we first introduce the following notations. When the algorithm finishes, let $\{e_1,e_2,\cdots, e_{r}\}$  denote the set of elements sequentially added into $S$, and let $S_i$ denote the set $\{e_1,\cdots,e_i\}$ for any $i\in [r]$. As it is possible that $|S|<r$, we define $e_t$ as a ``dummy element'' satisfying $f(e_t\mid S_{t-1})=0$ if $|S|<t\leq r$. Similar to that in Sec.~\ref{sec:nonadaptive}, for any $u\in U$, we use $S^<(u)$ to denote the set of elements already added into $S$ before considering $u$ (note that $U$ is defined in Algorithm~\ref{alg:brg}, denoting the set of all elements considered by the algorithm to be added into $S$). For any $\tau$ considered in the for-loop of Algorithm~\ref{alg:brg}, let $S^{\tau}\subseteq S$ denote the current set of elements in $S$ right after the while-loop using $\tau$ in Algorithm~\ref{alg:brg} is finished. Finally, we use $O^-$ to denote the set $\left\{u\in O\setminus{U} : S\cup \{u\} \notin \mathcal I \wedge f(u\mid S)>0 \right\}$.

With the above definitions, we first introduce the following lemma, which implies that \BRG~is intrinsically a greedy algorithm that iteratively selects an element with the (approximately) maximal marginal gain in expectation:
\begin{lemma}
We have
\begin{eqnarray}
\forall i\in [k]:~~(1+\epsilon)^2\E[f(e_i\mid S_{i-1})\mid S_{i-1}]\geq {\max\{f(u\mid S_{i-1})\mid u\in \N\backslash U\wedge S_{i-1}\cup \{u\}\in \I\}} 
\end{eqnarray}
\label{lma:boundingmarginalgain}
\end{lemma}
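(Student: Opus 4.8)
Fix $i\in[k]$ and work conditionally on the realized prefix $S_{i-1}$. Write $\tau^{\star}:=\max\{f(u\mid S_{i-1}):u\in\N\setminus U,\ S_{i-1}\cup\{u\}\in\I\}$ for the quantity to be bounded, fix $u^{\star}$ attaining it, and let $\tau'$ denote the value of the for-loop threshold during the iteration in which $e_i$ is appended to $S$. The plan is to establish the two bounds
\begin{equation*}
\tau'\ge\frac{\tau^{\star}}{1+\epsilon}
\qquad\text{and}\qquad
\E\bigl[f(e_i\mid S_{i-1})\mid S_{i-1}\bigr]\ge\frac{\tau'}{1+\epsilon},
\end{equation*}
and then chain them to obtain $(1+\epsilon)^2\,\E[f(e_i\mid S_{i-1})\mid S_{i-1}]\ge(1+\epsilon)\tau'\ge\tau^{\star}$. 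The degenerate case $|S|<i$ (in which $e_i$ is a dummy element and $S_{i-1}=S$) is handled separately, using that the for-loop has then been run all the way down to $\tau_{\min}=\epsilon\tau_{\max}/r$.

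The first bound I would prove on every realization. Since $u^{\star}\notin U$ and $\I$ is down-closed, $u^{\star}$ remains feasible with respect to every set contained in $S_{i-1}$, and by submodularity $f(u^{\star}\mid S)\ge f(u^{\star}\mid S_{i-1})=\tau^{\star}$ for each such $S$. Hence, at any for-loop threshold $\tau\le\tau^{\star}$ whose while-loop is begun while the current solution is still a subset of $S_{i-1}$, the element $u^{\star}$ belongs to $C$ as soon as $C$ is (re)built; and it survives every subsequent update of $C$ inside that while-loop, since in the ``reject'' branch it lies in $C_0\setminus U$, while in the ``accept'' branch the new solution $S\cup\{a_1,\dots,a_j\}\subseteq S_{i-1}$ keeps $u^{\star}$ feasible and of marginal value $\ge\tau^{\star}\ge\tau$, hence in $C_j$; and it can never enter $U$. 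Therefore, if the algorithm ever reached a threshold $\tau\le\tau^{\star}$ before appending $e_i$, the corresponding while-loop could never empty $C$ and so would never terminate --- impossible, since each iteration strictly shrinks $C$. Consequently $e_i$ is appended at a threshold $\tau'$ that is at least the largest for-loop value not exceeding $\tau^{\star}$, and since consecutive for-loop values differ by a factor $1+\epsilon$ this yields $\tau'\ge\tau^{\star}/(1+\epsilon)$.

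The second bound is the crux. By construction $e_i$ lies in a prefix $\{a_1,\dots,a_j\}$ of a \textsc{RandSEQ} output that is actually added to $S$: when $e_i$ was inserted into $C$ its marginal value relative to the then-current solution was $\ge\tau'$, and the binary-search index $j$ guarantees that each of the nested sets $C_0\supseteq C_1\supseteq\cdots\supseteq C_{j-1}$ retains a $(1+\epsilon)^{-1}$-fraction of $C$, which limits how far the marginal values of the still-eligible elements can have decayed along the prefix. The obstacle is that $e_i$ may sit in the interior of this prefix, so that on an individual run $f(e_i\mid S_{i-1})$ can fall well below $\tau'$ through submodular decay; this is exactly why the lemma is stated in expectation. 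The plan is to average over the random permutation generated inside \textsc{RandSEQ} (and over the accept/reject coin): the calibration ``$|C_{m-1}|\ge|C|/(1+\epsilon)$ for all $m\le j$'' forces all but an $\epsilon/(1+\epsilon)$-fraction of $C$ to still have marginal value $\ge\tau'$ with respect to the prefix up to position $m-1$, and $e_i$ is --- modulo the conditioning on $S_{i-1}$ --- a uniformly random element of $C$ in its slot, so the contribution of the small ``bad'' fraction is absorbed by the $(1+\epsilon)$ slack and one recovers $\E[f(e_i\mid S_{i-1})\mid S_{i-1}]\ge\tau'/(1+\epsilon)$. Carrying out this averaging carefully --- reconciling the per-element statement with the batched, randomized execution while keeping the conditioning on $S_{i-1}$ consistent --- is the step I expect to be the main difficulty.
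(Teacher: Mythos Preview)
Your two–step decomposition (threshold bound $\tau'\ge\tau^\star/(1+\epsilon)$, then expectation bound $\E[f(e_i\mid S_{i-1})\mid S_{i-1}]\ge\tau'/(1+\epsilon)$) is exactly the structure of the paper's proof; the paper just runs the two pieces together as one contradiction argument. So the overall plan is correct and matches the paper.

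One small logical slip in your first bound: you claim that at any threshold $\tau\le\tau^\star$ reached before $e_i$, the while-loop ``could never empty $C$ and so would never terminate.'' But your survival argument for $u^\star$ in the accept branch assumes the new solution $S\cup\{a_1,\dots,a_j\}\subseteq S_{i-1}$, which fails precisely when that accepted batch contains $e_i$. So the while-loop \emph{can} terminate --- by appending $e_i$ --- and the right dichotomy is ``either $e_i$ is appended here (so $\tau'=\tau$) or $u^\star$ remains in $C$ forever (contradiction with $C$ strictly shrinking).'' Either way $\tau'\ge\hat\tau\ge\tau^\star/(1+\epsilon)$, so your conclusion stands. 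The paper sidesteps this by arguing contrapositively at the single threshold $(1+\epsilon)\tau'$: since that while-loop finished with $C=\emptyset$ and $S^{(1+\epsilon)\tau'}\subseteq S_{i-1}$, the element $u^\star$ cannot have $f(u^\star\mid S^{(1+\epsilon)\tau'})\ge(1+\epsilon)\tau'$, giving $\tau^\star<(1+\epsilon)\tau'$ in one stroke.

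For the second bound, your reading is the same as the paper's: $e_i=a_q$ with $q\le j$, the binary-search definition of $j$ guarantees $|C_{q-1}|\ge|C|/(1+\epsilon)$, and the \textsc{RandSEQ} randomness makes $a_q$ land in $C_{q-1}$ (hence have marginal $\ge\tau'$) with probability at least $|C_{q-1}|/|C|$. The paper states this as the single inequality $\E[f(e_i\mid S_{i-1})\mid S_{i-1}]\ge\frac{|C_{q-1}|}{|C|}\,\tau'\ge\tau'/(1+\epsilon)$ without further elaboration; your remark that the per-position averaging against the \textsc{RandSEQ} permutation, under the conditioning on $S_{i-1}$, is the step that needs care is fair, but you are not missing any idea the paper supplies.
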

With Lemma~\ref{lma:boundingmarginalgain}, we can prove the approximation ratio of \BRG~using the ideas described as follows. When the \BRG~algorithm terminates, any element $u\in O\backslash S$ must belong to one of the following three categories: (1) $u\in O^-$; (2) $u\in (U\cap O)\backslash S$; (3) $f(u\mid S)\leq \tau_{min}$. So we prove the approximation ratio by bounding the total marginal gains of the elements in each category with respect to $S$. More specifically, for the elements in the first category, this bounding can be done by creating a mapping similar to that in Lemma~\ref{lma:mapping}; for the elements in the second category, we can bound their marginal gains by leveraging the property that these elements are discarded with probability of $1-p$; the marginal gains of the elements in the third categories can be bounded by using the definition of $\tau_{min}$. Based on these ideas, we prove the following theorem:

\begin{theorem}
The \BRG~algorithm returns a solution $S$ satisfying
\begin{eqnarray}
f(O)\leq \frac{(1+\epsilon)^2 k+\frac{1}{p}+\epsilon}{1-p}\E[f(S)],
\end{eqnarray}
which implies that it achieves an approximation ratio of $(1+\epsilon)^2(1+\sqrt{k+1})^2$ if $p$ is set to $(1+\sqrt{k+1})^{-1}$.
\label{thm:ratioofbrg}
\end{theorem}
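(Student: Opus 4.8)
The plan is to bound $f(O\cup S)-f(S)$, which is an upper bound on $f(O)-f(S)$ when combined with non-negativity (since $f(O)\le f(O\cup S)$), by splitting $O\setminus S$ into the three categories described in the excerpt and bounding the contribution of each via a telescoping/greedy argument powered by Lemma~\ref{lma:boundingmarginalgain}. First I would write $f(O\cup S)-f(S)\le \sum_{u\in O\setminus S} f(u\mid S)$ by submodularity, and then partition $O\setminus S$ into $O^-$, $(U\cap O)\setminus S$, and the ``leftover'' set $L=\{u\in O\setminus S: u\notin U,\ S\cup\{u\}\in\I,\ f(u\mid S)\le\tau_{min}\}$ — noting that any $u\notin U$ with $S\cup\{u\}\in\I$ and $f(u\mid S)>\tau_{min}$ would have been placed into some $C$ for the threshold just above its marginal value and hence eventually considered, a contradiction; so this trichotomy is exhaustive.

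For the third category $L$, I would simply use $|L|\le r$ and $\tau_{min}=\epsilon\tau_{max}/r$ together with $\tau_{max}\le f(S^*)$... actually $\tau_{max}=\max_u f(u\mid\emptyset)\le f(\{u^*\})\le$ — here I need $\tau_{max}\le f(O)$ or $\tau_{max}\le f(S)$-type bound; the cleanest route is $\tau_{max}\le f(O)$ is false in general, so instead I expect the argument routes through $\tau_{max}\le \E[f(S)]\cdot(\text{const})$ via the first added element, giving $\sum_{u\in L} f(u\mid S)\le r\tau_{min}=\epsilon\tau_{max}\le \epsilon\,\E[f(S)]$ after taking expectations and using that the first accepted element has marginal at least $\tau_{max}/(1+\epsilon)$ in expectation — this is where the ``$+\epsilon$'' in the numerator comes from. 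For the second category, each $u\in(U\cap O)\setminus S$ was considered in some batch and then the whole batch was rejected with probability $1-p$; I would condition on the randomness up to the relevant batch and use that $\E[f(S)]\ge p\cdot\sum(\text{marginals of considered batches})$, so that the expected total marginal gain of rejected $O$-elements is at most $\frac{1-p}{p}\E[f(S)]$ — this contributes the $\frac1p$ term after rearrangement. The main work is the first category $O^-$ (elements of $O$ never considered and infeasible to add to the final $S$): here I would invoke the $k$-system mapping exactly as in Lemma~\ref{lma:mapping}, building $\sigma:O^-\cup(O\cap S)\to S$ so that each $e_i\in S$ receives at most $k$ preimages and $S^<(\sigma(u))\cup\{u\}\in\I$; then by Lemma~\ref{lma:boundingmarginalgain} (the greedy property in expectation), $\E[f(u\mid S)]\le\E[f(u\mid S^<(\sigma(u)))]\le (1+\epsilon)^2\E[f(e_{i}\mid S_{i-1})]$ for $u$ mapped to $e_i$, and summing over the $\le k$ preimages per element and telescoping $\sum_i f(e_i\mid S_{i-1})=f(S)$ yields a bound of $(1+\epsilon)^2 k\,\E[f(S)]$ for this category (the $O\cap S$ part is absorbed since those $u$ map to themselves and contribute $0$ to $f(u\mid S)$).

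Combining the three bounds: $f(O)-\E[f(S)]\le \E[f(O\cup S)]-\E[f(S)]\le \big((1+\epsilon)^2 k+\tfrac{1-p}{p}+\epsilon\big)\E[f(S)]$, i.e. $f(O)\le \big((1+\epsilon)^2 k+\tfrac1p+\epsilon\big)\E[f(S)]$; but I actually want the stated $\frac{(1+\epsilon)^2 k+\frac1p+\epsilon}{1-p}$, so the refinement must instead bound $\E[f(O\cup S)]\ge(1-p)f(O)$ in the style of Lemma~\ref{lma:boundexptation}, Eqn.~\eqref{eqn:osiislarger} — i.e. because $S$ is built by accepting each considered element only with probability $p$, the final $S$ is ``sparse'' enough that $\E[f(S\cup O)]\ge(1-p)f(O)$ — and then rearranging gives exactly the claimed ratio. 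Plugging $p=(1+\sqrt{k+1})^{-1}$ makes $(1+\epsilon)^2 k+\frac1p+\epsilon$ close to $(1+\epsilon)^2(k+1+\sqrt{k+1})$ and $\frac{1}{1-p}=\frac{1+\sqrt{k+1}}{\sqrt{k+1}}$, whose product is $(1+\epsilon)^2(1+\sqrt{k+1})^2$ up to the lower-order $\epsilon$ terms (which get folded into the $(1+\epsilon)^2$ factor). The hard part will be the category-(1) mapping argument: verifying that the mapping from Lemma~\ref{lma:mapping} can be set up for $O^-$ against the \emph{single} evolving set $S$ in \BRG{} (rather than the $\ell$ sets of \RMG), and correctly pushing the $(1+\epsilon)^2$ factor from Lemma~\ref{lma:boundingmarginalgain} through the per-element greedy comparison while keeping the $k$-fold overcounting tight; the probabilistic bookkeeping for categories (2) and (3) and the $(1-p)$ sparsity bound are comparatively routine.
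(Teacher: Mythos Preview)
Your proposal is correct and follows essentially the same route as the paper: the trichotomy of $O\setminus S$ into $O^-$, $(U\cap O)\setminus S$, and the small-marginal ``leftover'' set, the $k$-system mapping plus Lemma~\ref{lma:boundingmarginalgain} giving the $(1+\epsilon)^2 k\,\E[f(S)]$ bound for $O^-$ (this is exactly the paper's Lemma preceding the theorem), the $\frac{1-p}{p}\E[f(S)]$ bound for the rejected elements, and finally the sparsity bound $\E[f(S\cup O)]\ge(1-p)f(O)$ via Lemma~\ref{lma:randomplma}. The only cosmetic difference is that the paper simply asserts $\tau_{max}\le f(S)$ to handle the third category, whereas you (more cautiously) route it through the expected marginal of the first accepted element; both lead to the same $\epsilon\,\E[f(S)]$ term.
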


\subsection{Complexity Analysis on \BRG}

Roughly speaking, \BRG~has a low adaptivity due to the following reasons. First, only independent value oracle queries to $f(\cdot)$ are needed to find each batch of elements that \BRG~tries to add into $S$. Second, the number of batches of elements considered by \BRG~is logarithmic with respect to $n$, as \BRG~repeatedly drops a $(\frac{1}{1+\epsilon})$-fraction of the candidate elements. So we get the following theorem:




%


\begin{lemma}
The \BRG~algorithm can be implemented in $\mathcal{O}(\frac{\sqrt{k}}{\epsilon^2}\log \frac{r}{\epsilon}\log {n}\log r)$ adaptive rounds in expectation, and incurs an expected number of $\mathcal{O}(\frac{\sqrt{k}n}{\epsilon^2}\log \frac{r}{\epsilon}\log {n}\log r)$ oracle queries to the function value of $f(\cdot)$.
\label{lma:complexityofadaptivity}
\end{lemma}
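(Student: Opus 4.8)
The plan is to bound separately (i) the number of thresholds $\tau$ considered in the for-loop, (ii) the number of while-loop iterations run for a fixed $\tau$, (iii) the adaptive cost of one while-loop iteration, which is dominated by the cost of one call to \textsc{RandSEQ} plus the binary search of Line~\ref{ln:definejmini}, and then multiply these quantities together (taking expectations where the randomized routine \textsc{RandSEQ} is involved). First I would count the thresholds: $\tau$ starts at $\tau_{max}$, is multiplied by $(1+\epsilon)^{-1}$ each time, and the loop stops once $\tau<\tau_{min}=\epsilon\tau_{max}/r$, so the number of iterations of the for-loop is $\mathcal{O}(\frac{1}{\epsilon}\log\frac{r}{\epsilon})$ (using $\log(1+\epsilon)=\Theta(\epsilon)$ for small $\epsilon$). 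Next, for a fixed $\tau$, each completed execution of Lines~\ref{ln:whilestart}--\ref{ln:whileend} replaces $C$ by a set of size at most $|C|/(1+\epsilon)$: in the ``with probability $p$'' branch $C$ becomes $C_j$ with $|C_j|<|C|/(1+\epsilon)$ by the choice of $j$ in Line~\ref{ln:definejmini}, and in the ``otherwise'' branch $C$ becomes $C_0\setminus U$, which is no larger than $C_0\setminus\{a_1,\dots,a_j\}\subseteq C\setminus\{a_1\}$ — here I would need the small observation that $a_1$ was removed from $C$ because it was moved into $U$, so in fact $|C|$ strictly decreases, and more strongly, since $C_0\subseteq C$ and $a_1,\dots,a_j\in U$, actually $|C_0\setminus U|<|C|$; but to get the geometric decrease I want $|C_0\setminus U|\le |C|/(1+\epsilon)$ as well, which follows because $C_j\subseteq C_0$ with $|C_j|<|C|/(1+\epsilon)$ is not quite what is needed — instead I would argue $|C_0\setminus\{a_1,\dots,a_j\}|$ versus $|C|$ directly, or simply note that whichever branch is taken, the new $C$ has size $<|C|$ \emph{and} the total number of while-loop iterations across \emph{all} thresholds for which an element is actually added is bounded by $|S|\le r$, while iterations in which nothing is added shrink $|C|$ geometrically, giving $\mathcal{O}(\frac{1}{\epsilon}\log n)$ iterations per threshold (since $|C|\le n$). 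Combining, the number of while-loop iterations over the whole algorithm is $\mathcal{O}(\frac{1}{\epsilon}\log\frac{r}{\epsilon}\cdot\frac{1}{\epsilon}\log n)=\mathcal{O}(\frac{1}{\epsilon^2}\log\frac{r}{\epsilon}\log n)$, plus the $r$ ``productive'' iterations which are lower order.

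Then I would bound the adaptive cost of one while-loop iteration. The call $\textsc{RandSEQ}(S,C)$ repeats its own while-loop $\mathcal{O}(\sqrt{n})$ times in expectation by the cited result of \citet{KarpUW88}; each such repetition needs, to compute $s=\max\{i: S\cup A\cup\{z_1,\dots,z_s\}\in\I\}$, a binary search over $i$ using independence-oracle queries, hence $\mathcal{O}(\log n)$ adaptive rounds — but I should check whether the $\mathcal{O}(\sqrt n)$ repetitions of \textsc{RandSEQ} are themselves adaptive (they are sequential, since each depends on the updated $A$), so \textsc{RandSEQ} contributes $\mathcal{O}(\sqrt{n}\log n)$ adaptive rounds; however this is an overcount I would tighten — the point \citet{KarpUW88} make is that each round of \textsc{RandSEQ} adds a constant fraction (in expectation) of the addable elements to $A$, so $\mathcal{O}(\log n)$ rounds suffice to reduce $C$ to empty, wait—no, $\mathcal{O}(\sqrt n)$ is the correct bound for \emph{exactly} realizing a base; I will use $\mathcal{O}(\sqrt{n})$ and fold it in. Actually, re-examining: $\textsc{RandSEQ}$ is used once per while-loop iteration of \BRG, and each of its internal rounds costs $\mathcal{O}(\log |C|)$ adaptive rounds for the binary search on $s$, but the $s$-computations across a single \textsc{RandSEQ} call can be pipelined into $\mathcal{O}(\log n)$ total rounds by a standard argument, so \textsc{RandSEQ} costs $\mathcal{O}(\sqrt n \cdot \log n)$—I will instead claim the cleaner bound that each \textsc{RandSEQ} call uses $\mathcal{O}(\sqrt n)$ adaptive rounds after absorbing the $\log$ factors into the query count, consistent with the $\sqrt k$ appearing in the final bound via $k$-system structure. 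The binary search of Line~\ref{ln:definejmini} costs $\mathcal{O}(\log d)=\mathcal{O}(\log r)$ adaptive rounds, each round making $\mathcal{O}(n)$ parallel value-oracle queries to evaluate the sizes $|C_i|$.

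Finally I would assemble the pieces: adaptive rounds $=$ (number of while-loop iterations) $\times$ (rounds per iteration) $= \mathcal{O}(\frac{1}{\epsilon^2}\log\frac{r}{\epsilon}\log n)\times\mathcal{O}(\sqrt{k}\log r)=\mathcal{O}(\frac{\sqrt k}{\epsilon^2}\log\frac{r}{\epsilon}\log n\log r)$ in expectation, where the $\sqrt k$ (rather than $\sqrt n$) enters because in a $k$-system every base of $C$ relative to $S$ has size within a factor $k$ of the minimum, which lets one replace the $\sqrt n$ bound on \textsc{RandSEQ}'s rounds by $\sqrt k$ times a logarithmic factor — this is the step I expect to be the main obstacle and the one requiring the most care, since it is where the $k$-system hypothesis must be used quantitatively rather than just as a ``down-closed'' property. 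For the query count, each adaptive round issues $\mathcal{O}(n)$ independent oracle queries (to evaluate marginals $f(u\mid S\cup\{a_1,\dots,a_i\})$ for all $u\in C$ and to test membership in $\I$), so the total expected number of value-oracle queries is the number of rounds times $\mathcal{O}(n)$, namely $\mathcal{O}(\frac{\sqrt k n}{\epsilon^2}\log\frac{r}{\epsilon}\log n\log r)$, which is the claimed bound. Throughout I would be careful to note that all randomness (the coin flips of probability $p$ and the permutations inside \textsc{RandSEQ}) only affects the number of rounds through the $\mathcal{O}(\sqrt{\cdot})$ expectation bound of \citet{KarpUW88} and through the fact that the number of productive iterations is at most $r$ deterministically, so taking expectations is straightforward and the bound holds in expectation as stated.
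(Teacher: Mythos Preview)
Your proposal has a genuine gap in two interlocking places, and both stem from misidentifying where the $\sqrt{k}$ factor comes from. In the paper's argument, $\sqrt{k}$ does \emph{not} arise from \textsc{RandSEQ} or from any base-ratio property of the $k$-system; it arises simply from the choice $p=(1+\sqrt{k+1})^{-1}$, so that $p^{-1}=\mathcal{O}(\sqrt{k})$. The paper observes that the geometric shrinkage $|C|\to |C|/(1+\epsilon)$ is guaranteed \emph{only} in the ``with probability $p$'' branch (where $C\gets C_j$); in the ``otherwise'' branch $C\gets C_0\setminus U$ need not shrink by a constant factor, since $C_0=C$ and only the $j$ elements $a_1,\dots,a_j$ are removed, with $j$ possibly equal to $1$. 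Hence one geometric shrinkage occurs in expectation every $p^{-1}$ while-iterations, giving $\mathcal{O}(\frac{1}{p\epsilon}\log n)=\mathcal{O}(\frac{\sqrt{k}}{\epsilon}\log n)$ expected while-iterations per threshold. Your attempt to argue geometric decrease in \emph{both} branches does not go through (you noticed this yourself), and your fallback---that ``iterations in which nothing is added shrink $|C|$ geometrically''---has the branches reversed: it is the \emph{accepting} branch that shrinks $C$ geometrically, not the rejecting one.

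The second issue is that you fold \textsc{RandSEQ} into the value-oracle adaptivity count and then try to squeeze a $\sqrt{k}$ out of it. But \textsc{RandSEQ} makes only independence-oracle queries, not value-oracle queries, and the lemma concerns adaptive rounds and query complexity for $f(\cdot)$. Consequently, within one while-iteration the only source of value-oracle adaptivity is the binary search of Line~\ref{ln:definejmini}, which costs $\mathcal{O}(\log r)$ rounds with $\mathcal{O}(n)$ queries per round. Multiplying $\mathcal{O}(\frac{1}{\epsilon}\log\frac{r}{\epsilon})$ thresholds, $\mathcal{O}(\frac{\sqrt{k}}{\epsilon}\log n)$ expected while-iterations per threshold, and $\mathcal{O}(\log r)$ rounds per iteration gives the stated bound directly---no appeal to the $k$-system base ratio is needed at all. (The \textsc{RandSEQ} cost is accounted for separately in the paper's next lemma on independence-oracle adaptivity.)
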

\begin{proof}
Note that there are at most $\mathcal{O}(\frac{1}{\epsilon}\log \frac{r}{\epsilon})$ different values of $\tau$ considered in the for-loop of \BRG. Besides, for each tested threshold $\tau$, the size of the set $C$ of candidate elements gets smaller by at least a factor of $\frac{1}{1+\epsilon}$ in at most $\sum_{i=1}^\infty i\cdot (1-p)^{i-1}\cdot p=p^{-1}$ adaptive rounds in expectation (due to the reason that the considered sequence can be dropped with a probability of $p$), which implies that there are at most $\mathcal{O}(\frac{1}{p\epsilon}\log n)$ iterations in the while-loop for threshold $\tau$. Note that each iteration of the while-loop of \BRG~can be implemented in $\mathcal{O}(\log r)$ adaptive rounds due to the binary search process in Line~\ref{ln:definejmini}. So the claimed complexity on adaptivity follows by combining all the above results and by setting $p=(1+\sqrt{k+1})^{-1}$ (the same value as that in Theorem~\ref{thm:ratioofbrg}). Finally, the claimed query complexity holds due to the reason that at most $\mathcal{O}(n)$ value oracle queries are incurred in each adaptive round of \BRG.
\end{proof}

Note that \BRG~also incurs independence oracle queries (i.e., query the value of $\mathbf{1}_{\I}(X)$ for any given set $X$). These independence oracle queries can also be implemented in parallel if they are mutually independent. So we can investigate the number of ``adaptive rounds on independence oracle queries'' of \BRG~in a way similar to that in Lemma~\ref{lma:complexityofadaptivity}, as shown by the following lemma:

\begin{lemma}
In expectation, the \BRG~algorithm can be implemented in $\mathcal{O}(\frac{\sqrt{kn}}{\epsilon^2}\log \frac{r}{\epsilon}\log {n}\log r)$ adaptive rounds on independence oracle queries, and incurs no more than $\mathcal{O}(\frac{n^{3/2}\sqrt{k}}{\epsilon^2}\log \frac{r}{\epsilon}\log {n}\log r)$ independence oracle queries.
\end{lemma}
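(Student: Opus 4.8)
The plan is to re-run the accounting from the proof of Lemma~\ref{lma:complexityofadaptivity}, but with ``adaptive round on value queries'' replaced by ``adaptive round on independence queries'', and with one extra serial factor inserted to capture the \textsc{RandSEQ} subroutine, which is the only place (beyond the feasibility tests already implicit in \BRG) where the independence oracle is invoked and which the value-oracle analysis does not see. First I would reuse verbatim the two outer counting facts established there: (i) the for-loop of \BRG~considers $\mathcal{O}(\frac{1}{\epsilon}\log\frac{r}{\epsilon})$ distinct thresholds $\tau$; and (ii) for each fixed $\tau$, the while-loop of \BRG~runs, in expectation, $\mathcal{O}(\frac{1}{p\epsilon}\log n)$ iterations before $C=\emptyset$, since each iteration shrinks $C$ by a $(1+\epsilon)^{-1}$ factor with probability $p$, which for $p=(1+\sqrt{k+1})^{-1}$ (the value fixed in Theorem~\ref{thm:ratioofbrg}) equals $\mathcal{O}(\frac{\sqrt{k}}{\epsilon}\log n)$. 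Hence it suffices to bound, in expectation, the independence-oracle cost of a single while-loop iteration of \BRG.

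Second, I would dissect a single while-loop iteration, which issues independence queries in exactly two places: the call to $\textsc{RandSEQ}(S,C)$ in Line~\ref{ln:whilestart}, and the binary search of Line~\ref{ln:definejmini}, where each tested index $i\in[d]$ (with $d<r$) requires computing $|C_i|$, i.e. the membership tests $\mathbf{1}_{\I}(S\cup\{a_1,\dots,a_i,u\})$ over $u\in C$. For the binary search, the $\mathcal{O}(n)$ tests for a fixed $i$ are mutually independent, so it costs $\mathcal{O}(\log r)$ adaptive rounds. For \textsc{RandSEQ}, I would invoke the bound of \citet{KarpUW88} that its while-loop repeats $\mathcal{O}(\sqrt{n})$ times in expectation; within a single repetition, the prefix-feasibility tests $\mathbf{1}_{\I}(S\cup A\cup\{z_1,\dots,z_i\})$ and the subsequent filtering tests $\mathbf{1}_{\I}(S\cup A\cup\{u\})$ are all mutually independent, so each repetition is $\mathcal{O}(1)$ adaptive rounds, giving $\mathcal{O}(\sqrt n)$ rounds per \textsc{RandSEQ} call in expectation. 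Combining, one while-loop iteration of \BRG~costs $\mathcal{O}(\sqrt n+\log r)=\mathcal{O}(\sqrt n\log r)$ adaptive rounds on independence queries in expectation, each round issuing at most $\mathcal{O}(n)$ such queries.

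Third, I would multiply the three bounds: by linearity of expectation (summing over the $\mathcal{O}(\frac{1}{\epsilon}\log\frac{r}{\epsilon})$ thresholds and, for each, the expected $\mathcal{O}(\frac{\sqrt k}{\epsilon}\log n)$ while-loop iterations, and over the internal \textsc{RandSEQ} repetitions of all the calls, none of which needs to be independent of the others for the expectations to add), the total number of adaptive rounds on independence queries is $\mathcal{O}(\frac{1}{\epsilon}\log\frac{r}{\epsilon})\cdot\mathcal{O}(\frac{\sqrt k}{\epsilon}\log n)\cdot\mathcal{O}(\sqrt n\log r)=\mathcal{O}(\frac{\sqrt{kn}}{\epsilon^2}\log\frac{r}{\epsilon}\log n\log r)$, and the total number of independence queries is $\mathcal{O}(n)$ times this, i.e. $\mathcal{O}(\frac{n^{3/2}\sqrt{k}}{\epsilon^2}\log\frac{r}{\epsilon}\log n\log r)$. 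The step I expect to be the main obstacle is the \textsc{RandSEQ} bookkeeping: one must verify that the $\mathcal{O}(\sqrt n)$-repetition guarantee of \citet{KarpUW88} applies to the arbitrary down-closed system $(\N,\I)$ and candidate set $C$ arising here, that the expectation over \textsc{RandSEQ}'s internal randomness composes correctly with the outer expectation over \BRG's acceptance coin flips (a nested-expectation/linearity argument), and that the independence-oracle calls inside each \textsc{RandSEQ} repetition are genuinely parallelizable into $\mathcal{O}(1)$ rounds rather than being inherently sequential; everything else is the same bookkeeping as in Lemma~\ref{lma:complexityofadaptivity}.
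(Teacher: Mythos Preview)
Your proposal is correct and follows essentially the same approach as the paper: reuse the while-loop iteration count $\mathcal{O}(\frac{\sqrt{k}}{\epsilon^2}\log\frac{r}{\epsilon}\log n)$ from Lemma~\ref{lma:complexityofadaptivity}, bound each iteration by $\mathcal{O}(\sqrt{n})$ independence-oracle rounds for \textsc{RandSEQ} (via \citet{KarpUW88}) plus $\mathcal{O}(\log r)$ rounds for the binary search, then multiply and cap each round at $\mathcal{O}(n)$ queries. The paper's proof is terser and simply cites \citet{KarpUW88} without the nested-expectation and parallelizability caveats you raise, but the structure and arithmetic are identical.
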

\begin{proof}
In Lemma~\ref{lma:complexityofadaptivity}, we have proved that the while-loop of \BRG~has at most $\mathcal{O}(\frac{\sqrt{k}}{\epsilon^2}\log \frac{r}{\epsilon}\log {n})$ iterations in expectation. In each iteration of the while-loop, the \textsc{RandSEQ} procedure is called once, which can be implemented in at most $\mathcal{O}(\sqrt{n})$ adaptive rounds for independence oracle queries according to \citet{KarpUW88}; after that, the binary search process in Line~\ref{ln:definejmini} can be implemented in at most $\mathcal{O}(\log r)$ adaptive rounds. Finally, it is noted that at most $\mathcal{O}(n)$ independence oracle queries can be incurred in each adaptive round. So the lemma follows by combining all the above results.
\end{proof}

\section{Adaptive Optimization} \label{sec:adaptopt}


The framework of Algorithm~\ref{alg:muti-randomgreedy} can be naturally extended to address the adaptive case (i.e., Problem~\eqref{eqn:adaptoptproblem}), as shown by Algorithm~\ref{alg:arg}. For convenience, we use $\pi_{\A}$ to denote the adaptive policy adopted by Algorithm~\ref{alg:arg}. Algorithm~\ref{alg:arg} runs in iterations and identifies an element $u^*$ in each iteration which maximizes the expected marginal gain $\Delta(u^*\mid \psi)$ without violating the feasibility $\I$, where $\psi$ is the partial realization observed by $\pi_{\A}$ at the moment that $u^*$ is identified. After that, $\pi_{\A}$ observes the state of $u^*$ and adds $u^*$ into the solution set $S$ with probability $p$, and discard $u^*$ with probability $1-p$. The algorithm stops when no more elements can be added into $S$ without violating the feasibility of $\I$ or when $\Delta(u^*\mid \psi)$ is non-positive.

Although the framework of Algorithm~\ref{alg:arg} looks similar to \textsc{RandomMultiGreedy}, its performance analysis is very different, as there does not exist a fixed optimal solution set under the adaptive setting, and we have to compare the average performance of $\pi_{\A}$ with that of an optimal policy $\pi_{\opt}$. To address this problem, we first build a relationship between $\pi_{\A}$ and $\pi_{\opt}$ as follows:

\begin{algorithm} [!tbp]
    \caption{\textsc{AdaptRandomGreedy}$(p)$}
    \label{alg:arg}
    \begin{algorithmic}
        \STATE {\bfseries Initialize:} $S\gets \emptyset$ and $\psi\gets \emptyset$
    \end{algorithmic}
    \begin{algorithmic}[1]

        \WHILE{$\N\neq \emptyset$}
        \STATE $A\gets \{u\in \N: S\cup \{u\}\in \I\}$\;
        \STATE $u^\ast\gets \argmax_{u\in A} \Delta(u\mid \psi)$\;\alglinelabel{algline:greedy}
        \IF{$A=\emptyset\vee \Delta(u^\ast\mid \psi)\leq 0$}
        \STATE\textbf{break}\alglinelabel{algline:negative}
        \ENDIF
        \STATE\textbf{with}\ probability $p$ \textbf{do}\alglinelabel{algline:prob}
        \STATE\quad observe $z=\Phi(u^\ast)$;\
        \STATE\quad $S\gets S\cup\{u^\ast\}$;\alglinelabel{algline:select}
        \STATE\quad $\psi \gets \psi\cup \{\big(u^\ast,z\big)\}$\;\alglinelabel{algline:observe}
        \STATE $\N\gets \N\setminus\{u^*\}$\;
        \ENDWHILE
        \STATE \textbf{return}\ $S$\;

    \end{algorithmic}

\end{algorithm}

\begin{lemma}\label{lma:concat}
Given any two adaptive policy $\pi_1$ and $\pi_2$, let $\pi_1@\pi_2$ denote a new policy that first execute $\pi_1$ and then execute $\pi_2$ without any knowledge about $\pi_1$. So we have 
    \begin{eqnarray}
     f_{\avg}(\pi_{\A}@\pi_{\opt})=f_{\avg}(\pi_{\opt}@\pi_{\A})\geq (1-p)\cdot f_{\avg}(\pi_{\opt}) \nonumber
    \end{eqnarray}
\end{lemma}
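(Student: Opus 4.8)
The statement splits into an equality and an inequality, which I would handle separately.

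\emph{Equality.} In the concatenation $\pi_1@\pi_2$ the policy $\pi_2$ runs ``without any knowledge'' of $\pi_1$, hence behaves exactly as it would in isolation. Therefore, for every realization $\phi$ (and every fixing of the internal coins of the two policies) the set of elements eventually picked is $\N(\pi_1,\phi)\cup\N(\pi_2,\phi)$, and the states observed are $\phi$ restricted to that set, so $f_{\avg}(\pi_1@\pi_2)=\E[f(\N(\pi_1,\Phi)\cup\N(\pi_2,\Phi),\Phi)]$. This expression is symmetric in $\pi_1$ and $\pi_2$; taking $(\pi_1,\pi_2)=(\pi_{\A},\pi_{\opt})$ and then $(\pi_{\opt},\pi_{\A})$ gives $f_{\avg}(\pi_{\A}@\pi_{\opt})=f_{\avg}(\pi_{\opt}@\pi_{\A})$.

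\emph{Inequality.} I would analyze $\pi_{\opt}@\pi_{\A}$ by conditioning on the true realization $\Phi=\phi$ (and, if $\pi_{\opt}$ is randomized, on its internal coins). Then $A_1:=\N(\pi_{\opt},\phi)$ is a fixed set, while the only remaining randomness in $A_2:=\N(\pi_{\A},\phi)$ comes from the acceptance coins (and any tie-breaking coins) of $\pi_{\A}$, which are independent of the conditioning. The key claim is that, under this conditioning, $\Pro[a\in A_2]\le p$ for every $a\in\N$: in Algorithm~\ref{alg:arg} an element is removed from $\N$ as soon as it is chosen as $u^\ast$, so $a$ can be examined in at most one iteration; whether $u^\ast=a$ in a given iteration is measurable with respect to the history preceding the Bernoulli($p$) trial of Line~\ref{algline:prob}, so conditioned on $a$ being examined it enters $S$ with probability exactly $p$, giving $\Pro[a\in A_2]\le p$. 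I would then invoke the standard submodular fact — the same device behind Eqn.~\eqref{eqn:osiislarger} in Lemma~\ref{lma:boundexptation} — that for a non-negative submodular function $g$ and a random set $R$ with all element-marginals at most $p$ one has $\E[g(R)]\ge(1-p)g(\emptyset)$, applied to $g(\cdot)=f(A_1\cup\,\cdot\,,\phi)$ (non-negative and submodular by pointwise submodularity of $f$) and $R=A_2$. This yields $\E[f(A_1\cup A_2,\phi)\mid\Phi=\phi]\ge(1-p)f(A_1,\phi)$; averaging over $\phi$ (and the coins of $\pi_{\opt}$) gives $f_{\avg}(\pi_{\opt}@\pi_{\A})\ge(1-p)\E[f(\N(\pi_{\opt},\Phi),\Phi)]=(1-p)f_{\avg}(\pi_{\opt})$.

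\emph{Main obstacle.} I expect the only non-routine step to be the per-element bound $\Pro[a\in A_2\mid\Phi=\phi]\le p$: once the realization is frozen, the acceptance coins are the sole source of randomness in $\pi_{\A}$, and one must argue via the natural filtration that the coin flipped in Line~\ref{algline:prob} is independent of the event ``the current $u^\ast$ is $a$'', which hinges on each element being discarded from $\N$ after it is examined. The remaining ingredient — the random-set inequality $\E[g(R)]\ge(1-p)g(\emptyset)$ — is classical and mirrors the computation already used for Lemma~\ref{lma:boundexptation}; it is also worth stressing that pointwise (rather than merely adaptive) submodularity is exactly what legitimizes fixing $\phi$ and treating $f(\cdot,\phi)$ as an ordinary submodular function.
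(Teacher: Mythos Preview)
Your proposal is correct and follows essentially the same route as the paper: fix a realization $\phi$, apply the Buchbinder--Feldman random-set inequality (Lemma~\ref{lma:randomplma}) to the shifted function $g(\cdot)=f(\N(\pi_{\opt},\phi)\cup\cdot,\phi)$ using pointwise submodularity, and then average over $\Phi$. You spell out the equality and the per-element bound $\Pro[a\in A_2\mid\Phi=\phi]\le p$ in more detail than the paper does, but these are exactly the implicit steps behind the paper's one-line invocation of Lemma~\ref{lma:randomplma}.
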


Lemma~\ref{lma:concat} implies that we may get an approximation ratio by further bounding $f_{\avg}(\pi_{\A}@\pi_{\opt})$ using $f_{\avg}(\pi_{\A})$. Given any $u\in \N$ and any realization $\phi$, let $\psi_u(\phi)$ denote the partial realization observed by $\pi_{\A}$ right before $u$ is considered by Lines~\ref{algline:prob}-\ref{algline:observe} of Algorithm~\ref{alg:arg}; if $u$ is never considered, then let $\psi_u(\phi)$ denote the observed partial realization at the end of $\pi_{\A}$. Based on this definition, we can get:

\begin{lemma} \label{lma:upperboundofmg}
The value of $f_{\avg}(\pi_{\A}@\pi_{\opt})-f_{\avg}(\pi_{\A})$ is no more than $\E_{\pi_{\A},\Phi}\Big[\sum_{u\in \N(\pi_{\opt},\Phi)\backslash \N(\pi_{\A},\Phi)}\Delta(u\mid \psi_{u}(\Phi))\Big]$, where the expectation is taken with respect to both the randomness of $\Phi$ and the randomness of $\pi_{\A}$.
\end{lemma}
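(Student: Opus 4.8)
The plan is to rewrite the gain $f_{\avg}(\pi_{\A}@\pi_{\opt})-f_{\avg}(\pi_{\A})$ as a telescoping sum of the one‑step expected marginal gains obtained by $\pi_{\opt}$ when it is run after $\pi_{\A}$, and then to bound each such term via adaptive submodularity by the corresponding $\Delta(u\mid\psi_u(\Phi))$. Throughout, I fix the coins of $\pi_{\A}$ and take $\pi_{\opt}$ to be deterministic (an optimal adaptive policy may always be taken so), so that conditioned on a realization $\Phi=\phi$ everything is deterministic; the final expectation is then over $\phi$ and the coins of $\pi_{\A}$, matching the statement.

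\emph{Telescoping.} Fix $\phi$ and the coins of $\pi_{\A}$. Inspecting Algorithm~\ref{alg:arg}, an element's state is observed (Line~\ref{algline:observe}) exactly when that element is added to $S$ (Line~\ref{algline:select}), so if $\psi_{\A}$ denotes the partial realization held by $\pi_{\A}$ at the end then $\dom(\psi_{\A})=\N(\pi_{\A},\phi)=:S_{\A}$. Let $w_1,\dots,w_M$ be the elements selected by $\pi_{\opt}$ after $\pi_{\A}$, and set $R_j=S_{\A}\cup\{w_1,\dots,w_j\}$, so $R_0=S_{\A}$ and $R_M=S_{\A}\cup\N(\pi_{\opt},\phi)$. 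Then
\begin{equation}
f(R_M,\phi)-f(R_0,\phi)=\sum\nolimits_{j=1}^{M}f(w_j\mid R_{j-1},\phi),\nonumber
\end{equation}
and the $j$‑th term is $0$ whenever $w_j\in R_{j-1}$, so only the steps selecting a genuinely new element survive; these are in bijection with $\N(\pi_{\opt},\phi)\setminus\N(\pi_{\A},\phi)$.

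\emph{One-step identity.} Taking expectations, I split off each step of $\pi_{\opt}$. Let $\Psi_{j-1}$ be the partial realization observed by $\pi_{\A}@\pi_{\opt}$ just before $\pi_{\opt}$'s $j$‑th selection; since $\pi_{\A}$ observes precisely the states of its added elements and $\pi_{\opt}$ observes the states of the elements it selects, $\dom(\Psi_{j-1})=R_{j-1}$. Conditioning on $\Psi_{j-1}=\psi'$ fixes $w_j$ (say $w_j=u$) and, by the standard fact that a policy's observation history equals a given partial realization exactly on the realizations consistent with it, makes $\Phi$ distributed as $\Phi\sim\psi'$; hence $\E[f(w_j\mid R_{j-1},\Phi)\mid\Psi_{j-1}=\psi']=\Delta(u\mid\psi')$ by the definition of $\Delta$. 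Summing over $j$, using the tower rule, and discarding the vanishing terms from steps that re‑pick an old element gives
\begin{equation}
f_{\avg}(\pi_{\A}@\pi_{\opt})-f_{\avg}(\pi_{\A})=\E\Big[\sum\nolimits_{u\in \N(\pi_{\opt},\Phi)\setminus\N(\pi_{\A},\Phi)}\Delta(u\mid\Psi_{(u)})\Big],\nonumber
\end{equation}
where $\Psi_{(u)}$ is the history right before $\pi_{\opt}$ picks $u$.

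\emph{Applying adaptive submodularity.} For $u\in\N(\pi_{\opt},\Phi)\setminus\N(\pi_{\A},\Phi)$ we have $u\notin\dom(\psi_{\A})\supseteq\dom(\psi_u(\Phi))$, and $\psi_u(\Phi)\subseteq\psi_{\A}\subseteq\Psi_{(u)}$: indeed $\psi_u(\Phi)$ is the value of $\pi_{\A}$'s (monotonically growing) partial realization at some moment of its run, or at its end, so it is a subrealization of $\psi_{\A}$, while $\Psi_{(u)}$ extends $\psi_{\A}$ by whatever $\pi_{\opt}$ has observed so far. Since also $u\notin\dom(\Psi_{(u)})$, adaptive submodularity gives $\Delta(u\mid\psi_u(\Phi))\ge\Delta(u\mid\Psi_{(u)})$; substituting this into the displayed equality yields the claimed bound. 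I expect the main obstacle to be the bookkeeping in the one‑step identity — formalizing $\pi_{\A}@\pi_{\opt}$ as a decision tree carefully enough to justify both $\dom(\Psi_{j-1})=R_{j-1}$ and the passage from conditioning on the observed history to conditioning $\Phi\sim\psi'$ — since the telescoping and the single invocation of adaptive submodularity are routine.
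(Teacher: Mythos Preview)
Your proposal is correct and follows essentially the same approach as the paper: decompose $f_{\avg}(\pi_{\A}@\pi_{\opt})-f_{\avg}(\pi_{\A})$ into a sum of conditional expected marginal gains $\Delta(u\mid \cdot)$ along the trajectory of $\pi_{\opt}$ run after $\pi_{\A}$, and then bound each term via adaptive submodularity using the chain $\psi_u(\Phi)\subseteq\psi(\pi_{\A},\Phi)\subseteq\Psi_{(u)}$. Your treatment is in fact somewhat more explicit than the paper's about the telescoping and the passage from conditioning on the observed history to $\Phi\sim\psi'$, but the structure and key inequality are identical.
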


Next, we try to establish some quantitative relationships between $f_{\avg}(\pi_{\A})$ and the upper bound found in Lemma~\ref{lma:upperboundofmg}. Given any realization $\phi$, Note that $\N(\pi_{\opt},\phi)\backslash \N(\pi_{\A},\phi)$ denotes the set of elements that are selected by $\pi_{\opt}$ but not $\pi_{\A}$ under the realization $\phi$. The elements in this set can be partitioned into three disjoint sets $O_1(\phi),O_2(\phi)$ and $O_3(\phi)$, where $O_2(\phi)$ denotes the set of elements that have been considered by $\pi_{\A}$ in Lines~\ref{algline:prob}-\ref{algline:observe} but discarded (due to the probability $p$); $O_3(\phi)$ denotes the set of elements satisfying $\Delta(u\mid \psi_{u}(\phi))\leq 0$ for all $u\in O_3(\phi)$; and the rest elements are all in $O_1(\phi)$. It can be seen that each element $u$ in $O_1(\phi)$ must satisfy $\dom(\psi_{u}(\phi))\cup \{u\}\notin \I$. Therefore, by using a similar method as that under the non-adaptive case, we can map the elements in $O_1(\phi)$ to the elements selected by $\pi_{\mathcal{A}}$ under realization $\phi$, and hence prove:


\begin{lemma}\label{lma:boundo1}
    We have $$\E_{\pi_{\A},\Phi} \Big[\sum\nolimits_{u\in O_1(\Phi)}\Delta(u\mid \psi_{u}(\Phi))\Big] \leq k\cdot f_{\avg}(\pi_{\A})$$

\end{lemma}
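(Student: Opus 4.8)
The plan is to establish the bound pointwise, for each realization and each outcome of the policy's coin flips, and to pass to expectations only at the very end. Fix a realization $\phi$ and a fixed outcome of the internal coin flips of $\pi_{\A}$, so that the run is deterministic; let $e_1,\dots,e_m$ be the elements $\pi_{\A}$ selects, in order, let $\psi_j$ be the partial realization right after $e_j$ is added (so $\psi_0=\emptyset$ and $\dom(\psi_m)=S_m:=\{e_1,\dots,e_m\}=\N(\pi_{\A},\phi)$), and note that $\Delta(e_j\mid\psi_{j-1})>0$ for every $j$, since $\pi_{\A}$ adds an element only when its conditional marginal gain is positive (Line~\ref{algline:negative}). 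As already observed, every $o\in O_1(\phi)$ satisfies $S_m\cup\{o\}\notin\I$; moreover $O_1(\phi)\cap S_m=\emptyset$ and $O_1(\phi)\subseteq\N(\pi_{\opt},\phi)\in\I$, so $O_1(\phi)\in\I$ and $S_m$ is a base of $S_m\cup O_1(\phi)$. This is exactly the combinatorial situation handled by the $k$-system mapping used in Lemma~\ref{lma:mapping}, so I would build a map $\sigma_\phi\colon O_1(\phi)\to S_m$ with (i) $|\sigma_\phi^{-1}(e_j)|\le k$ for every $j$, and (ii) $S_{j-1}\cup\{o\}\in\I$ whenever $\sigma_\phi(o)=e_j$.

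The second step mirrors Lemma~\ref{lma:mgupperbound}: bound $\Delta(o\mid\psi_o(\phi))$ by the marginal gain of $e_j:=\sigma_\phi(o)$. Since an element of $O_1(\phi)$ is never considered in Lines~\ref{algline:prob}--\ref{algline:observe} (otherwise it would be selected or land in $O_2(\phi)$), we have $\psi_o(\phi)=\psi_m$, and adaptive submodularity gives $\Delta(o\mid\psi_o(\phi))=\Delta(o\mid\psi_m)\le\Delta(o\mid\psi_{j-1})$ because $\psi_{j-1}\subseteq\psi_m$ and $o\notin\dom(\psi_m)$. When $e_j$ was chosen it maximized $\Delta(\,\cdot\mid\psi_{j-1})$ over all elements still present in $\N$ that could be added to $S_{j-1}$; by (ii) and the fact that $o\in O_1(\phi)$ is never removed from $\N$, $o$ was among those candidates, hence $\Delta(o\mid\psi_{j-1})\le\Delta(e_j\mid\psi_{j-1})$. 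Summing over $o\in O_1(\phi)$ and using (i) together with $\Delta(e_j\mid\psi_{j-1})>0$,
\begin{equation*}
\sum_{o\in O_1(\phi)}\Delta(o\mid\psi_o(\phi))\ \le\ \sum_{j=1}^{m}\bigl|\sigma_\phi^{-1}(e_j)\bigr|\,\Delta(e_j\mid\psi_{j-1})\ \le\ k\sum_{j=1}^{m}\Delta(e_j\mid\psi_{j-1}),
\end{equation*}
and this holds for every $\phi$ and every coin outcome.

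It then remains to take expectations and to show $\E_{\pi_{\A},\Phi}\bigl[\sum_{j}\Delta(e_j\mid\Psi_{j-1})\bigr]\le f_{\avg}(\pi_{\A})$, where $\Psi_{j-1}$ is now the (random) partial realization just before the $j$-th selection. For this I would invoke the standard telescoping identity for non-anticipatory adaptive policies: conditioned on everything $\pi_{\A}$ has observed and flipped up to the moment the $j$-th selected element $e_j$ is determined, the conditional law of $\Phi$ is still $\Phi\sim\Psi_{j-1}$ — the coin flips are independent of $\Phi$, and discarded elements are never observed, hence leak no information about $\Phi$ — so that $\E[f(S_j,\Phi)-f(S_{j-1},\Phi)\mid\text{history}]=\Delta(e_j\mid\Psi_{j-1})$. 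Summing over $j$ and telescoping yields $\E_{\pi_{\A},\Phi}\bigl[\sum_{j}\Delta(e_j\mid\Psi_{j-1})\bigr]=f_{\avg}(\pi_{\A})-\E[f(\emptyset,\Phi)]\le f_{\avg}(\pi_{\A})$, and combining this with the pointwise inequality above finishes the proof.

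I expect the last step to be the main obstacle: carefully justifying the telescoping identity in the randomized adaptive setting (arguing that conditioning on the policy's internal randomness and on the never-observed discarded elements does not change the conditional distribution of $\Phi$) and handling the random stopping index $m$. A secondary delicate point is the per-element guarantee (i) of the $k$-system mapping; if one only wants the weaker cumulative guarantee $|\sigma_\phi^{-1}(\{e_1,\dots,e_j\})|\le kj$, one can still conclude, since the greedy marginal gains $\Delta(e_1\mid\psi_0)\ge\Delta(e_2\mid\psi_1)\ge\cdots$ are non-increasing (by adaptive submodularity, down-closedness of $\I$, and the greedy rule), and an Abel summation then recovers the factor $k$.
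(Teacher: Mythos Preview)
Your proposal is correct and follows essentially the same route as the paper: fix the realization and the coin flips, build the $k$-system mapping from $O_1(\phi)$ into $S_m$, use feasibility plus the greedy rule and adaptive submodularity to bound each $\Delta(o\mid\psi_o(\phi))$ by $\Delta(e_j\mid\psi_{j-1})$, sum using $|\sigma_\phi^{-1}(e_j)|\le k$, and finally take expectations via the telescoping identity $\E_{\pi_{\A},\Phi}\bigl[\sum_{j}\Delta(e_j\mid\Psi_{j-1})\bigr]\le f_{\avg}(\pi_{\A})$. The paper argues $\psi_{\hat u_i}(\phi)\subseteq\psi_u(\phi)$ rather than your stronger $\psi_o(\phi)=\psi_m$, but both are immediate here; and the telescoping identity you flag as the ``main obstacle'' is exactly the equality $f_{\avg}(\pi_{\A})=\E\bigl[\sum_{u\in\N(\pi_{\A},\Phi)}\Delta(u\mid\psi_u(\Phi))\bigr]$ already established in the paper's proof of Lemma~\ref{lma:upperboundofmg}, so no extra work is needed.
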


Now we try to bound the ``utility loss'' caused by $O_2(\phi)$. Note that although these elements are discarded (with probability $1-p$), they got a chance to be selected by $\pi_{\A}$ with probability $p$. So the ratio of the total expected (conditional) marginal gain of these elements to $f_{\avg}(\pi_{\A})$ should be no more than $(1-p)/p$, which is proved by the following lemma:

\begin{lemma}\label{lma:boundo2}
    We have
    \begin{equation}
        \E_{\pi_{\A},\Phi}\bigg[\sum_{u\in O_2(\Phi)} \Delta(u\mid \psi_{u}(\Phi))\bigg]\leq \frac{1-p}{p}\cdot f_{\avg}(\pi_{\A}) \nonumber
    \end{equation}

\end{lemma}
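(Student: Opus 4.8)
The plan is to run, in adaptive language, the same ``$p$ versus $1-p$'' accounting that underlies Eqn.~\eqref{eqn:xudeltaissmall} of Lemma~\ref{lma:boundexptation}. The key point is that whenever $\pi_{\A}$ reaches Lines~\ref{algline:prob}--\ref{algline:observe}, the element $u^\ast$ it is about to flip a coin on has $\Delta(u^\ast\mid\psi)>0$ (otherwise the algorithm would already have broken at Line~\ref{algline:negative}), and it gets \emph{accepted} with probability exactly $p$; conditioned on acceptance, the expected increase it causes in $f$ is precisely $\Delta(u^\ast\mid\psi)$. Hence the expected ``loss'' from discarding such an element is $\tfrac{1-p}{p}$ times the expected ``gain'' from accepting it, and summing the gains telescopes into $f_{\avg}(\pi_{\A})$.

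To make this precise I would index by $t=1,2,\dots$ the iterations of $\pi_{\A}$ that actually reach the coin flip (there are at most $n$ of them, since $u^\ast$ is removed from $\N$ either way), and for such a $t$ write $w_t$ for the identified element, $\psi_t$ for the partial realization observed just before the flip (so $\psi_t=\psi_{w_t}(\Phi)$), $B_t\sim\mathrm{Bernoulli}(p)$ for the independent coin, and $g_t:=f(\dom(\psi_t)\cup\{w_t\},\Phi)-f(\dom(\psi_t),\Phi)$. Let $H_t$ denote everything $\pi_{\A}$ has seen up to and including the identification of $w_t$, but \emph{before} $B_t$. Two observations drive the argument: (i) conditioned on $H_t$ the true realization $\Phi$ is distributed exactly as $\Phi\sim\psi_t$, because the only internal randomness of $\pi_{\A}$ is its coins (independent of $\Phi$) and discarding an element removes it from $\N$ \emph{without} observing its state, so $H_t$ reveals precisely the states recorded in $\psi_t$; and (ii) $B_t$ is independent of $\Phi$ given $H_t$. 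Combining (i) and (ii) with the definition of $\Delta(\cdot\mid\cdot)$ gives, on the event that the $t$-th flip occurs,
\begin{align*}
\E[\mathbf{1}\{B_t=0\}\,\Delta(w_t\mid\psi_t)\mid H_t] &= (1-p)\,\Delta(w_t\mid\psi_t),\\
\E[\mathbf{1}\{B_t=1\}\,g_t\mid H_t] &= p\,\Delta(w_t\mid\psi_t),
\end{align*}
so $\E[\mathbf{1}\{B_t=0\}\Delta(w_t\mid\psi_t)\mid H_t]=\tfrac{1-p}{p}\,\E[\mathbf{1}\{B_t=1\}g_t\mid H_t]$.

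Finally I would sum over $t$ and take total expectations. By definition $O_2(\Phi)$ is contained in the set of discarded elements $\{w_t:B_t=0\}$, and every discarded $w_t$ has $\Delta(w_t\mid\psi_t)>0$, so the left-hand side of the lemma is at most $\E[\sum_t\mathbf{1}\{B_t=0\}\Delta(w_t\mid\psi_t)]=\tfrac{1-p}{p}\E[\sum_t\mathbf{1}\{B_t=1\}g_t]$. When $B_t=1$ the domain of the observed partial realization grows by exactly $w_t$ and when $B_t=0$ it is unchanged, so $\sum_t\mathbf{1}\{B_t=1\}g_t=f(\N(\pi_{\A},\Phi),\Phi)-f(\emptyset,\Phi)\le f(\N(\pi_{\A},\Phi),\Phi)$ by non-negativity of $f$; taking expectations yields the bound $\tfrac{1-p}{p}f_{\avg}(\pi_{\A})$, as claimed.

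The step I expect to be most delicate is the probabilistic bookkeeping in (i)--(ii): setting up the filtration ``just before the $t$-th coin flip'' cleanly, checking that at that moment $\pi_{\A}$'s only information about $\Phi$ is the partial realization $\psi_t$ (in particular that discarded elements leak no state information, which is exactly why the algorithm does \emph{not} observe $\Phi(u^\ast)$ when it discards at Line~\ref{algline:prob}), and handling the random --- but a.s.\ bounded by $n$ --- number of coin-flip iterations via $H_t$-measurable indicators. Once this is in place, the rest is the same elementary identity used in the non-adaptive analysis.
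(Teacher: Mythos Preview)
Your proposal is correct and follows essentially the same ``$p$ versus $1-p$'' coupling as the paper. The paper indexes by element $u\in\N$ (fixing $\phi$ and averaging over $\pi_{\A}$'s coins, then using the identity $f_{\avg}(\pi_{\A})=\E\big[\sum_{u\in\N(\pi_{\A},\Phi)}\Delta(u\mid\psi_u(\Phi))\big]$ derived in the proof of Lemma~\ref{lma:upperboundofmg}), whereas you index by the iteration $t$ of the coin flip and telescope the realized gains $g_t$ directly to $f(\N(\pi_{\A},\Phi),\Phi)-f(\emptyset,\Phi)$; these are two bookkeepings of the same conditional-independence step, and your explicit filtration $H_t$ makes the ``no state leaked on discard'' point cleanly.
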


Combining all the above lemmas, we can get the approximation ratio of \textsc{AdaptRandomGreedy} as follows:

\begin{theorem} \label{thm:argapproxratio}
\textsc{AdaptRandomGreedy} achieves an approximation ratio of $\frac{pk+1}{p(1-p)}$ (i.e., $f_{\avg}(\pi_{\A})\geq \frac{p(1-p)}{pk+1}\cdot f_{\avg}(\pi_{\opt})$) under time complexity of $\mathcal{O}(nr)$. The ratio is minimized to $(1+\sqrt{k+1})^2$ when $p=(1+\sqrt{k+1})^{-1}$.
\end{theorem}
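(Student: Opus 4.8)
The plan is to assemble Theorem~\ref{thm:argapproxratio} from the four lemmas already established, using $f_{\avg}(\pi_{\A}@\pi_{\opt})$ as the pivot quantity: Lemma~\ref{lma:concat} lower-bounds it by $(1-p)f_{\avg}(\pi_{\opt})$, while Lemmas~\ref{lma:upperboundofmg}, \ref{lma:boundo1} and \ref{lma:boundo2} together upper-bound it by a multiple of $f_{\avg}(\pi_{\A})$; comparing the two bounds yields the approximation ratio, and a one-variable optimization over $p$ produces the stated constant.

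Concretely, I would start from Lemma~\ref{lma:upperboundofmg} and split the sum over $\N(\pi_{\opt},\Phi)\setminus\N(\pi_{\A},\Phi)$ according to the partition into $O_1(\Phi),O_2(\Phi),O_3(\Phi)$. The $O_1$-part is at most $k\cdot f_{\avg}(\pi_{\A})$ by Lemma~\ref{lma:boundo1}; the $O_2$-part is at most $\tfrac{1-p}{p}\cdot f_{\avg}(\pi_{\A})$ by Lemma~\ref{lma:boundo2}; and the $O_3$-part is nonpositive, since every $u\in O_3(\phi)$ satisfies $\Delta(u\mid\psi_u(\phi))\le 0$ by the very definition of $O_3$, which is the set carved out by the stopping condition on Line~\ref{algline:negative}. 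Hence
\begin{equation*}
f_{\avg}(\pi_{\A}@\pi_{\opt})-f_{\avg}(\pi_{\A})\le\Big(k+\tfrac{1-p}{p}\Big)f_{\avg}(\pi_{\A}),
\end{equation*}
so $f_{\avg}(\pi_{\A}@\pi_{\opt})\le\big(1+k+\tfrac{1-p}{p}\big)f_{\avg}(\pi_{\A})=\tfrac{pk+1}{p}\,f_{\avg}(\pi_{\A})$. Combining this with $f_{\avg}(\pi_{\A}@\pi_{\opt})\ge(1-p)f_{\avg}(\pi_{\opt})$ from Lemma~\ref{lma:concat} gives $(1-p)f_{\avg}(\pi_{\opt})\le\tfrac{pk+1}{p}f_{\avg}(\pi_{\A})$, i.e. $f_{\avg}(\pi_{\A})\ge\tfrac{p(1-p)}{pk+1}f_{\avg}(\pi_{\opt})$, which is exactly the claimed $\tfrac{pk+1}{p(1-p)}$-approximation.

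For the optimal $p$, I would minimize $g(p)=\tfrac{pk+1}{p(1-p)}$ over $p\in(0,1)$: setting $g'(p)=0$ reduces to $kp^2+2p-1=0$, whose unique root in $(0,1)$ is $p=\tfrac{\sqrt{k+1}-1}{k}=(1+\sqrt{k+1})^{-1}$, and since $g(p)\to\infty$ at both endpoints this is the global minimizer. Substituting (using $kp^2=1-2p$, hence $pk+1=\tfrac{1-p}{p}$) collapses $g(p)$ to $1/p^2=(1+\sqrt{k+1})^2$. The $\mathcal{O}(nr)$ time bound follows because each while-iteration either halts or deletes one element from $\N$ (at most $n$ iterations), while the observed partial realization $\psi$ --- and therefore all values $\Delta(u\mid\psi)$ and the feasible set $A$ --- changes only when an element is actually added to $S$ (Lines~\ref{algline:select}--\ref{algline:observe}), which happens at most $r$ times; caching the expected marginal gains between such changes leaves $\mathcal{O}(nr)$ value and independence oracle queries overall.

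There is essentially no remaining obstacle in this final step: all the heavy lifting sits in the lemmas, in particular the policy-interleaving identity of Lemma~\ref{lma:concat} and the adaptive mapping argument behind Lemma~\ref{lma:boundo1}. The only points that require care are (i) verifying that $O_1,O_2,O_3$ genuinely partition $\N(\pi_{\opt},\Phi)\setminus\N(\pi_{\A},\Phi)$ so that the three bounds add cleanly, and (ii) the treatment of the $O_3$-part, which is where the non-monotone setting could otherwise hurt but is harmless here precisely because \ARG~stops as soon as the best available expected marginal gain turns non-positive.
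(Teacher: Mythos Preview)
Your proposal is correct and follows essentially the same route as the paper: split the sum from Lemma~\ref{lma:upperboundofmg} over the partition $O_1,O_2,O_3$, apply Lemmas~\ref{lma:boundo1} and~\ref{lma:boundo2} to the first two pieces, drop the nonpositive $O_3$-piece, and combine with Lemma~\ref{lma:concat}. Your optimization over $p$ and the caching justification for the $\mathcal{O}(nr)$ bound are more explicit than the paper's one-line remark (``$\mathcal{O}(n)$ oracle queries for each selected element''), but the underlying argument is the same.
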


\textbf{Remark:} When the objective function $f(\cdot)$ is monotone, it can be easily seen that \textsc{AdaptRandomGreedy}$(1)$ can achieve an approximation ratio of $(k+1)$--the same ratio as that in~\cite{golovin2011adaptive2}. Therefore, \textsc{AdaptRandomGreedy}$(p)$ can also be considered as a ``universal algorithm'' for both non-monotone and monotone submodular maximization.

\section{Performance Evaluation} \label{sec:pe}

In this section, we compare our algorithms with the state-of-the-art algorithms for submodular maximization subject to a $k$-system constraint, using the metrics of both utility and the number of oracle queries to the objective function. We implemented five algorithms in the experiments: (1) the accelerated version of our \RMG~algorithm (as described in Sec.~\ref{sec:accel}), abbreviated as ``RAMG''; (2) the \textsc{RepeatedGreedy} algorithm presented in \cite{feldman2017greed}, abbreviated as ``REPG''; (3) the \textsc{TwinGreedyFast} algorithm proposed in \cite{han2020deterministic}, abbreviated as ``TGF''; (4) the \FSGS~algorithm proposed in \cite{feldman2020simultaneous}, abbreviated as ``FSGS''; and (5) our \ARG~algorithm, abbreviated as ``ARG''. Note that the three baseline algorithms REPG, TGF and FSGS achieve the best-known performance bounds among the related studies, as illustrated in Table~\ref{tab:overview}. In all experiments, we adopt the optimal settings of each implemented algorithm such that their theoretical approximation ratio is minimized (e.g., setting $\ell=2,p=\frac{2}{1+\sqrt{k}}$ for RAMG), and we set $\epsilon=0.1$ whenever $\epsilon$ is an input parameter for the considered algorithms. The implemented algorithms are tested in three applications, as elaborated in the following.

\subsection{Movie Recommendation} \label{sec:movie}

This application is also considered in~\cite{mirzasoleiman2016fast,feldman2017greed,haba2020streaming}, where there are a set $\N$ of movies and each movie is labeled by several genres chosen from a predefined set $G$. The goal is to select a subset $S$ of movies from $\N$ to maximize the utility
\begin{eqnarray}
f(S)=\sum_{u\in \mathcal{N}}\sum_{v\in S}M_{u,v}-\sum_{u\in S}\sum_{v\in S}M_{u,v}, \label{eqn:targetfunctionmovie}
\end{eqnarray}
under the constraint that the number of movies in $S$ labeled by genre $g$ is no more than $m_g$ for all $g\in G$ and $|S|\leq m$, where $m_g:g\in G$ and $m$ are all predefined integers. Intuitively, by using $M_{u,v}$ to denote the ``similarity'' between movie $u$ and movie $v$, the first and second factors in Eqn.~\eqref{eqn:targetfunctionmovie} encourage the ``coverage'' and ``diversity'' of the movie set $S$, respectively. It is indicated in \cite{mirzasoleiman2016fast,feldman2017greed} that the function $f(\cdot)$ is submodular and the problem constraint is essentially a $k$-system constraint with $k=|G|$. In our experiments, we use the MovieLens dataset~\cite{haba2020streaming} containing 1793 movies, where each movie $u$ is associated with a $25$-dimensional feature vector $t_u$ calculated from user ratings. We set $M_{u,v}=e^{-\lambda \mathrm{dist}(t_u,t_v)}$ where $\mathrm{dist}(t_u,t_v)$ denotes the Euclidean distance between $t_u$ and $t_v$ and $\lambda$ is set to 0.2. There are three genres ``Adventure'', ``Animation'' and ``Fantasy'' in MovieLens, and we set $m_g=10$ for all genres. 

\begin{figure}[ht]
	\begin{center}
		\centerline{\includegraphics[scale=0.36]{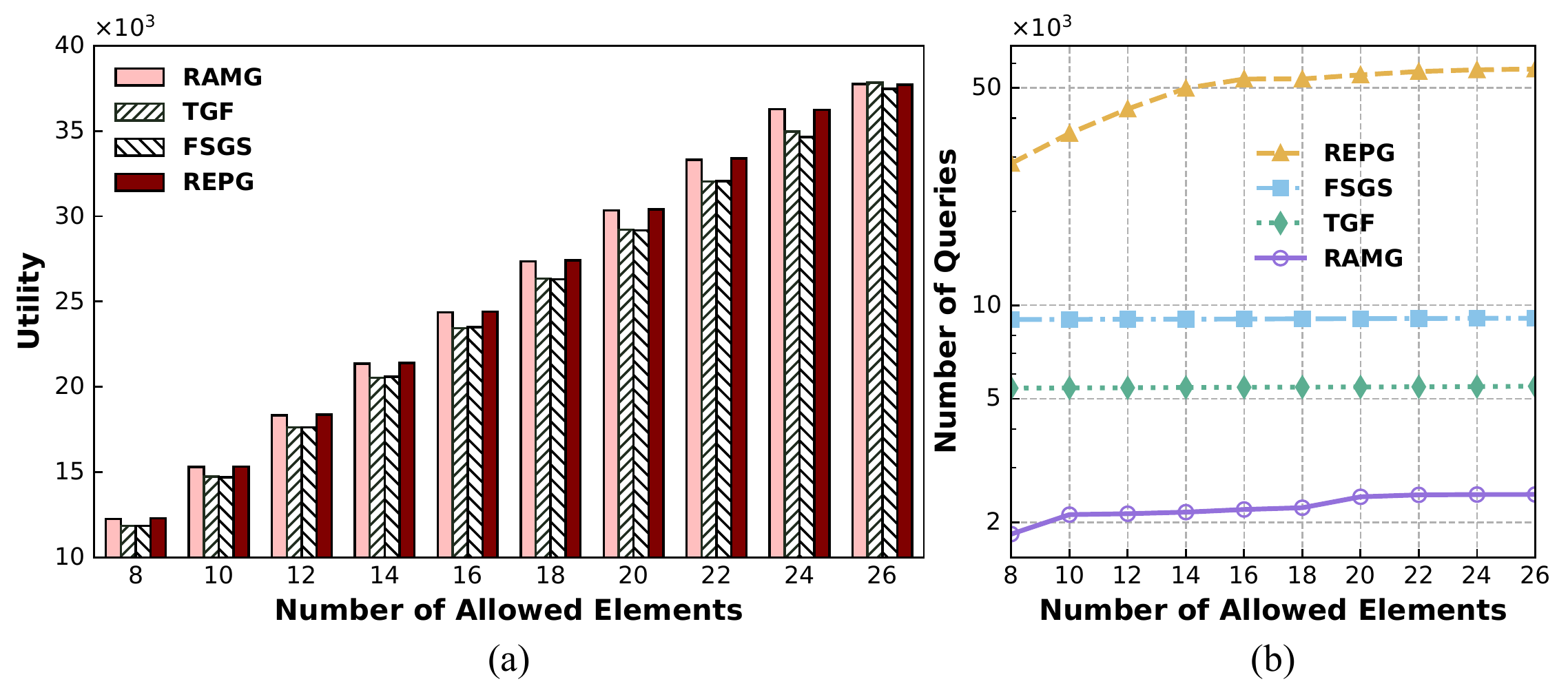}}
\vspace{-2ex}
		\caption{Movie Recommendation}
		\label{movie}
	\end{center}
\vspace{-3ex}
\end{figure}

In Fig.~\ref{movie}(a)-(b), we scale the the total number of movies allowed to be selected (i.e., $m$) to compare the performance of the implemented algorithms. It can be seen from Fig.~\ref{movie}(a) that RAMG and REPG achieve almost the same utility, while both of them outperform TGF and FSGS. Moreover, Fig.~\ref{movie}(b) shows that RAMG incurs much fewer oracle queries than all the baseline algorithms, and TGF is more efficient than FSGS. This can be explained by the reason that, FSGS maintains more candidate solutions than TGF, while the acceleration method adopted by RAMG is more efficient than the ``thresholding'' method adopted by TGF in practice.  



\subsection{Image Summarization} \label{sec:image}
This application is also considered in \cite{mirzasoleiman2016fast,fahrbach2019non}, where there is a set $\N$ of images classified into several categories, and the goal is to select a subset $S$ of images from $\N$ to maximize the utility
\vspace{-1.5ex}
\begin{eqnarray}
f(S)=\sum\nolimits_{u\in \mathcal{N}}\max\nolimits_{v\in S}s_{u,v}-\frac{1}{|\mathcal{N}|}\sum\nolimits_{u\in S}\sum\nolimits_{v\in S}s_{u,v} \nonumber
\end{eqnarray}
(where $s_{u,v}$ denotes the similarity between image $u$ and image $v$), under the constraint the the numbers of images in $S$ belonging to every category and the total number of images in $S$ are all bounded. It can be verified that such a constraint is a matroid (i.e., $1$-system) constraint. We perform the experiment using the CIFAR-10 dataset~\cite{krizhevsky2009learning} containing ten thousands $32\times 32$ color images. The similarity $s_{u,v}$ is computed
as the cosine similarity of the 3,072-dimensional pixel vectors of images $u$ and $v$. We restrict the selection of images from three categories: Airplane, Automobile and Bird, and the number of images selected from each category is bounded by 5.

In Fig.~\ref{image}, we plot the experimental results by scaling the number of images allowed to be selected. It can be seen from Fig.~\ref{image}(a) that RAMG and REPG achieve approximately the same utility and outperform FSGS and TGF again. Besides, TGF performs much worse than FSGS on utility in this application, as it uses a more rigorous stopping condition in its thresholding method and hence neglects many elements with small marginal gains. The results in Fig.~\ref{image}(b) show that the superiority of RAMG on efficiency still maintains, while REPG outperforms FSGS significantly. This can be explained by the fact that, the marginal gains of the unselected elements diminish vastly after a new element is selected in the image summarization application, so the performance of the thresholding method adopted in FSGS deteriorates to be close to a naive greedy algorithm, which results in its worse efficiency as FSGS maintains more candidate solutions than the other algorithms. In contrast, the performance of RAMG on efficiency is more robust against the variations of underlying data distribution.

\begin{figure}[t]
	\begin{center}
		\centerline{\includegraphics[scale=0.36]{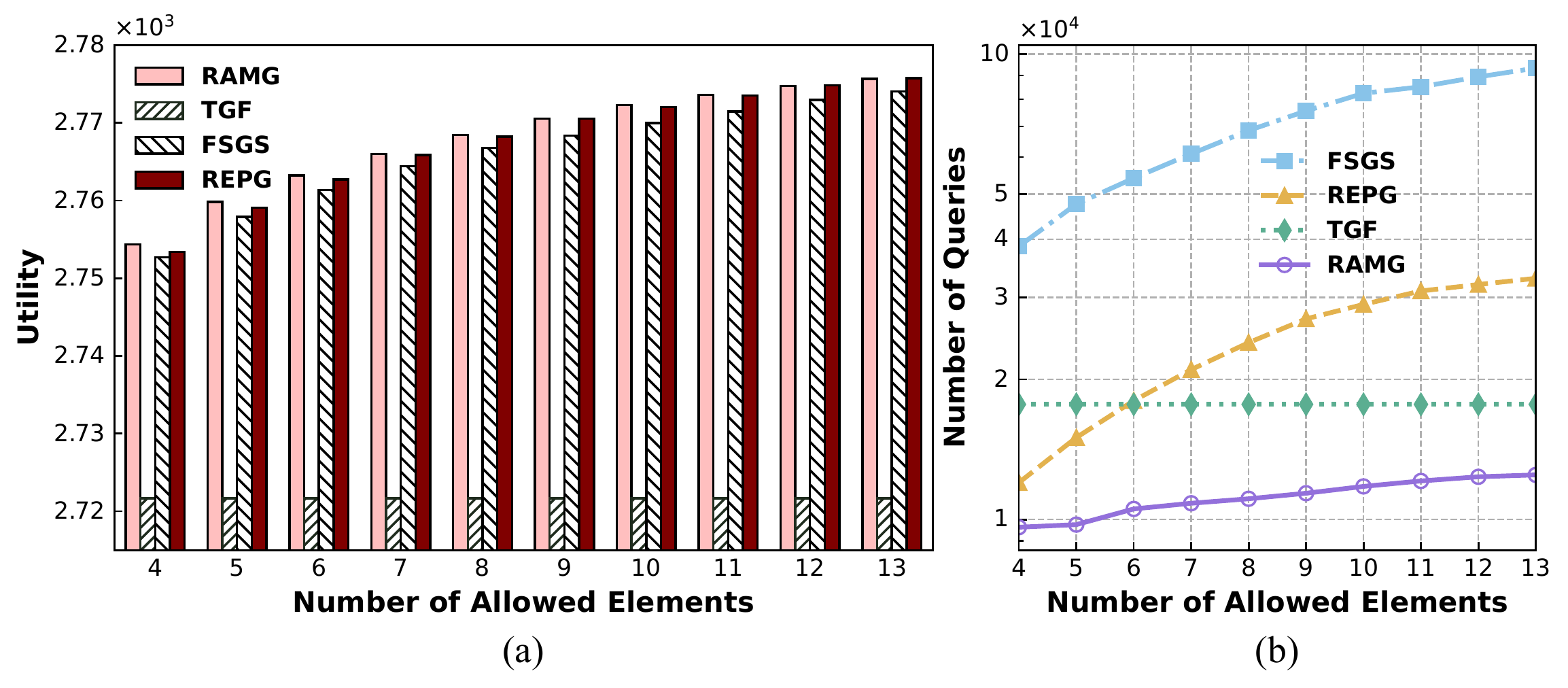}}
\vspace{-2ex}
		\caption{Image Summarization}
		\label{image}
	\end{center}
\vspace{-4ex}
\end{figure}



\subsection{Social Advertising with Multiple Products}

This application is also considered in~\cite{mirzasoleiman2016fast,fahrbach2019non,amanatidis2020fast}. We are given a social network $G=(\mathcal{N},E)$ where each node represents a user and each edge $(u,v)\in E$ is associated with a weight $w_{u,v}$ denoting the ``strength'' that $u$ can influence $v$. Suppose that there are $d$ kinds of products and an advertiser needs to select a ``seed'' set $H_i\subseteq \N$ for each $i\in [d]$, such that the total revenue can be maximized by presenting a free sample of product with type $i$ to each node in $H_i$. We also follow \cite{mirzasoleiman2016fast,amanatidis2020fast} to assume that the valuation of any user for a product is determined by the neighboring nodes owning the product with the same type, and the total revenue of $H_i$ is defined as
\begin{eqnarray}
f_i(H_i)=\sum\nolimits_{u\in \mathcal{N}\setminus H_i}\alpha_{u,i}\sqrt{\sum\nolimits_{v\in H_i}w_{v,u}},   
\end{eqnarray}
where $\alpha_{u,i}$ is a random number with known distributions. Suppose that each node $u\in \N$ can serve as a seed for at most $q$ types of products, and the total number of free samples available for any type of product is no more than $m$. The goal of the advertiser is to identify the seed sets $H_1,\cdots, H_d$ to maximize the expected value of $\sum_{i\in [d]}f_i(H_i)$ under the constraints described above. 
It is indicated in \cite{mirzasoleiman2016fast} that this problem is essentially a submodular maximization problem with a 2-system constraint.

We use the LastFM Social Network~\cite{barbieri2014influence,aslay2017revenue} with 1372 nodes and 14708 edges, and the edge weights in the network are randomly generated from the uniform distribution $\mathcal{U}(0,1)$. We adopt the same settings of \cite{amanatidis2020fast} to assume that, the parameter $\alpha_{u,i}$ follows a Pareto Type II distribution with $\lambda=1, \alpha = 2$ for all node $u$ and product $i$; and the parameters of $u$'s neighboring nodes can be observed after $u$ is selected under the adaptive setting. The values of $d$ and $q$ are set to $5$ and $3$, respectively. Following a comparison method in~\cite{amanatidis2020fast}, we also implement a variation of RAMG (dubbed RAMG$+$) where the input parameter $p$ is randomly sampled from (0.9,1). To test the performance of ARG, we randomly generate 20 realizations of the problem instance described above, and plot the average utility/number of queries of ARG on all the generated realizations.

\begin{figure}[ht]
	\begin{center}
		\centerline{\includegraphics[scale=0.36]{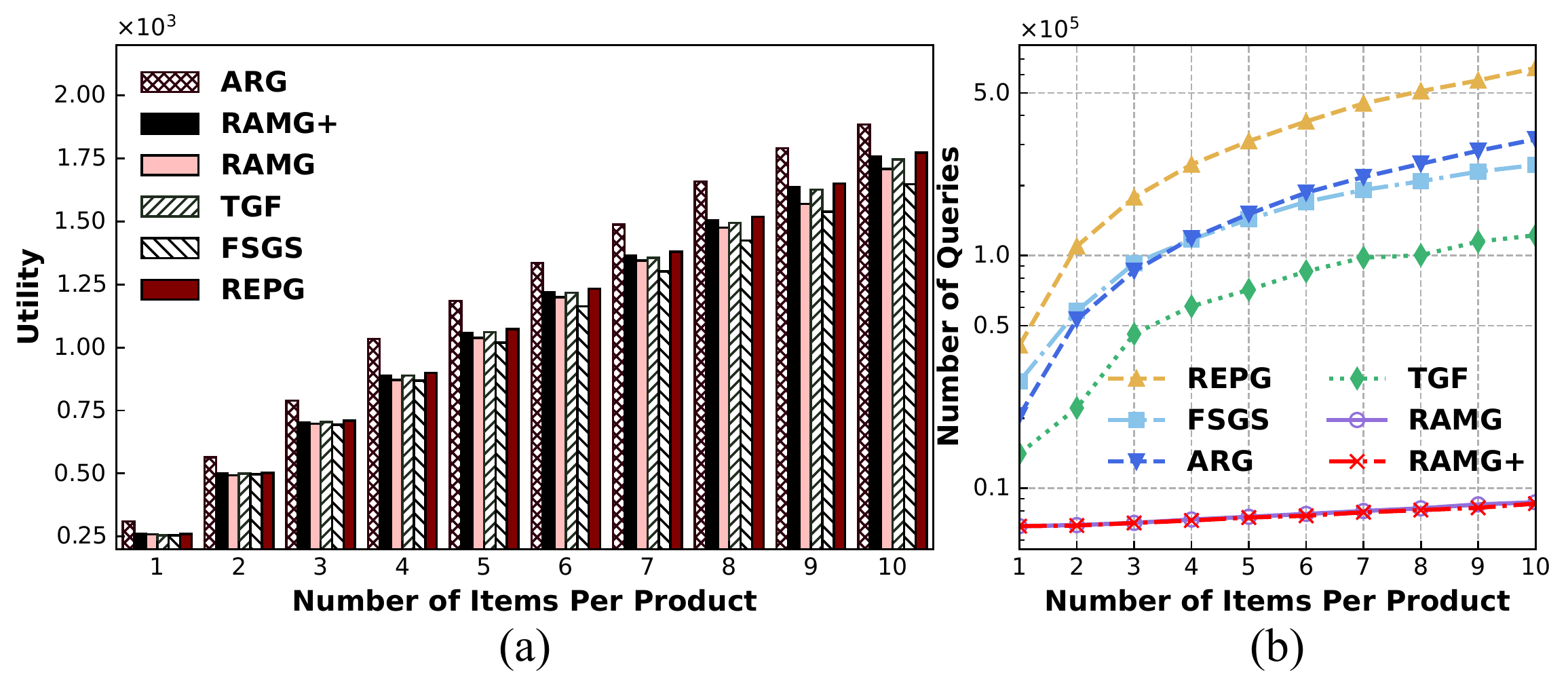}}
\vspace{-2ex}
		\caption{Social Advertising with Multiple Products}
		\label{social}
	\end{center}
\vspace{-3ex}
\end{figure}

We study the performance of all algorithms in Fig.~\ref{social} by scaling the number of items of each product available for seeding (i.e., $m$). It can be seen from Fig.~\ref{social}(a) that RAMG$+$, TGF and REPG achieve approximately the same utility, while the performance of RAMG and FSGS is slightly weaker. Note that RAMG$+$ has a weaker theoretical approximation ratio than RAMG. However, it is well known that approximation ratio is only a worst-case performance guarantee. Fig.~\ref{social}(a) also reveals that ARG performs the best on utility, which is not surprising as it can take advantage on side observation. In Fig.~\ref{social}(b), we compare the efficiency of all implemented algorithms and the results are qualitatively similar to those in Figs.~\ref{movie}-\ref{image}. Note that RAMG$+$ performs almost the same with RAMG on efficiency, which implies that it improves the utility performance of RAMG ``for free''.

\section{Conclusion} \label{sec:conclude}

We have proposed the first randomized algorithms for submodular maximization with a $k$-system constraint, under both the non-adaptive setting and the adaptive setting. Our algorithms outperform the existing algorithms in terms both approximation ratio and time complexity, and their superiorities have also been demonstrated by extensive experimental results on several applications related to data mining and social computing.

%

%


\bibliography{mylib}
\bibliographystyle{icml2021}

\appendix

\section{Missing Proofs from Section~\ref{sec:nonadaptive}}

\subsection{Proof of Lemma~\ref{lma:mapping}}

%
%
%
\red{
\begin{proof}
The proof is constructive and is inspired by~\cite{calinescu2011maximizing,han2020deterministic}. For clarity, we provide a procedure to construct $\sigma_i(\cdot)$, as shown by Algorithm~\ref{alg:mapping}. Suppose that the elements in $S_i$ are $\{z_1,\cdots, z_q\}$ (listed according to the order that they are added into $S_i$). Algorithm~\ref{alg:mapping} finds a series of sets $J_0\subseteq J_1\subseteq \cdots\subseteq J_q=Q_i$ such that all the elements in $M_t=J_t\setminus J_{t-1}$ is mapped to $z_t$ by $\sigma_i(\cdot)$ for any $t\in \{1,2,\cdots,q\}$. From Algorithm~\ref{alg:mapping}, it can be easily seen that $\sigma_i(\cdot)$ satisfies the conditions required by the lemma. The only problem left is to prove that all the elements in $Q_i$ is mapped by $\sigma_i(\cdot)$, i.e., to prove $J_0=\emptyset$. Indeed, we can prove a stronger result $\forall t\in \{0,1,\cdots,q\}: |J_t|\leq kt$ by induction:
%
%
	\begin{itemize}
		\item When $t=q$, we will prove $|J_q|\leq kq$ by showing that $S_i$ is a base of $Q_i\cup S_i$. It is obvious that each element $u\in  O_i^-$ satisfies $S_i\cup \{u\}\notin \mathcal{I}$ according to the definition of $O_i^-$. Moreover, for any element $u\in \cup_{j\in [\ell]\setminus\{i\}}(O_j^{i-}\cup \widehat{O}_j^{i-})$, we must have $S_i\cup \{u\}\notin \mathcal{I}$, because otherwise we have $S_i^<(u)\cup \{u\}\in \mathcal{I}$ due to $S_i^<(u)\subseteq S_i$ and the down-closed property of independence systems, contradicting the definition of $O_j^{i-}$ and $\widehat{O}_j^{i-}$. These reasoning implies that $S_i$ is a base of $Q_i\cup S_i$. Note that $Q_i\subseteq O$. So we can get $|J_q|=|Q_i|\leq k|S_i|= kq$ according to the definition of $k$-systems.
		\item Suppose that $|J_t|\leq kt$ holds, we will prove $|J_{t-1}|\leq k(t-1)$. If the set $C_t$ determined in Line~\ref{ln:defiofct} of Algorithm~\ref{alg:mapping} has a cardinality larger than $k$, then we have $|M_t|=k$ according to Algorithm~\ref{alg:mapping} and hence $|J_{t-1}|=|J_{t}|-k\leq k(t-1)$. If $|C_t|\leq k$, then $\{{z_1},\cdots,z_{t-1}\}$ must be a base of $\{{z_1},\cdots,z_{t-1}\}\cup J_{t-1}$, because there does not exist $u\in J_{t-1}\setminus \{{z_1},\cdots,z_{t-1}\}$ such that $\{{z_1},\cdots,z_{t-1}\}\cup \{u\}\in \mathcal{I}$ according to Algorithm~\ref{alg:mapping}. So we also have $|J_{t-1}|\leq k(t-1)$ according to $J_{t-1}\in \mathcal{I}$ and the definition of $k$-systems.
	\end{itemize}
	From the above reasoning we know $J_0=\emptyset$. So the lemma follows.
\end{proof}
}

\begin{algorithm} [h]
	\caption{\textsc{Constructing the Mapping $\sigma_i(\cdot)$}}
	\label{alg:mapping}
	\begin{algorithmic}
		\STATE {\bfseries Initialize:} Denote the elements in $S_{i}$ as $\{z_1,\cdots,z_q\}$, where elements are listed according to the order that they are added into $S_i$; $J_q\gets Q_i$
	\end{algorithmic}
	\begin{algorithmic}[1]
		\FOR{$t=q$ {\bfseries to} $0$}
		\STATE $C_t\gets \{e\in J_t\backslash \{{z_1},\cdots,z_{t-1}\}:\{{z_1,\cdots,z_{t-1},e}\}\in \I\}$ \label{ln:defiofct}
		\IF{$|C_t|\leq k$}
		\STATE $M_t\gets C_t$
		\ENDIF
		\IF{$|C_t|> k$}
		\IF{$z_t\in C_t$}
		\STATE Find a subset $M_t\subseteq C_t$ satisfying $|M_t|=k$ and $z_t\in M_t$
		\ELSE
		\STATE Find a subset $M_t\subseteq C_t$ satisfying $|M_t|=k$
		\ENDIF
		\ENDIF
		\STATE Let $\sigma_i(z)=z_t$ for all $z\in M_t$; $J_{t-1}\gets J_t\setminus M_t$
		\ENDFOR
	\end{algorithmic}
\end{algorithm}

\subsection{Proof of Lemma~\ref{lma:mgupperbound}}

    \begin{proof}
     We first prove Eqn.~\eqref{eqn:toprove1}. According to the definitions of $O_j^{i+}$ and $\widehat{O}_j^{i+}$, any element $u\in O_j^{i+}\cup \widehat{O}_j^{i+}$ can also be added into $S_i$ without violating the feasibility of $\I$ when $u$ is inserted into $S_j$. Therefore, according to the greedy rule of \RMG~and the submodularity of $f(\cdot)$, we must have
        \begin{eqnarray}
           \forall u\in O_j^{i+}\cup \widehat{O}_j^{i+}:  f(u\mid S_i)\leq f(u\mid S_i^<(u)))\leq f(u\mid S_j^<(u)))=\delta(u)
        \end{eqnarray}

     Now we prove Eqn.~\eqref{eqn:toprove2}. Recall that $Q_i=\cup_{j\in [\ell]\setminus\{i\}}(O_j^{i-}\cup \widehat{O}_j^{i-})\cup (O\cap S_i)\cup O_i^-$. According to Lemma~\ref{lma:mapping}, any element $u\in O_j^{i-}\cup \widehat{O}_j^{i-} (j\neq i)$ can be added into $S_i^<(\pi_i(u))$ without violating the feasibility of $\I$. Moreover, $u$ must have not been considered by the algorithm at the moment that $\pi_i(u)$ is added into $S_i$, because otherwise we have $S_i^<(u)\subseteq S_i^<(\pi_i(u))$ and hence $S_i^<(u)\cup \{u\}\in \I$ due to the definition of independence systems, which contradicts the definitions of $O_j^{i-}$ and $\widehat{O}_j^{i-}$. Therefore, according to the greedy rule of  \RMG~and submodularity, we can get
     \begin{eqnarray}
           \forall u\in O_j^{i-}\cup \widehat{O}_j^{i-}:  f(u\mid S_i)\leq f(u\mid S_i^<(u)))\leq f(u\mid S_i^<(\pi_i(u)))\leq f(\pi_i(u) \mid S_i^<(\pi_i(u)))=\delta(\pi_{i}(u))
     \end{eqnarray}
     By similar reasoning, we can also prove $\forall u\in O_i^{-}: f(u\mid S_i)\leq f(u\mid S_i^<(\pi_i(u)))\leq \delta(\pi_{i}(u))$. Finally, $f(u\mid S_i)\leq \delta(\pi_{i}(u))$ trivially holds for all $u\in O\cap S_i$ as $\pi_i(u)=u$ due to Lemma~\ref{lma:mapping}. So the lemma follows.
    \end{proof}

\subsection{Proof of Lemma~\ref{lma:boundsumfosi}}

As the proof of Lemma~\ref{lma:boundsumfosi} is a bit involved, we first introduce Lemma~\ref{lma:involved}, and then use Lemma~\ref{lma:involved} to prove Lemma~\ref{lma:boundsumfosi}.

\begin{lemma} \label{lma:involved}
We have
\begin{eqnarray}
\sum_{i\in[\ell]}\Bigg(\sum_{j\in[\ell]\setminus\{i\}}\bigg(\sum_{u\in O_{j}^{i+}}\delta(u)+\sum_{u\in O_{j}^{i-}\cup \widehat{O}_j^{i-}}\delta(\pi_i(u)) \bigg) + \sum_{u\in O_i^-}\delta(\pi_i(u))\Bigg)\leq \ell(k + \ell -2) f(S^*)
\end{eqnarray}
\end{lemma}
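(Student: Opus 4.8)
\textbf{Proof plan for Lemma~\ref{lma:involved}.}
The plan is to bound the left-hand side by accounting for the total ``weight'' $\delta(\cdot)$ that each fixed element $z \in \bigcup_{i\in[\ell]} S_i$ receives across all the sums. Observe that $\delta(u_t) = f(u_t \mid S_{i_t}^<(u_t))$ is precisely the marginal gain realized (or that would have been realized) when $u_t$ was the greedily chosen element, and that $\sum_{t\in[T]} \delta(u_t) = \sum_{i\in[\ell]} f(S_i) \le \ell f(S^*)$ by telescoping, since the marginal gains along each $S_i$ sum to $f(S_i)$ and the break condition guarantees every realized gain is positive. So the key is to show that each term $\delta(\cdot)$ appearing on the LHS can be charged to some element of $U$, and that no single element $u_t \in U$ is charged more than $k + \ell - 2$ times in total.

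The charging works as follows. Fix $i \in [\ell]$ and examine the inner summand for index $i$. The term $\sum_{u \in O_j^{i+}} \delta(u)$ charges, for each $u \in O_j^{i+} \subseteq O \cap S_j$, the element $u$ itself (note $u \in U$ since $u$ was added to $S_j$); here we would use that the sets $O_j^{i+}$ for $j \neq i$ are subsets of $O \cap S_j$, and that across $j$ these are disjoint. The terms $\sum_{u\in O_j^{i-}\cup\widehat O_j^{i-}}\delta(\pi_i(u))$ and $\sum_{u\in O_i^-}\delta(\pi_i(u))$ charge, via the mapping $\sigma_i$ (written $\pi_i$) of Lemma~\ref{lma:mapping}, the element $\pi_i(u) \in S_i \subseteq U$. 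Now I count, for a fixed element $z = u_t \in U$, how many times $\delta(z)$ is charged. First, $z$ can be charged ``as itself'' at most once: this happens only for the unique $j$ with $z \in S_j$, and only in the $i$-summand for those $i\neq j$ with $z\in O_j^{i+}$ — but for each such $i$ this contributes $\delta(z)$ once, and since $O_j^{i+} \subseteq O\cap S_j$, summing over the (at most $\ell-1$) choices of $i$ gives at most $\ell - 1$ charges of the first type; wait — I need to be more careful. Let me restructure: $z\in S_j$ is charged as ``$\delta(u)$'' with $u=z$ once for each $i\neq j$ such that $z\in O_j^{i+}$, i.e.\ at most $\ell-1$ times. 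Separately, $z$ is charged as ``$\delta(\pi_i(u))$'' only when $z\in S_i$, forcing $i = j$, and by property (2) of Lemma~\ref{lma:mapping} at most $k$ elements of $Q_j$ map to $z$ under $\sigma_j$, giving at most $k$ such charges. But property (3) says the element $z$ itself (if $z\in O\cap S_j$) maps to itself, and in that case $z\in O\cap S_j$ also could be double-counted with the first-type charge — this overlap is exactly what lets us shave the bound from $k+\ell-1$ down to $k+\ell-2$.

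So the heart of the argument is the bookkeeping: each $z\in U$ receives at most $\ell-1$ charges of the ``self'' type (one per index $i\ne j$) and at most $k$ charges through the mappings $\sigma_j$, but when $z\in O\cap S_j$ the self-charge for a given $i$ coincides with a mapping-charge (since $\pi_i(z)$ need not equal $z$ for $i\ne j$ — let me recheck: $\pi_i$ maps $Q_i$ into $S_i$, so $\pi_i(z)$ lands in $S_i$, not necessarily $z$), hmm. The cleaner route, which I would actually pursue, is: bound the first-type contribution $\sum_i\sum_{j\ne i}\sum_{u\in O_j^{i+}}\delta(u)$ by noting $\bigcup_{i\ne j}O_j^{i+}\subseteq O\cap S_j\subseteq U$ so this whole double sum is at most $(\ell-1)\sum_{u\in U}\delta(u)\le(\ell-1)\ell f(S^*)$; and bound the mapping contribution $\sum_i\big(\sum_{j\ne i}\sum_{u\in O_j^{i-}\cup\widehat O_j^{i-}}+\sum_{u\in O_i^-}\big)\delta(\pi_i(u))$ using property (2): for each fixed $i$, each $z\in S_i$ gets at most $k$ preimages in $Q_i$, but the $|O\cap S_i|$ elements that map to themselves only contribute their own $\delta$ once, so this sum is at most $(k-1)f(S_i) + f(S_i) = \dots$; more precisely $\sum_{u\in Q_i}\delta(\pi_i(u))\le k\sum_{z\in S_i}\delta(z)=k f(S_i)$ in the worst case, but we need the extra $-1$, so I'd argue that the self-mapped elements of $O\cap S_i$ are \emph{already counted} among the first-type charges, letting us remove one full copy of $\sum_i f(S_i)\le\ell f(S^*)$. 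Combining: total $\le (\ell-1)\ell f(S^*) + k\ell f(S^*) - \ell f(S^*) = \ell(k+\ell-2)f(S^*)$. The main obstacle will be making the ``already counted'' overlap precise — i.e.\ correctly pairing up, for $u\in O\cap S_i\subseteq Q_i$, the mapping charge $\delta(\pi_i(u))=\delta(u)$ with exactly one of the $\ell-1$ first-type slots — so that the double-counting is genuinely avoided rather than merely heuristically, and checking that $\sum_{t\in[T]}\delta(u_t)=\sum_{i\in[\ell]}f(S_i)$ holds exactly given that discarded elements (those in $D_j$) contribute $\delta(u_t)$ to the sum over $U$ but not to any $f(S_i)$ — which means the telescoping bound $\sum_{u\in U}\delta(u)\le\ell f(S^*)$ actually \emph{fails}, and one must instead keep $\sum_{u\in U}\delta(u) = \sum_i f(S_i) + \sum_j\sum_{u\in D_j}\delta(u)$ and only the non-discarded part telescopes; so the correct statement to track is that the first-type sum is over $O\cap S_j$ (accepted elements only, not $D_j$), hence genuinely bounded by $(\ell-1)\sum_i f(S_i)\le(\ell-1)\ell f(S^*)$, and this is the subtlety that makes the bound come out clean.
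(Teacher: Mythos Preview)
Your proposal is correct and follows essentially the same approach as the paper: split the LHS into the ``self'' charges $\sum_i\sum_{j\neq i}\sum_{u\in O_j^{i+}}\delta(u)$ (each $u\in O\cap S_j$ counted at most $\ell-1$ times, all accepted elements so the telescoping bound $\sum_{u\in S_j}\delta(u)=f(S_j)$ applies) and the ``mapping'' charges over $Q_i\setminus(O\cap S_i)$, then use property~(3) of Lemma~\ref{lma:mapping} to observe that adding back $\sum_{u\in O\cap S_i}\delta(u)$ completes the sum over $Q_i$, which is then bounded by $kf(S_i)$ via property~(2) --- yielding $(\ell-2)\sum_i\sum_{u\in O\cap S_i}\delta(u)+k\sum_i f(S_i)\le \ell(k+\ell-2)f(S^*)$. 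The paper packages the same overlap argument slightly differently, using a cyclic shift $\lambda(i)=(i\bmod\ell)+1$ to peel off one copy of $\sum_i\sum_{u\in O\cap S_i}\delta(u)$ from the self-charge side and merge it into the $Q_i$ sum, but the accounting is identical; your worry about making the overlap ``genuine rather than heuristic'' is resolved exactly by keeping $\sum_i\sum_{u\in O\cap S_i}\delta(u)$ symbolic until after the two pieces are combined, rather than prematurely bounding it by $\ell f(S^*)$.
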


\begin{proof}

For any $i\in [\ell]$, let $\lambda(i)=(i\mod \ell)+1$. So we have


\begin{eqnarray}
\sum_{i\in[\ell]}\sum_{u\in O_{\lambda(i)}^{i+}}\delta(u)=\sum_{j\in[\ell]}\sum_{u\in O_{j}^{\lambda^{-1}(j)+}}\delta(u)\leq \sum_{j\in[\ell]}\sum_{u\in O\cap S_j}\delta(u)=\sum_{i\in[\ell]}\sum_{u\in O\cap S_i}\delta(u), \label{eqn:ilambdai}
\end{eqnarray}
where the inequality is due to $O_{j}^{\lambda^{-1}(j)+}\subseteq O\cap S_j$ and $\forall u\in S_j: \delta(u)>0$. So we can get 

\begin{eqnarray}
		\sum_{i\in[\ell]}\sum_{j\in[\ell]\setminus\{i\}}\sum_{u\in O_{j}^{i+}}\delta(u)&=&\sum_{i\in[\ell]}\bigg(\sum_{j\in[\ell]\setminus\{i,\lambda(i)\}}\sum_{u\in O_{j}^{i+}}\delta(u)+\sum_{u\in O_{\lambda(i)}^{i+}}\delta(u)\bigg) \nonumber\\
        &\leq&\sum_{i\in[\ell]}\sum_{j\in[\ell]\setminus\{i,\lambda(i)\}}\sum_{u\in O\cap S_j}\delta(u)+\sum_{i\in[\ell]}\sum_{u\in O\cap S_i}\delta(u)\label{eqn:touseilambadai}\\
        &\leq& \ell(\ell-2)f(S^*) + \sum_{i\in[\ell]}\sum_{u\in O\cap S_i}\delta(u) \label{eqn:moresimple}
\end{eqnarray}
where we leverage Eqn.~\eqref{eqn:ilambdai} to derive Eqn.~\eqref{eqn:touseilambadai}, and Eqn.~\eqref{eqn:moresimple} is due to $\sum_{u\in O\cap S_j}\delta(u)\leq \sum_{u\in S_j}\delta(u)\leq f(S_j)\leq f(S^*)$. Moreover, we can get
\begin{eqnarray}
&&\sum_{i\in[\ell]}\Bigg(\sum_{j\in[\ell]\setminus\{i\}}\bigg(\sum_{u\in O_{j}^{i+}}\delta(u)+\sum_{u\in O_{j}^{i-}\cup \widehat{O}_j^{i-}}\delta(\pi_i(u)) \bigg) + \sum_{u\in O_i^-}\delta(\pi_i(u))\Bigg) \nonumber\\
&=& \sum_{i\in[\ell]}\sum_{j\in[\ell]\setminus\{i\}}\sum_{u\in O_{j}^{i+}}\delta(u)+ \sum_{i\in[\ell]}\bigg(\sum_{j\in[\ell]\setminus\{i\}}\sum_{u\in O_{j}^{i-}\cup \widehat{O}_j^{i-}}\delta(\pi_i(u))+ \sum_{u\in O_i^-}\delta(\pi_i(u))\bigg) \nonumber\\
&\leq& \ell(\ell-2)f(S^*)+\sum_{i\in[\ell]}\bigg(\sum_{u\in O\cap S_i}\delta(u)+\sum_{j\in[\ell]\setminus\{i\}}\sum_{u\in O_{j}^{i-}\cup \widehat{O}_j^{i-}}\delta(\pi_i(u))+ \sum_{u\in O_i^-}\delta(\pi_i(u))\bigg) \label{eqn:touseabove}\\
&=& \ell(\ell-2)f(S^*)+\sum_{i\in[\ell]}\sum_{u\in Q_i} \delta(\pi_i(u)) \nonumber\\
&\leq& \ell(\ell-2)f(S^*)+ k \sum_{i\in[\ell]}\sum_{u\in S_i} \delta(u) \label{eqn:touselemmafirst}\\
&\leq& \ell(\ell-2)f(S^*)+ k \sum_{i\in[\ell]}f(S_i) \leq \ell(k+\ell-2) f(S^*) \label{eqn:touselemmasecond}
\end{eqnarray}
where $Q_i=\cup_{j\in [\ell]\setminus\{i\}}(O_j^{i-}\cup \widehat{O}_j^{i-})\cup (O\cap S_i)\cup O_i^-$ is defined in Lemma~\ref{lma:mapping}; Eqn.~\eqref{eqn:touseabove} is due to Eqn.~\eqref{eqn:moresimple}; and Eqn.~\eqref{eqn:touselemmafirst} is due to Lemma~\ref{lma:mapping}. So the lemma follows.
\end{proof}

Now we provide the proof of Lemma~\ref{lma:boundsumfosi}:

\begin{proof}
Let $G_i=[\cup_{i\in[\ell]\backslash \{i\}}(O_{j}^{i+}\cup O_{j}^{i-}\cup \widehat{O}_j^{i+}\cup \widehat{O}_j^{i-})]\cup O_i^-\cup [O\cap D_i]$ for all $i\in [\ell]$. It is not hard to see that $G_i\subseteq O\setminus S_i$ and $\forall u\in O\setminus (S_i\cup G_i): f(u\mid S_i)\leq 0$. Therefore, we can get
%
%
	\begin{eqnarray}
		&&\sum_{i\in[\ell]}\Big(f(O\cup S_i)-f(S_i)\Big) \nonumber
		\\
		&\leq&\sum_{i\in[\ell]}\Bigg(\sum_{j\in[\ell]\setminus\{i\}}\bigg(\sum_{u\in O_{j}^{i+}\cup \widehat{O}_j^{i+}}f(u\mid S_i)+\sum_{u\in O_{j}^{i-}\cup \widehat{O}_j^{i-}}f(u\mid S_i)\bigg) + \sum_{u\in O_i^-}f(u\mid S_i)+\sum_{u\in O\cap D_i}f(u\mid S_i)\Bigg)~~\label{eqn:duetosubmodular}
		\\
		&\leq&\sum_{i\in[\ell]}\Bigg(\sum_{j\in[\ell]\setminus\{i\}}\bigg(\sum_{u\in O_{j}^{i+}\cup \widehat{O}_j^{i+}}\delta(u)+\sum_{u\in O_{j}^{i-}\cup \widehat{O}_j^{i-}}\delta(\pi_i(u)) \bigg) + \sum_{u\in O_i^-}\delta(\pi_i(u))+\sum_{u\in O\cap D_i}\delta(u)\Bigg) \label{eqn:uselmaandsub}
		\\
		&=&\sum_{i\in[\ell]}\Bigg(\sum_{j\in[\ell]\setminus\{i\}}\bigg(\sum_{u\in O_{j}^{i+}}\delta(u)+\sum_{u\in O_{j}^{i-}\cup \widehat{O}_j^{i-}}\delta(\pi_i(u)) \bigg) + \sum_{u\in O_i^-}\delta(\pi_i(u))\Bigg) \nonumber\\
&&+ \sum_{i\in [\ell]}\bigg(\sum_{j\in [\ell]\setminus\{i\}}\sum_{u\in \widehat{O}_j^{i+}}\delta(u)+\sum_{u\in O\cap D_i}\delta(u)\bigg) \nonumber\\
 &\leq& \ell(k + \ell -2) f(S^*)+ \sum_{i\in [\ell]}\bigg(\sum_{j\in [\ell]\setminus\{i\}}\sum_{u\in \widehat{O}_j^{i+}}\delta(u)+\sum_{u\in O\cap D_i}\delta(u)\bigg) \label{eqn:touseintrolma}
	\end{eqnarray}
where Eqn.~\eqref{eqn:duetosubmodular} is due to submodularity of $f(\cdot)$; Eqn.~\eqref{eqn:uselmaandsub} is due to Lemma~\ref{lma:mgupperbound} and submodularity; and Eqn.~\eqref{eqn:touseintrolma} is due to Lemma~\ref{lma:involved}. Moreover, we can get
	\begin{eqnarray}
		&&\sum_{i\in [\ell]}\bigg(\sum_{j\in [\ell]\setminus\{i\}}\sum_{u\in \widehat{O}_j^{i+}}\delta(u)+\sum_{u\in O\cap D_i}\delta(u)\bigg)\leq \sum_{i\in [\ell]}\bigg(\sum_{j\in [\ell]\setminus\{i\}}\sum_{u\in O\cap D_j}\delta(u)+\sum_{u\in O\cap D_i}\delta(u)\bigg)  \label{eqn:moreisbigger}
		\\
		&=&\sum_{i\in [\ell]}\sum_{j\in [\ell]}\sum_{u\in O\cap D_j}\delta(u)= \ell\sum_{u\in \mathcal{N}}X_u\cdotp \delta(u) \label{eqn:tocombinefinish}
	\end{eqnarray}
where Eqn.~\eqref{eqn:moreisbigger} is due to $\widehat{O}_j^{i+}\subseteq O\cap D_j$ and $\forall u\in D_j: \delta(u)>0$. Combining Eqn.~\eqref{eqn:touseintrolma} and Eqn.~\eqref{eqn:tocombinefinish} finishes the proof of Lemma~\ref{lma:boundsumfosi}.
\end{proof}

\subsection{Proof of Lemma~\ref{lma:boundexptation}}

We first quote the following lemma presented in \cite{buchbinder2014submodular}:

\begin{lemma}
\cite{buchbinder2014submodular} Given a ground set $\N$ and any non-negative submodular function $g(\cdot)$ defined on $2^{\N}$, we have $\E[g(Y)]\geq (1-p)g(\emptyset)$ if $Y$ is a random subset of $\N$ such that each element in $\N$ appears in $Y$ with probability of at most $p$ (not necessarily independently).
\label{lma:randomplma}
\end{lemma}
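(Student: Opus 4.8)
This is the well-known sampling lemma quoted here from \cite{buchbinder2014submodular}, and the plan is to give a short self-contained proof through the Lov\'asz extension. Write $x_i=\Pro[i\in Y]$ for each $i\in\N$, so that $x_i\le p$ by hypothesis, and let $x=(x_i)_{i\in\N}\in[0,1]^{\N}$ be the marginal vector of $Y$. Recall the Lov\'asz extension $\hat g(y)=\E_{\theta}\big[g(\{i\in\N: y_i\ge\theta\})\big]$, where $\theta$ is drawn uniformly from $[0,1]$; since $g$ is submodular this function is convex on $[0,1]^{\N}$, and it satisfies $\hat g(\mathbf{1}_S)=g(S)$ for every $S\subseteq\N$. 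These two facts — convexity for submodular $g$, and agreement with $g$ at $0/1$ points — are the only external ingredients the argument needs.

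The proof then proceeds in two short steps. First, since $\sum_{S\subseteq\N}\Pro[Y=S]\,\mathbf{1}_S=x$, applying Jensen's inequality to the convex function $\hat g$ gives
\[
\E[g(Y)]=\sum_{S\subseteq\N}\Pro[Y=S]\,\hat g(\mathbf{1}_S)\ \ge\ \hat g\Big(\sum_{S\subseteq\N}\Pro[Y=S]\,\mathbf{1}_S\Big)=\hat g(x).
\]
Second, because $x_i\le p$ for every $i$, the set $\{i\in\N: x_i\ge\theta\}$ is empty whenever $\theta>p$, so splitting the integral defining $\hat g(x)$ at $\theta=p$ and using $g\ge 0$ on $[0,p]$ yields
\[
\hat g(x)=\int_0^1 g\big(\{i\in\N: x_i\ge\theta\}\big)\,\dif\theta\ \ge\ \int_p^1 g(\emptyset)\,\dif\theta=(1-p)\,g(\emptyset).
\]
Chaining the two displays gives $\E[g(Y)]\ge(1-p)g(\emptyset)$, which is the claim; note that submodularity is used only to make $\hat g$ convex, and non-negativity only to discard the integral over $[0,p]$, so both hypotheses are genuinely needed.

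The point to watch — and the reason the statement is less trivial than it looks — is the phrase ``not necessarily independently''. For a product distribution with $\Pro[i\in Y]=p_i\le p$ the bound follows at once from concavity of the multilinear extension of $g$ along positive directions together with $g\ge 0$, and one might hope to reduce to that case or to induct on $|\N|$; but conditioning on whether a fixed element lies in $Y$ can push the conditional inclusion probabilities of the remaining elements above $p$, so a naive induction does not close. Routing the general case through a convexity argument such as the one above is the cleanest way around this, so that is the version I would write out in full.
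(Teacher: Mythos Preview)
Your argument is correct. The paper does not prove this lemma itself: it is quoted from \cite{buchbinder2014submodular} and invoked only as a black box (in the proofs of Lemma~\ref{lma:boundexptation} and Lemma~\ref{lma:concat}), so there is no in-paper proof to compare against. For the record, the Lov\'asz-extension route you take---Jensen to pass from $\E[g(Y)]$ to $\hat g(x)$, then bounding $\hat g(x)$ by noting that the level sets $\{i:x_i\ge\theta\}$ are empty for $\theta>p$---is essentially the standard proof given in the cited source as well.
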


With the above lemma, Lemma~\ref{lma:boundexptation} can be proved as follows:

\begin{proof}

We first prove Eqn.~\eqref{eqn:osiislarger}. Note that $S_1,S_2,\cdots, S_{\ell}$ are disjoint sets. Using submodularity, we have
\begin{eqnarray}
&&\sum_{i=1}^{\ell} f(S_i\cup O)\geq f(O)+f(S_1\cup S_2\cup O)+\sum_{i=3}^{\ell}f(S_i\cup O) \nonumber \\
&\geq& 2f(O)+f(S_1\cup S_2\cup S_3\cup O)+\sum_{i=4}^{\ell}f(S_i\cup O)\geq \cdots \geq (\ell-1)f(O) +f(\cup_{i=1}^{\ell}S_i\cup O) \label{eqn:totakeexpectation}
\end{eqnarray}
Let $g: 2^{\N}\mapsto \mathbb{R}_{\geq 0}$ be a non-negative submodular function defined as: $\forall S\subseteq \N: g(S)=f(S\cup O)$. As each element in $\N$ appears in $\cup_{i=1}^{\ell}S_i$ with probability of no more than $p$, We can use Lemma~\ref{lma:randomplma} to get
\begin{eqnarray}
\E[f(\cup_{i=1}^{\ell}S_i\cup O)]=\E[g(\cup_{i=1}^{\ell}S_i)]\geq (1-p)g(\emptyset)=(1-p)f(O) \label{eqn:oneminusp}
\end{eqnarray}
Combining Eqn.~\eqref{eqn:totakeexpectation} and Eqn.~\eqref{eqn:oneminusp} finishes the proof of Eqn.~\eqref{eqn:osiislarger}.


Next, we prove Eqn.~\eqref{eqn:xudeltaissmall}. For any $u\in \N$, let $Y_u=1$ if $u\in \cup_{i=1}^{\ell}S_i$ and $Y_u=0$ otherwise; let $\EV_u$ be an arbitrary event denoting all the random choices of \RMG~up until the time that $u$ is considered to be added into a candidate solution, or denoting all the randomness of \RMG~if $u$ is never considered. Note that we have $\sum_{u\in \N} Y_u\cdot \delta(u)\leq \sum_{i=1}^{\ell}f(S_i)$. Therefore, by the law of total probability, we only need to prove
\begin{eqnarray}
\forall u\in \N: \frac{1-p}{p}\E[Y_u\cdot\delta(u)\mid \EV_u]\geq \E[X_u\cdot\delta(u)\mid \EV_u] \label{eqn:xuyup1p}
\end{eqnarray}
for any event $\EV_u$ defined above. Note that we have $X_u=0$ and hence Eqn.~\eqref{eqn:xuyup1p} clearly holds if $u\notin O$ or $u$ is never considered by the algorithm. Otherwise we have $\E[Y_u\cdot\delta(u)\mid \EV_u]=p\cdot \delta(u)$ and $\E[X_u\cdot\delta(u)\mid \EV_u]=(1-p)\cdot\delta(u)$ due to the reason that $u$ is accepted with probability of $p$ and discarded with probability of $1-p$. Combining all these results completes the proof of Eqn.~\eqref{eqn:xudeltaissmall}.
\end{proof}

\begin{algorithm}[t]
    \caption{\textsc{Choose}($S_1,S_2,\cdots,S_{\ell}, v_1,\cdots, v_{\ell}, v^*$)}
    \label{alg:choose}
    \begin{algorithmic}[1]

        \IF {$\cup_{i=1}^{\ell}S_i=\emptyset$}   
            \STATE Let $A_i\gets \{u\in \N: \{u\}\in \I\wedge f(u\mid \emptyset)>0\}$ for all $i\in [\ell]$;
            \FORALL{$i\in [\ell]$\alglinelabel{algline:choose-init-start}}
                \STATE Let $w_i(u)\gets f(u\mid \emptyset)$ and $\tau_i(u)\gets 0$ for all $u\in A_i$;
                \STATE \text{Store $A_i$ as a priority list according to the non-increasing order of $w_i(u): u\in A_i$} for all $i\in [\ell]$;
                \STATE Let $v_i\gets \arg\max_{u\in A_i} w_i(u)$;
            \ENDFOR\alglinelabel{algline:choose-init-end}

        \ELSE

            \STATE $C\gets [\ell]\backslash \{j\in [\ell]: (v_j\neq v^*)\vee (v_j=\mathrm{NULL})\}$

            \FORALL{$i\in C$}

              \STATE Let $v_i\gets \mathrm{NULL}$ and remove all elements in $A_i$ with non-positive weights;

              \WHILE{$A_i\neq \emptyset$}

                \STATE pop out the top element $u$ from $A_i$;
                \IF{$f(u\mid S_i)$~\text{has been computed}}
                    \STATE $v_i\gets u;$~~\textbf{exit while};
                \ENDIF
                \IF{$S_i\cup \{u\}\notin\I$}
                    \STATE \textbf{continue};
                \ENDIF

                    \STATE $\mathit{old}\gets w_i(u);~\tau_i(u)\gets \tau_i(u)+1$;
                    \STATE Compute $f(u\mid S_i)$ and let $w_i(u)\gets f(u\mid S_i)$; 

                    \IF{$w_i(u)\geq\frac{\mathit{old}}{1+\epsilon}$}
                        \STATE $v_i\gets u;$~~\textbf{exit while};
                    \ELSE
                        \IF{$\tau_i(u)\leq \lceil \log_{1+\epsilon}\frac{\ell r}{\epsilon}\rceil$} \alglinelabel{ln:removedfromai}
                            \STATE re-insert $u$ into $A_i$ and resort the elements in $A_i$;
                        \ENDIF
                    \ENDIF

              \ENDWHILE
             \ENDFOR
          \ENDIF
        \STATE Let $i^*\gets \arg\max_{i\in [\ell]: v_i\neq \mathrm{NULL}}f(v_{i}\mid S_i)$ and remove $v_{i^*}$ from $A_i$ for all $i\in [\ell]$
        \STATE {\bfseries Output:} $v_1,v_2,\cdots, v_{\ell},i^*$
    \end{algorithmic}
\end{algorithm}

\vspace{-2ex}

\subsection{Proof of Theorem~\ref{thm:approxratioacc}}

For clarity, we first provide the detailed design of the accelerated version of \RMG, as shown by Algorithm~\ref{alg:accrmg}. In the $t$-th iteration, Algorithm~\ref{alg:accrmg} calls a procedure \textsc{Choose} to greedily find an candidate element $v_i$ for $S_i$ satisfying $f(v_i\mid S_i)>0$ and $S_i\cup \{v_i\}\in \I$ for each $i\in [\ell]$. The \textsc{Choose} procedure also returns an index $i_t$ same to that in Algorithm~\ref{alg:muti-randomgreedy}. After that, Algorithm~\ref{alg:accrmg} runs similarly as Algorithm~\ref{alg:muti-randomgreedy}, i.e., it inserts $v_{i_t}$ into $S_{i_t}$ with probability $p$, and then enters the $(t+1)$-th iteration. Note that the elements $v_1,\cdots,v_{\ell}$ and $v_{i_t}$ found in the $t$-th iteration are also used to call \textsc{Choose} in the $(t+1)$-th iteration, so that \textsc{Choose} need not to identify a new $v_i$ for all $i\in [\ell]: v_i\neq v_{i_t}$ (as $S_i$ does not change for these $i$'s) and hence time efficiency can be improved. Finally, Algorithm~\ref{alg:accrmg} returns the optimal set among $S_1,\cdots, S_{\ell}$ and $S_0$, where $S_0$ is the singleton set with the maximum utility.

Next, we provide a brief description on the \textsc{Choose} procedure. As explained in Sec.~\ref{sec:accel}, \textsc{Choose} maintains $\ell$ sets $A_1, A_2,\cdots, A_{\ell}$ such that $v_i$ can be selected from $A_i$. At the first time that \textsc{Choose} is called, \textsc{Choose} assigns each element $u\in A_i$ a weight $w_i(u)=f(u\mid \emptyset)$ and an integer $\tau_i(u)$ indicating how many times $w_i(u)$ has been updated \red{(Lines~\ref{algline:choose-init-start}--\ref{algline:choose-init-end})}. Afterwards, \textsc{Choose} runs as that described in Sec.~\ref{sec:accel} and finds $v_i$ for each $i\in [\ell]$. Finally, \textsc{Choose} identifies $v_{i^*}$ from $\{v_i: i\in [\ell]\}$ which has the maximum marginal gain, and it also removes $v_{i^*}$ from all $A_i:i\in [\ell]$ because $v_{i^*}$ will used as $v_{i_t}$ by Algorithm~\ref{alg:accrmg}.

Note that Algorithm~\ref{alg:accrmg} differs from Algorithm~\ref{alg:muti-randomgreedy} in two points: (1) the element $u_t$ found in the $t$-th iteration is only an $(\frac{1}{1+\epsilon})$-approximate solution; (2) there are elements removed from $A_i$ due to ``too many updates''. Based on this observation, we can slightly modify the proofs for Algorithm~\ref{alg:muti-randomgreedy} to prove Theorem~\ref{thm:approxratioacc}, as presented below:


\begin{proof}
Let $L_i$ denote the set of all elements removed from $A_i$ due to Line~\ref{ln:removedfromai} of Algorithm~\ref{alg:choose}. We can slightly modify Definition~\ref{def:keydef} to re-define the sets $O_j^{i+},O_j^{i-},\widehat{O}_j^{i+},\widehat{O}_j^{i-}, O_i^-$ as follows:
    \begin{eqnarray*}
        &&O_j^{i+}=\left\{u\in O\cap S_j : S_i^<(u)\cup \{u\} \in \mathcal I \right\}\setminus L_i; \\
        &&O_j^{i-}=\left\{u\in O\cap S_j : S_i^<(u)\cup \{u\} \notin \mathcal I \right\}\setminus L_i; \\
        &&\widehat{O}_j^{i+}=\left\{u\in O\cap D_j : S_i^<(u)\cup \{u\} \in \mathcal I \right\}\setminus L_i; \\
        &&\widehat{O}_j^{i-}=\left\{u\in O\cap D_j : S_i^<(u)\cup \{u\} \notin \mathcal I \right\}\setminus L_i;\\
        &&O_i^-=\left\{u\in O\setminus{U} : S_i\cup \{u\} \notin \mathcal I \wedge f(u\mid S_i)>0 \right\}\setminus L_i;
    \end{eqnarray*}
With this new definition, it can be easily verified that each element $u$ in $O_j^{i+}\cup O_j^{i-}\cup \widehat{O}_j^{i+}\cup \widehat{O}_j^{i-}$ is still a candidate considered for $S_i$ in the \textsc{Choose} procedure when the algorithm tries to insert $u$ into $S_j$. Therefore, according to the greedy rule of \RMG~and the $(1+\epsilon)^{-1}$-approximation ratio of \textsc{Choose}, we can use similar reasoning as that for Lemma~\ref{lma:mgupperbound} to prove
\begin{eqnarray}
        &&\forall u\in O_j^{i+}\cup \widehat{O}_j^{i+}: f(u\mid S_i)\leq (1+\epsilon)\delta(u);~~~\label{eqn:toprove1new} \\
        &&\forall u\in \cup_{j\in [\ell]\setminus\{i\}}(O_j^{i-}\cup \widehat{O}_j^{i-})\cup (O\cap S_i)\cup O_i^-: f(u\mid S_i)\leq (1+\epsilon)\delta(\pi_i(u));~~~\label{eqn:toprove2new}
\end{eqnarray}
With the above results, we can use similar reasoning as that in Lemma~\ref{lma:boundsumfosi} to prove:
\begin{eqnarray}
\frac{1}{1+\epsilon}\sum_{i\in[\ell]}f(O\mid S_i)\leq \ell(k + \ell -2) f(S^*)+\ell\sum_{u\in \N} X_u\cdot \delta(u)+\sum_{i\in [\ell]}\sum_{u\in L_i\cap O}f(u\mid S_i) \label{eqn:tocombine11}  
\end{eqnarray}
Moreover, we have
\begin{eqnarray}
\sum_{u\in L_i\cap O}f(u\mid S_i)\leq \sum_{u\in L_i\cap O}f(u\mid \emptyset)(1+\epsilon)^{-\lceil \log_{1+\epsilon}\frac{\ell r}{\epsilon} \rceil}\leq \sum_{u\in L_i\cap O}\frac{\epsilon}{\ell r}f(u)\leq \epsilon f(S^*)/\ell \label{eqn:tocombine12}
\end{eqnarray}
where the first inequality is due the reason that the weight of each element $u\in L_i$ have been updated in \textsc{Choose} procedure for more than $\lceil \log_{1+\epsilon}\frac{\ell r}{\epsilon} \rceil$ times and it diminishes by a factor of $\frac{1}{1+\epsilon}$ for each update. Combining Eqn.~\eqref{eqn:tocombine11}, Eqn.~\eqref{eqn:tocombine12} and Lemma~\ref{lma:boundexptation}, we can prove
\begin{eqnarray}
f(O)\leq \bigg[(1+\epsilon)\frac{\ell (k+\frac{\ell}{p}-1)}{\ell-p}-\frac{(\ell-1)\epsilon-\epsilon^2}{\ell-p}\bigg] \E[f(S^*)]
\end{eqnarray}
Therefore, the approximation ratio of the accelerated \textsc{RandomMultiGreedy}~algorithm is at most $(1+\epsilon)(1+\sqrt{k})^2$ when $\ell=2,p=\frac{2}{1+\sqrt{k}}$ (for a randomized algorithm), or at most $(1+\epsilon)(k+\sqrt{k}+\lceil \sqrt{k}\rceil+1)$ when $\ell=\lceil\sqrt{k}\rceil+1,p=1$ (for a deterministic algorithm). Finally, it can be seen that the \textsc{Choose} procedure incurs at most $\mathcal{O}(\log_{1+\epsilon}\frac{\ell r}{\epsilon})$ value and independence oracle queries for each element in each $A_i: i\in [\ell]$. So the total time complexity of the accelerated \textsc{RandomMultiGreedy}~algorithm is at most $\mathcal{O}(\ell n\log_{1+\epsilon}\frac{\ell r}{\epsilon})=\mathcal{O}(\frac{\ell n}{\epsilon}\log \frac{\ell r}{\epsilon})$, which completes the proof.
\end{proof}

\begin{algorithm}[t]
    \caption{\textsc{RandomMultiGreedy}$(\ell,p)$~~/*with acceleration*/}
    \label{alg:accrmg}
    \begin{algorithmic}
        \STATE {\bfseries Initialize:} $\forall i\in [\ell]:\ S_i\gets \emptyset; v_i\gets \mathrm{NULL};~~t\gets 1;u_0\gets \mathrm{NULL};$
    \end{algorithmic}
    \begin{algorithmic}[1]
        \REPEAT

        \STATE $(v_1,v_2,\cdots, v_{\ell},i_t)\gets$ \textsc{Choose}$(S_1,\cdots, S_{\ell}, v_1,\cdots, v_{\ell}, u_{t-1})$
        \IF{$\exists j\in [\ell]: v_j\neq \mathrm{NULL}$}
         \STATE $u_t\gets v_{i_t}$;
            \STATE $\mathbf{With}$ probability $p$ $\mathbf{do}$ $S_{i_t}\gets S_{i_t}\cup \{u_{t}\}$  
            \STATE $t\gets t+1$
        \ENDIF


        \UNTIL{$(\forall i\in [\ell]: v_i=\mathrm{NULL})$}
        \STATE $u^*\gets \arg\max_{u\in \N\wedge \{u\}\in \I}f(u);~S_0\gets \{u^*\}$
        \STATE $S^*\gets \arg\max_{S\in \{S_0,S_1,S_2,\cdots,S_\ell \}} f(S);~T\gets t-1$ \\
        \STATE {\bfseries Output:} $S^*, T$
    \end{algorithmic}
\end{algorithm}

%
%
%

\section{Missing Proofs from Section~\ref{sec:brg}}

\subsection{Proof of Lemma~\ref{lma:boundingmarginalgain}}
\begin{proof}
Consider the while-loop of Algorithm~\ref{alg:brg} in which $e_i$ is added into $S$. Let $\{a_1, a_2,\cdots,a_d\}$ and $\tau$ denote the random sequence returned by \textsc{RandSEQ} and the threshold considered in that iteration, respectively. Suppose that $e_i=a_q \in \{a_1, a_2,\cdots,a_d\}$. According to Line~\ref{ln:definejmini} of Algorithm~\ref{alg:brg}, we have
\begin{eqnarray}
\E[f(e_i\mid S_{i-1})\mid S_{i-1}]\geq \frac{|C_{q-1}|}{|C|}\cdot \tau\geq \tau/(1+\epsilon) \label{eqn:bigthantaueqn}
\end{eqnarray}
Suppose by contradiction that there exists $u\in \N\backslash U$ satisfying $S_{i-1}\cup \{u\}\in \I$ and $f(u\mid S_{i-1})> (1+\epsilon)^2 \E[f(e_i\mid S_{i-1})\mid S_{i-1}]$. Then we must have $f(u\mid S_{i-1})\geq (1+\epsilon)\tau$ according to Eqn.~\eqref{eqn:bigthantaueqn} and hence $\tau< \tau_{max}$. Using the down-closed property of independence systems and submodularity, this implies that $S^{(1+\epsilon)\tau}\cup \{u\}\in \I$ and $f(u\mid S^{(1+\epsilon)\tau})> (1+\epsilon)\tau$ . However, this contradicts the fact that, given any threshold $\tau$ considered by Algoirthm~\ref{alg:brg}, no elements with marginal gain larger than $\tau$ can be added into $S^{\tau}$ without violating $\I$ when the while-loop using $\tau$ in Algoirthm~\ref{alg:brg} is finished.
\end{proof}

\subsection{Proof of Theorem~\ref{thm:ratioofbrg}}
To prove the theorem, we first introduce the following lemma:

\begin{lemma}
We have $\E[f(O^-\mid S)]\leq (1+\epsilon)^2 k\cdot \E[f(S)]$
\label{lma:boundingominus}
\end{lemma}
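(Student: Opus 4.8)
The plan is to follow the same ``charging map $+$ greedy-gain'' template used for Lemma~\ref{lma:mapping}, adapted to the single-solution, batched setting of \BRG. The idea is to send each element of $O^-$ to an element $e_t$ of the final set $S$ that it could have extended while $S$ still equalled $S_{t-1}=\{e_1,\dots,e_{t-1}\}$, then use Lemma~\ref{lma:boundingmarginalgain} to charge $f(u\mid S)$ to the (expected) gain $\E[f(e_t\mid S_{t-1})\mid S_{t-1}]$ actually collected at that step, and finally sum and telescope.

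First I would record two facts about $O^-$. By definition every $u\in O^-$ satisfies $u\notin U$ and $S\cup\{u\}\notin\I$; since $S\in\I$ this means $S$ is a base of $S\cup O^-$. Writing $S=\{e_1,\dots,e_{|S|}\}$ in the order the elements enter $S$ and $S_t=\{e_1,\dots,e_t\}$, I would run exactly the backward-greedy construction of Algorithm~\ref{alg:mapping} with the single solution $S$ in the role of $S_i$ and $O^-$ in the role of $Q_i$. The only property of the domain that construction uses is that $S$ is a base of $S\cup O^-$, so the argument of Lemma~\ref{lma:mapping} applies verbatim (in particular the induction $|J_t|\le kt$ certifying the map is total), and it yields a map $\sigma\colon O^-\to S$ with: (i) at most $k$ elements of $O^-$ mapped to any fixed $e_t$; and (ii) $S_{t-1}\cup\{u\}\in\I$ whenever $\sigma(u)=e_t$.

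Next, fixing a realization of the algorithm, take $u\in O^-$ with $\sigma(u)=e_t$. By (ii), $S_{t-1}\cup\{u\}\in\I$, and $u\notin U$, so $u$ is among the candidates bounded in Lemma~\ref{lma:boundingmarginalgain} at step $t$; hence $f(u\mid S_{t-1})\le(1+\epsilon)^2\,\E[f(e_t\mid S_{t-1})\mid S_{t-1}]$, while diminishing returns give $f(u\mid S)\le f(u\mid S_{t-1})$. Combining submodularity ($f(O^-\mid S)\le\sum_{u\in O^-}f(u\mid S)$), these two bounds, a regrouping of the sum by the image $e_t$, and (i) to bound each fibre by $k$, I obtain, as an inequality between random variables,
$$f(O^-\mid S)\le(1+\epsilon)^2 k\sum_{t=1}^{|S|}\E\big[f(e_t\mid S_{t-1})\mid S_{t-1}\big].$$
To finish I would take expectations: each summand is a nonnegative conditional expectation, so the sum only increases if extended to $t\le r$ (the extra ``dummy'' terms being zero); the tower rule then turns $\E\big[\sum_{t\le r}\E[f(e_t\mid S_{t-1})\mid S_{t-1}]\big]$ into $\E\big[\sum_{t\le r}f(e_t\mid S_{t-1})\big]$, and a telescoping sum identifies this with $\E[f(S)-f(\emptyset)]\le\E[f(S)]$, giving $\E[f(O^-\mid S)]\le(1+\epsilon)^2k\,\E[f(S)]$.

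The step needing the most care is the interplay with the algorithm's randomness: Lemma~\ref{lma:boundingmarginalgain} is an inequality between random variables whose right-hand side is itself a conditional expectation, the charging map $\sigma$ is realization-dependent, and the number $|S|$ of ``real'' elements is random. My plan is therefore to keep the conditional expectations intact and argue the mapping/charging step pathwise, and only at the very end invoke linearity, the tower rule, and the non-negativity of the per-step gains to pass from the random limit $|S|$ to the deterministic limit $r$; the remaining ingredients (the $k$-system mapping, submodularity, telescoping) are routine.
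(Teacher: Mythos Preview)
Your proposal is correct and follows essentially the same approach as the paper: both construct the $k$-to-one charging map $\sigma\colon O^-\to S$ via the backward-greedy procedure of Lemma~\ref{lma:mapping}, bound $f(u\mid S)\le f(u\mid S_{t-1})$ by submodularity, invoke Lemma~\ref{lma:boundingmarginalgain} to replace $f(u\mid S_{t-1})$ by $(1+\epsilon)^2\E[f(e_t\mid S_{t-1})\mid S_{t-1}]$, and finish with the tower rule and a telescoping sum. The only cosmetic difference is that the paper carries the outer expectation through every line of the chain, whereas you argue the charging step pathwise and take expectations only at the end; the two are equivalent, and your discussion of the measurability/conditioning issues is in fact more explicit than the paper's.
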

\begin{proof}

For every possible pair of $(S,O^-)$, we can use similar reasoning as that in Lemma~\ref{lma:mapping} to find a mapping $\sigma: O^-\mapsto S$ such that at most $k$ elements are mapped to the same element in $S$ and $\{u\}\cup S^<(\sigma(u))\in \I$ for all $u\in O^-$. Recall that the elements in $S$ are denoted by $\{e_1,\cdots,e_r\}$. For any $i\in [r]$, let $\sigma^{-1}(e_i)$ denote the set of elements in $O^-\backslash S$ that are mapped to $e_i$ by $\sigma(\cdot)$ and let $\sigma^{-1}(e_i)=\emptyset$ if $e_i$ is a dummy element. So we have $|\sigma^{-1}(e_i)|\leq k$ for all $i\in [r]$. Using Lemma~\ref{lma:boundingmarginalgain} and submodularity, we get
\begin{eqnarray}
&&\E[f(O^-\mid S)]\leq \E\left[\sum_{u\in O^-\backslash S} f(u\mid S)\right]\leq \E\left[\sum_{u\in O^-\backslash S} f(u\mid S^<(\sigma(u)))\right]\\
&\leq& \E\left[\sum_{i\in [r]}\sum_{u\in \sigma^{-1}(e_i)} f(u\mid S_{i-1})\right]\leq k\cdot\E\left[\sum_{i\in [r]}\max_{u\in \sigma^{-1}(e_i)} f(u\mid S_{i-1})\right]\\
&\leq& (1+\epsilon)^2 k\cdot \E\left[ \sum_{i\in [r]} \E\left[f(e_i\mid S_{i-1})\mid S_{i-1}\right] \right]\\
&=& (1+\epsilon)^2 k\cdot \sum_{i\in [r]} \E\left[ \E\left[f(e_i\mid S_{i-1})\mid S_{i-1}\right] \right]\\
&=& (1+\epsilon)^2 k\cdot \sum_{i\in [r]} \E\left[f(e_i\mid S_{i-1})\right]=(1+\epsilon)^2 k\cdot \E\left[\sum_{i\in [r]} f(e_i\mid S_{i-1})\right]\\
&\leq&(1+\epsilon)^2 k\cdot \E\left[f(S)\right]
\end{eqnarray}
So the lemma follows.
\end{proof}

Now we use Lemma~\ref{lma:boundingominus} to prove Theorem~\ref{thm:ratioofbrg}:

\begin{proof}
For any $u\in \N$, define $X_u=1$ if $u\in (U\cap O)\backslash S$, otherwise define $X_u=0$. Note that any element $u\in O\backslash S$ must satisfy one of the following conditions: (1) $u\in O^-$; (2) $u\in (U\cap O)\backslash S$; (3) $f(u\mid S)\leq \tau_{min}$. So we can use submodularity, Lemma~\ref{lma:boundingominus} and $\tau_{max}\leq f(S)$ to get
\begin{eqnarray}
&&\E[f(O\mid S)]\leq r\tau_{min}+ \E[f(O^-\mid S)]+ \E\left[\sum_{u\in \N} X_u\cdot f(u\mid S^<(u))\right]\\
&\leq& \epsilon\cdot \E[f(S)]+ (1+\epsilon)^2 k\cdot \E[f(S)]+ \E\left[\sum_{u\in \N} X_u\cdot f(u\mid S^<(u))\right]
\end{eqnarray}
Meanwhile, as each element in $\N$ appears in $S$ with probability of no more than $p$, we can use similar reasoning as that in the proof of Lemma~\ref{lma:boundexptation} to get
\begin{eqnarray}
&&\E[f(S\cup O)]\geq (1-p)f(O);\\
&&\E\left[\sum_{u\in \N} X_u\cdot f(u\mid S^<(u))\right]\leq \frac{1-p}{p}\E[f(S)]
\end{eqnarray}
Combining the above equations completes the proof.
\end{proof}

\section{Missing Proofs from Section~\ref{sec:adaptopt}}

\subsection{Proof of Lemma~\ref{lma:concat}}

%

\begin{proof}
	Given any element set $Y\subseteq \N$ and any realization $\phi$, let $g(Y,\phi):=f(Y\cup \N(\pi_{\opt},\phi),\phi)$. \red{It is easy to verify that the non-negative function $g(\cdot, \phi)$ is  submodular.} Thus, given a fixed realization $\phi$, by Lemma~\ref{lma:randomplma}, we know that
	\begin{equation}
		\E_{\pi_{\A}}[g(\N(\pi_{\A},\phi),\phi)]\geq (1-p)g(\emptyset,\phi)
	\end{equation}
	Therefore, we have
	\begin{equation}
	f_{\avg}(\pi_{\opt}@\pi_{\mathcal{A}})=\E_{\Phi}[\E_{\pi_{\A}}[g(\N(\pi_{\A},\Phi),\Phi)]]\geq\E_{\Phi}[(1-p)g(\emptyset,\Phi)]=(1-p)f_{\avg}(\pi_{\opt}),
	\end{equation}
	which completes the proof.
\end{proof}

%
%
%

\subsection{Proof of Lemma~\ref{lma:upperboundofmg}}

\begin{proof}
	We first give an equivalent expression of the expected utility by a function of conditional \red{expected marginal gains}. Given a deterministic policy $\pi$ and a realization $\phi$, for each $u\in \N$, let $Y_{u}(\phi)$ be a boolean random variable such that $Y_{u}(\phi)=1$ if $u\in \N(\pi,\phi)$ and $Y_{u}(\phi)=0$ otherwise. Further, denote by $\psi_u^\pi(\phi)$ the partial realization observed by $\pi$ right before considering $u$ under realization $\phi$, and denote by $\Psi_u^\pi$ a random partial realization right before considering $u$ by $\pi$. We also use $Y_{u}(\psi_u^\pi(\phi))$ to represent $Y_{u}(\phi)$, since the partial realization $\psi_u^\pi(\phi)$ suffices to determine whether $u$ is added to the solution under realization $\phi$. Thus,
	\begin{eqnarray}
	&&\E_{\Phi} [f(\N(\pi,\Phi),\Phi)] \nonumber\\
	&=&\E_{\Phi} \Big[\sum_{u\in \N}\Big(Y_{u}(\Phi)\cdot\big(f(\dom(\psi_u^\pi(\Phi))\cup\{u\},\Phi)-f(\dom(\psi_u^\pi(\Phi)),\Phi)\big)\Big)\Big] \nonumber\\
	&=&\sum_{u\in \N}\E_{\Psi_u^\pi}\Big[\E_{\Phi} \Big[Y_{u}(\Phi)\cdot\big(f(\dom(\Psi_u^\pi)\cup\{u\},\Phi)-f(\dom(\Psi_u^\pi),\Phi)\big)\bigm| \Phi\sim \Psi_u^\pi\Big]\Big] \nonumber\\
	&=&\sum_{u\in \N}\E_{\Psi_u^\pi} \Big[Y_{u}(\Psi_u^\pi)\cdot\Delta(u\mid \Psi_u^\pi)\Big] =\sum_{u\in \N}\E_{\Phi}\Big[\E_{\Psi_u^\pi} \Big[Y_{u}(\Psi_u^\pi)\cdot\Delta(u\mid \Psi_u^\pi)\bigm|  \Phi\sim \Psi_u^\pi\Big]\Big] \nonumber\\
	&=&\sum_{u\in \N}\E_{\Phi}\Big[Y_{u}(\Phi)\cdot\Delta(u\mid \psi_u^\pi(\Phi))\Big]=\E_{\Phi} \Big[\sum_{u\in \N(\pi,\Phi)}\Delta(u\mid \psi_u^\pi(\Phi))\Big].
	\end{eqnarray}
	
	Denote by $\psi(\pi_{\A},\phi)$ the observed partial realization at the end of $\pi_{\A}$ under realization $\phi$. Then, similar to the above analysis, we have
	\begin{align*}
		f_{\avg}(\pi_{\A}@\pi_{\opt})
		&=\E_{\Phi,\pi_{\A}@\pi_{\opt}}[f(\N(\pi_{\A}@\pi_{\opt},\Phi),\Phi)]\\
		&=\E_{\pi_{\A}@\pi_{\opt}} \Big[\sum_{u\in \N(\pi_{\A},\Phi)}\Delta(u\mid \psi_u(\Phi))+\sum_{u\in \N(\pi_{\opt},\Phi)\setminus\N(\pi_{\A},\Phi)}\Delta(u\mid \psi(\pi_{\A},\Phi)\cup\psi_u^{\pi_{\opt}}(\Phi))\Big]\\
		&=f_{\avg}(\pi_{\A})+ \E_{\pi_{\A}@\pi_{\opt}} \Big[\sum_{u\in \N(\pi_{\opt},\Phi)\setminus\N(\pi_{\A},\Phi)}\Delta(u\mid \psi(\pi_{\A},\Phi)\cup\psi_u^{\pi_{\opt}}(\Phi))\Big]\\
		&\leq f_{\avg}(\pi_{\A})+ \E_{\pi_{\A}} \Big[\sum_{u\in \N(\pi_{\opt},\Phi)\setminus\N(\pi_{\A},\Phi)}\Delta(u\mid \psi_u(\Phi))\Big],
	\end{align*}
    where the inequality is due to adaptive submodularity and $\psi_u(\Phi)\subseteq\psi(\pi_{\A},\Phi)\subseteq \psi(\pi_{\A},\Phi)\cup\psi_u^{\pi_{\opt}}(\Phi)$.
\end{proof}

\subsection{Proof of Lemma~\ref{lma:boundo1}}

\begin{proof}
    Since $f_{\avg}(\pi_{\A})= \E_{\pi_{\A}}\Big[\E_{\Phi} \Big[\sum_{u\in \N(\pi_{\A},\Phi)}\Delta(u\mid \psi_{u}(\Phi))\Big]\Big]$, it suffices to prove
    \begin{equation}
        \sum\nolimits_{u\in O_1(\phi)} \Delta(u\mid \psi_{u}(\phi))\leq k\cdot \sum\nolimits_{u\in \N(\pi_{\A},\phi)}\Delta(u\mid \psi_{u}(\phi)) \label{eqn:mustholdkeyeqn}
    \end{equation}
%
    for any given realization $\phi\in Z^{\N}$ and fixed randomness of $\pi_{\A}$. Given a realization $\phi$, let $\hat{u}_i$ be the $i$-th element selected by $\pi_{\A}$ and let $\hat{S}_i$ be the first $i$ elements picked, i.e.,~$\hat{S}_i=\{\hat{u}_1,\dotsc,\hat{u}_i\}$, for $i=1,2,\dotsc,h$ where $h:=\abs{\N(\pi_{\A},\phi)}$. Suppose that there exists a partition $O_{1,1},O_{1,2},\dotsc,O_{1,h}$ of $O_1(\phi)$ such that for all $i=1,2,\dotsc,h$,
    \begin{equation}
        \sum\nolimits_{u\in O_{1,i}}\Delta(u\mid \psi_{u}(\phi))\leq k\cdot \Delta(\hat{u}_i\mid \psi_{\hat{u}_i}(\phi)),
    \end{equation}
    then Eqn.~\eqref{eqn:mustholdkeyeqn} must hold due to
    \begin{equation}
        \sum_{u\in O_1(\phi)} \Delta(u\mid \psi_{u}(\phi))= \sum_{i=1}^{h}\sum_{u\in O_{1,i}}\Delta(u\mid \psi_{u}(\phi))\leq  k\cdot\sum_{i=1}^{h} \Delta(\hat{u}_i\mid \psi_{\hat{u}_i}(\phi))=k\cdot \sum_{u\in \N(\pi_{\A},\phi)}\Delta(u\mid \psi_{u}(\phi)).
    \end{equation}
    Therefore, we just need to show the existence of such a desired partition of $O_1$, as proved below.

    \red{We use the following iterative algorithm to find the partition, which is inspired by \cite{calinescu2011maximizing}. Define $\N_{h}:=O_1(\phi)$. For $i=h,h-1,\dotsc,2$, let $B_i:=\{u\in \N_i\mid \hat{S}_{i-1}\cup \{u\}\in\I\}$. If $\abs{B_i}\leq k$, set $O_{1,i}=B_i$. Otherwise, pick an arbitrary $O_{1,i}\subseteq B_i$ with $\abs{O_{1,i}}=k$. Then, set $\N_{i-1}=\N_i\setminus O_{1,i}$. Finally, set $O_{1,1}=\N_1$. Clearly, $\abs{O_{1,i}}\leq k$ for $i=2,\dotsc,h$. We further show that $\abs{O_{1,1}}\leq k$. We prove it by contradiction and assume $\abs{O_{1,1}}> k$. If $\abs{B_2}\leq k$, then we have $\hat{S}_{1}\cup \{u\}\notin\I$ for every $u\in \N_1$ according to the above process. So $\hat{S}_1$ is a base of $\hat{S}_{1}\cup \N_1$, which implies that $\abs{\N_1}\leq k\cdot \abs{\hat{S}_1}$, contradicting the assumption that $\abs{\N_1}=\abs{O_{1,1}}> k$. Consequently, it must hold that $\abs{B_2}>k$ and hence $\abs{O_{1,2}}=k$ and $\abs{\N_2}>2k$. Using a similar argument, we can recursively get that $\abs{B_i}>k$ and hence $\abs{O_{1,i}}=k$ and $\abs{\N_i}>ik$ for any $i=3,\dotsc,h$, e.g., $\abs{\N_h}> hk$. However, as $\hat{S}_h$ is a base of $\hat{S}_h\cup O_1(\phi)$, we should have $\abs{\N_h}=\abs{O_1(\phi)}\leq hk$, which shows a contradiction. Therefore, we can conclude that $\abs{O_{1,i}}\leq k$ for all $i=1,2,\dotsc,h$.}

    According to the partition $O_{1,i}: i\in [h]$ constructed above, it is obvious that for every $u\in O_{1,i}$, $\hat{S}_{i-1}\cup \{u\}\in\I$. This implies that for every $u\in O_{1,i}$, $u$ cannot be considered before $\hat{u}_i$ is added by $\pi_{\A}$, i.e.,~$\psi_{\hat{u}_{i}}(\phi)\subseteq\psi_{u}(\phi)$. Meanwhile, due to the greedy rule of \ARG, it follows that $\Delta(\hat{u}_i\mid \psi_{\hat{u}_i}(\phi))\geq \Delta(u\mid \psi_{\hat{u}_{i}}(\phi))$ for each $u\in O_{1,i}$. Hence,
    \begin{equation}
        \sum_{u\in O_{1,i}} \Delta(u\mid \psi_{u}(\phi))\leq\sum_{u\in O_{1,i}} \Delta(u\mid \psi_{\hat{u}_{i}}(\phi))\leq\sum_{u\in O_{1,i}} \Delta(\hat{u}_i\mid \psi_{\hat{u}_i}(\phi)) \leq k\cdot \Delta(\hat{u}_i\mid \psi_{\hat{u}_i}(\phi))
    \end{equation}
    holds for any $i\in [h]$. Combining the above results completes the proof.
\end{proof}

\subsection{Proof of Lemma~\ref{lma:boundo2}}

\begin{proof}
    Again, since $f_{\avg}(\pi_{\A})= \E_{\pi_{\A}}\Big[\E_{\Phi} \Big[\sum_{u\in \N(\pi_{\A},\Phi)}\Delta(u\mid \psi_{u}(\Phi))\Big]\Big]$, we only need to prove that, for any $\phi\in Z^{\N}$,
    \begin{equation}
        \E_{\pi_{\A}}\Big[\sum\nolimits_{u\in O_2(\phi)} \Delta(u\mid \psi_{u}(\phi))\Big]\leq \frac{1-p}{p}\cdot \E_{\pi_{\A}}\Big[\sum\nolimits_{u\in \N(\pi_{\A},\phi)}\Delta(u\mid \psi_{u}(\phi))\Big].
    \end{equation}

    Given a realization $\phi\in Z^{\N}$, for each $u\in \N$, let $X_{u}$ be a random variable such that $X_{u}=1$ if $u\in O_2(\phi)$ and $X_{u}=0$ otherwise. So we have
    \begin{equation}
        \sum_{u\in O_2(\phi)} \Delta(u\mid \psi_{u}(\phi))=\sum_{u\in \N} \big(X_{u}\cdot \Delta(u\mid \psi_{u}(\phi))\big).
    \end{equation}
    Similarly, for each $u\in \N$, let $Y_{u}$ be a random variable such that $Y_{u}=1$ if $u\in \N(\pi_{\A},\phi)$ and $Y_{u}=0$ otherwise. Thus,
    \begin{equation}
        \sum_{u\in \N(\pi_{\A},\phi)}\Delta(u\mid \psi_{u}(\phi))=\sum_{u\in \N}\big(Y_{u}\cdot \Delta(u\mid \psi_{u}(\phi))\big).
    \end{equation}
    Therefore, it is sufficient to prove:
    \begin{equation}
        \forall u\in \N:~ \E_{\pi_{\A}}\big[X_{u}\cdot \Delta(u\mid \psi_{u}(\phi))\big]\leq \frac{1-p}{p}\cdot \E_{\pi_{\A}}\big[Y_{u}\cdot\Delta(u\mid \psi_{u}(\phi))\big]
    \end{equation}

    Observe that, for any given $u\in \N$, if $\Delta(u\mid \psi_{u}(\phi))\leq 0$ or $\dom(\psi_{u}(\phi))\cup\{u\}\notin \I$, then we have $u\notin \N(\pi_{\A},\phi)$ and $u\notin O_2(\phi)$ by definition, which indicates $X_{u}=Y_{u}=0$. Consider the event that $\Delta(u\mid \psi_{u}(\phi))> 0$ and $\dom(\psi_{u}(\phi))\cup\{u\}\in \I$, and denote such an event as $\mathcal{E}_{u}$. Since $\Pr[u\in \N(\pi_{\A},\phi)\mid \mathcal{E}_{u}]=p$, it is trivial to see that
    \begin{equation}
        \E_{\pi_{\A}}\big[Y_{u}\cdot\Delta(u\mid \psi_{u}(\phi))\big]=p\cdot \E_{\psi_{u}(\phi)}[\Delta(u\mid \psi_{u}(\phi))\mid \mathcal{E}_{u}]\cdot \Pr[ \mathcal{E}_{u}],
    \end{equation}
    where the expectation is taken over the randomness of $\psi_u(\phi)$ (i.e., $\psi_u(\phi)\sim \mathcal{E}_{u}$) due to the internal randomness of algorithm. On the other hand, if $u\in O(\phi)$, then we have $\Pr[u\in O_2(\phi)\mid \mathcal{E}_{u}]=1-p$ as $u$ is discarded with probability of $1-p$, while we also have $\Pr[u\in O_2(\phi)\mid \mathcal{E}_{u}]=0$ if $u\notin O(\phi)$. Thus, we know $\Pr[u\in O_2(\phi)\mid \mathcal{E}_{u}]\leq (1-p)$ and hence we can immediately get
    \begin{equation}
        \E_{\pi_{\A}}\big[X_{u}\cdot\Delta(u\mid \psi_{u})\big]\leq (1-p)\cdot \E_{\psi_{u}(\phi)}[\Delta(u\mid \psi_{u}(\phi))\mid \mathcal{E}_{u}]\cdot \Pr[ \mathcal{E}_{u}].
    \end{equation}
    The lemma then follows by combining all the above reasoning.
\end{proof}

\subsection{Proof of Theorem~\ref{thm:argapproxratio}}

\begin{proof}
    According to Lemmas \ref{lma:upperboundofmg}--\ref{lma:boundo2}, we have
    \begin{align*}
        f_{\avg}(\pi_{\A}@\pi_{\opt})-f_{\avg}(\pi_{\A})&\leq \E_{\pi_{\A},\Phi} \Big[\sum_{u\in \N(\pi_{\opt},\Phi)\setminus\N(\pi_{\A},\Phi)}\Delta(u\mid \psi_{u}(\Phi))\Big]  \\
        &\leq \E_{\pi_{\A},\Phi} \Big[\sum_{u\in O_1(\Phi)}\Delta(u\mid \psi_{u}(\Phi))+\sum_{u\in O_2(\Phi)}\Delta(u\mid \psi_{u}(\Phi))\Big] \\
        &\leq \big(k+\frac{1-p}{p}\big)\cdot f_{\avg}(\pi_{\A})
    \end{align*}
    where the second inequality is due to the definition of $O_3(\Phi)$, i.e., $\Delta(u\mid \psi_{u}(\Phi))\leq 0$ for every $u\in O_3(\Phi)$. Combining the above result with Lemma~\ref{lma:concat} gives
    \begin{equation}
	    f(\pi_{\opt})\leq \frac{kp+1}{p(1-p)}\cdot f_{\avg}(\pi_{\A}).
    \end{equation}
    Moreover, $\frac{kp+1}{p(1-p)}$ achieves its minimum value of $(1+\sqrt{k+1})^2$ at $p=(1+\sqrt{k+1})^{-1}$. Finally, the $\mathcal{O}(nr)$ time complexity is evident, as the algorithm incurs $\mathcal{O}(n)$ oracle queries for each selected element.
\end{proof}

%

\end{document}